\newcommand{\revju}[1]{#1}
\newcommand{\revjul}[1]{#1}
\DeclareMathOperator{\tr}{Tr}
\DeclareMathOperator{\support}{supp}
\DeclareMathOperator{\rank}{rank}
\DeclareMathOperator{\diag}{diag}
\newcommand{\bs}{\boldsymbol}
\newcommand{\toaslong}{\xrightarrow[T\to\infty]{\text{a.s.}}}
\newcommand{\toasshort}{\xrightarrow{\text{a.s.}}}
\newcommand{\toprobalong}{\xrightarrow[T\to\infty]{{\cal P}}}
\newcommand{\toprobashort}{\xrightarrow{{\mathcal P}}}
\newcommand{\tolawlong}{\xrightarrow[T\to\infty]{{\cal L}}}
\newcommand{\tolawshort}{\xrightarrow{{\mathcal L}}}
\newcommand{\tolong}{\xrightarrow[T\to\infty]{}}
\newcommand{\N}{\mathbb N}
\newcommand{\R}{\mathbb R}
\newcommand{\Z}{\mathbb Z}
\newcommand{\C}{\mathbb{C}}
\newcommand{\E}{\mathbb{E}}
\newcommand{\1}{\mathbbm 1}
\newcommand{\bl}{\{} 
\newcommand{\br}{\}} 
\def\bx{{\bf x}}\def\bx{{\bf x}}
\def\bA{{\bf A}}
\def\bY{{\bf Y}}
\def\bV{{\bf V}}
\def\bQ{{\bf Q}}
\def\ba{{\bf a}}
\def\bc{{\bf c}}
\def\bd{{\bf d}}
\def\bm{{\bf m}}
\def\bg{{\bf g}}
\def\bp{{\bf p}}
\def\bv{{\bf v}}
\def\bx{{\bf x}}
\def\by{{\bf y}}
\def\brho{{\boldsymbol \rho}}
\def\bnu{{\boldsymbol \nu}}
\def\bmu{{\boldsymbol \mu}}
\def\bDelta{{\boldsymbol \Delta}}
\def\bdelta{{\boldsymbol \delta}}
\def\bOmega{{\bf \Omega}}
\def\bbT{{\bf \mathbb{T}}}
\newtheorem{lemma}{Lemma}
\newtheorem{theorem}{Theorem}
\newtheorem{corollary}{Corollary}
\newtheorem{remark}{Remark}
\newtheorem{assumption}{Assumption}
\newtheorem{proposition}{Proposition}
\begin{document}

\title{Statistical Inference in Large Antenna Arrays \\ under Unknown Noise Pattern}

\author{Julia Vinogradova, Romain Couillet, and Walid Hachem
\thanks{Copyright (c) 2013 IEEE. The first and third authors are with CNRS LTCI; Telecom ParisTech, 
France (e-mail: julia.vinogradova@telecom-paristech.fr; walid.hachem@telecom-paristech.fr). The second author is with Supelec, France (e-mail: romain.couillet@supelec.fr). 
This work is supported in part by the French Ile-de-France region, DIM LSC fund, Digiteo project DESIR, and in part by the ANR-12-MONU-OOO3 DIONISOS.
}}

\date{\today} 
  
\maketitle

\begin{abstract}
	In this article, a general information-plus-noise transmission model is assumed, the receiver end of which is composed of a large number of sensors and is unaware of the noise \revju{correlation} pattern. For this model,\revju{under an isotropy assumption between signal and noise {left- and right-}eigenspaces,} a set of results is provided for the receiver to perform statistical eigen-inference on the information part. In particular, we introduce new methods for the detection, counting, and the power and subspace estimation of multiple sources composing the information part of the transmission. The theoretical performance of some of these techniques is also discussed. An exemplary application of these methods to array processing \revju{with unknown time correlated noise} is then studied in greater detail, leading to a novel MUSIC-like algorithm.
\end{abstract}
\vspace{.3mm}
\begin{keywords}
Random matrix theory, sensor arrays, correlated noise, source detection, power estimation, MUSIC algorithm. 
\end{keywords}

\section{Introduction}\label{intro} 
\def\herm{{\sf H}}
\def\CC{{\mathbb C}}
\def\EE{{\mathrm E}}
\def\asto{\overset{\rm a.s.}{\longrightarrow}}

\subsection{Motivation}

Consider the information-plus-noise transmission model with multivariate output $y_t\in\CC^N$ at time $t$ 
\begin{equation}
\label{eq:general_model}
y_t = H x_t + v_t
\end{equation}
where $x_t\in\CC^K$ is the vector of transmitted symbols at time $t$, $H\in\CC^{N\times K}$ is the linear communication medium, and $v_t\in\CC^N$ the noise experienced by the receiver at time $t$.

Array processing consists in a set of tools to perform statistical inference on the information part \revju{$Hx_t$} composing $y_t$. The first tool is the mere detection of this information (called then a signal source), that is the question whether $K>0$. Once source signals are detected, the next operation consists in the evaluation of their number, {\it i.e.} estimating $K$. When the existence of these sources is guaranteed, several of their parameters can then be retrieved. One of these parameters is the transmission power of the source or, alternatively, the distance from the source to the receiver. Denoting $H=[h_1,\ldots,h_K]$, it is also of interest to retrieve information from the individual $h_k$ vectors. In wireless communications, these represent channel \revju{vectors} which the receiver may want to identify in order to decode the entries of $x_t$. In array processing, they stand for steering vectors parameterized by the angle-of-arrival of the source signals. 

In order to perform these tasks, one assumes the observation of $T$ (non-necessarily independent) samples $y_1,\ldots,y_T$ of the process $y_t$. Denoting $Y_T=T^{-1/2}[y_1,\ldots,y_T]$, the first mentioned estimators are often based on the eigenvalues of $Y_TY_T^{\sf H}$. When it comes to vector identification, the interest is rather on the eigenvectors of $Y_TY_T^{\sf H}$. The standard eigen-inference approaches in the literature often rely on two strong assumptions: (i) $T$ is large compared to $N$ and (ii) the statistics of $v_t$ are partially or perfectly known due to independent (information-free) observations of the process $v_t$. This article revisits these methods by proposing alternative algorithms to perform eigen-inference for the model \eqref{eq:general_model} accounting for the aforementioned limitations (i) and (ii).

\subsection{Literature review}

Assuming $T\to\infty$, $N$ fixed, and $v_t$ white Gaussian with known variance, the energy detection procedure \cite{URK67} allows for the detection of signal sources by evaluating the total received power which is compared to a threshold that ensures \revju{an admissible} false alarm rate. If the signal structure is known, the parameters composing $H$ can be recovered from the eigenvalues and eigenvectors of $\EE[y_ty_t^\herm]$, which can be estimated through the sample covariance matrix $Y_TY_T^\herm$, $Y_T=T^{-1/2}[y_1,\ldots,y_T]\in\CC^{N\times T}$. To estimate the number of sources $K$, the Akaike information criterion (AIC) \cite{Akai'74} and the minimum description length (MDL) \cite{Riss'78,WaxKai'85} were historically proposed, which rely on functions of the eigenvalues of $Y_TY_T^\herm$. The MDL is $T$-consistent while the AIK tends to overestimate the number of sources as $T\to\infty$. In terms of power estimation, since $Y_TY_T^\herm\asto \EE[y_ty_t^\herm]$, a $T$-consistent estimate of the powers is easily obtained by mapping the eigenvalues of $Y_TY_T^\herm$ to those of $\EE[y_ty_t^\herm]$. When the vectors $h_k=h(\theta_k)$ are steering vectors and that one aims at retrieving $\theta_k$ for $k=1,\ldots,K$, the multiple signal classification (MUSIC) algorithm \cite{Schm'86} allows for a $T$-consistent estimation of the angles $\theta_1,\ldots,\theta_K$ by determining the local maxima of the quadratic forms $\gamma(\theta)=h(\theta)^\herm \Pi h(\theta)$ where $\Pi$ is a projector on the eigenspace of $\EE[y_ty_t^\herm]$ corresponding to its $K$ largest eigenvalues (assuming $\Vert h(\theta)\Vert$ constant with $\theta$).

Due to the increase of the antenna array sizes and the need for faster detection and estimation dynamics, modern antenna array technologies have to deal with the scenario where the condition $T\gg N$ is no longer met. Under this condition, since $Y_TY_T^\herm$ becomes a poor estimator for $\EE[y_ty_t^\herm]$, most of the above techniques collapse. New methods, based on the field of large dimensional random matrix theory, have therefore emerged, which assume that both $N$ and $T$ are large and that the ratio $N/T$ is non-trivial. The AIC and MDL algorithms are in particular improved in \cite{NadaEde'08} using better estimators for functionals of the eigenvalues of $\EE[y_ty_t^\herm]$. \revjul{Another non-parametric approach based on hypothesis testing which provides a refined asymptotic detection limit was proposed in \cite{KritNad'09}. A parametric-based algorithm with estimation of an unknown noise variance was developped in \cite{KritNad'08}.} In terms of power estimation, $N,T$-consistent techniques were proposed in \cite{COU10b}. The MUSIC algorithm was improved on the same grounds in \cite{mestre2008improved} into the so-called G-MUSIC estimator.

A second difficulty faced by antenna array technologies is that the interfering environment may be far from white Gaussian. The $v_1,\ldots,v_T$ may not be independent or the spatial correlation of $v_t$ may not be white. When the noise is not white, the energy detection procedure is not valid as no false alarm threshold can be set. When the noise is close-to-white Gaussian with unknown variance, the generalized likelihood-ratio test (GLRT) \cite{BessKraSch'06} copes with the indetermination of the variance. Similar schemes are analyzed in the large $N,T$ regime in \cite{card-08,penna-cl09, BianDebMai'11,nad-icc11}. If the noise is not white, it is difficult to derive any test for detection. The power and direction estimation techniques equally suffer from this indetermination, because too little is a priori known of the eigenstructure of $V_TV_T^\herm$ with $V_T=T^{-1/2}[v_1,\ldots,v_T]$. To circumvent this issue, one generally assumes the existence of a sequence of $T'$ pure-noise test samples which are used to ``whiten'' the observations. For $T'$ large compared to $N$, after whitening, the noise becomes white Gaussian with unit variance, leading back to traditional schemes. For $N,T'$ simultaneously large, the whitening procedure gives rise to a noise matrix of the $F$-matrix type \cite{NadaSil'10,nad-joh-ssp11}.

However, the requirement to possess observations purely composed of noise may be impractical in real systems. As such, in this article, we address the problems of detection, counting, and parameter estimation of multiple sources without resorting to a pre-whitening of the received data matrix $Y_T$. Since the problem may not be well-posed in its generality, we assume a set of reasonable conditions:
\begin{itemize}
	\item $N,T\to\infty$, $N/T\to \bc>0$, $K$ constant. This allows for $Y_TY_T^\herm$ to be seen as a small rank perturbation of $V_TV_T^\herm$.
	\item $V_T=W_TR_T^{1/2}$ ({\it i.e.} white in space, correlated in time), where $W_T\in\CC^{N\times T}$ is standard complex Gaussian and $R_T$ is a deterministic {\it unknown} Hermitian nonnegative, or $V_T=R_T^{1/2}W_T$ ({\it i.e.} white in time, correlated in space).
	\item As $N/T\to \bc$, the eigenvalues of $V_TV_T^\herm$ tend to cluster in a compact interval. This assumption is satisfied by most noise models used in practice, {\it e.g.} auto-regressive moving average (ARMA) noise processes (see Section~\ref{example-doa}).
	\item \revju{If $V_T$ is correlated in time, }the source signals in $x_t$ are random, independent, and identically distributed (i.i.d.)\revju{ while, if $V_T$ is correlated in space, $H_T$ is not correlated in space}.
\end{itemize}
Note in particular that the scenario where signal and noise are correlated both in space or both in time cannot be addressed. This is linked to the fact that, whenever the (left- or right-) eigenspaces of $R_T$ affect the signal parameters to be estimated, one needs information on these eigenspaces which is in general too demanding from a single observation of $Y_T$ (unless more structural information on $R_T$ is available which we do not assume). Instead, we require here that, at least asymptotically, the parameters to estimate only depend on the eigenvalues of $R_T$ which can be inferred from $Y_T$. Note also that $V_T$ cannot be correlated in both time and space, which would lead to a so far too difficult problem to address with the existing random matrix tools. There exist several practical scenarios in which those hypotheses are valid \textit{e.g.} in civil radars with interfering (non-radar) signals. As civil radar beams (typically in open spaces, \textit{e.g.} in urban environment) are usually very directive, signal multi-path is not expected (or expected to be weak). Interference arising from surrounding electromagnetic fields may however typically contain multi-path. In a dense scattering environment, the induced noise would therefore be loosely directive in space but correlated in time, accordingly with our model. Line-of-sight communications subject to spatially white multi-path interference can be cited as well. In fact, localization in wireless communications is only effective if the users are localized in line-of-sight of the exploring base station, therefore implying weak multi-path reflections of the signals of interest. Being subject to multi-user interference from their own or adjacent cells, the signals received at the base station therefore contain multi-path interference, again in line with our assumption.

Under these assumptions, we show that a maximum of $K$ isolated eigenvalues of $Y_TY_T^\herm$ can be found for all large $N,T$ beyond the right edge of the limiting eigenvalue distribution support of $V_TV_T^\herm$. This phenomenon is at the origin of the detection and estimation procedures developed in this paper. Precisely, we show that the isolated eigenvalues of $Y_TY_T^\herm$ can be uniquely mapped to individual signal sources. The presence of these eigenvalues will be used to detect signal sources as well as to estimate their number $K$ while their values will be exploited to estimate the source powers. The associated eigenvectors will then be used to retrieve information on the vectors $h_k$.  

The remainder of the article is structured as follows. In Section~\ref{sec:assumptions}, we introduce the system model and recall important results from the random matrix literature. In Section~\ref{1st-ord}, we introduce the source detector and parameter estimators for the generic model \eqref{eq:general_model} and for a specific array processing scenario with an ARMA noise process. In Section~\ref{2nd-ord}, we study the second order statistics of some of these estimators. Simulations are then provided in Section~\ref{simulation}. The article is concluded by Section~\ref{sec:conclusion}. Some technical lemmas are proved in the appendix. 

{\it Notations:} The superscript $(\cdot)^{\sf H}$ is the Hermitian transpose of a matrix and $\left\|\cdot\right\|$ denotes the spectral norm. The symbols 
$\overset{\text{a.s.}}{\longrightarrow}$,
$\overset{{\mathcal P}}{\longrightarrow}$, and
$\overset{{\mathcal L}}{\longrightarrow}$
stand respectively for the almost sure convergence, the convergence in probability, and the convergence in law, while ``w.p. 1'' means ``with probability one''. We denote by ${\cal N}(a,\sigma^2)$ the real Gaussian distribution with mean $a$ and variance $\sigma^2$ and by ${\cal CN}(a, \sigma^2)$ the complex circular Gaussian distribution with mean $a$ and variance $\sigma^2$. We denote by $\bdelta_{k\ell}$ the Kronecker delta function ($=1$ if $k=\ell$ and $0$ otherwise) and by $\delta_x$ the Dirac measure at $x$. \revju{Finally, boldface characters denote limiting values.}

\section{Assumptions and known results}
\label{sec:assumptions}

Consider a sequence of integers $N = N(T)$, $T = 1, 2, \ldots$ and matrices $Y_T = A_T + W_T R_T^{1/2}\in\CC^{N\times T}$ where $A_T$ stands for the signal matrix and $V_T = W_T R_T^{1/2}$ for the noise matrix. \revju{Remark that, up to studying $Y_T^\herm$ instead of $Y_T$, the noise correlation can be either in time or in space.} We assume the following asymptotic regime: 
\begin{assumption}
\label{ass:c}
As $T\to\infty$, $c_T \triangleq N/T\to {\bc} > 0$.
\end{assumption}

\subsection{Hypotheses on the noise matrix}

We first characterize the assumptions on $V_T\triangleq W_T R_T^{1/2}$.

\begin{assumption}
\label{ass:gauss} 
$W_T = T^{-1/2} [ w_{n,t} ]_{n,t=1}^{N,T}$, with $(w_{n,t})_{n,t\geq 1}$ an infinite array of independent ${\cal CN}(0,1)$ variables. 
\end{assumption}

\begin{assumption}
\label{ass:R} 
$R_T\in\CC^{T\times T}$ is Hermitian nonnegative with eigenvalues $\sigma_{1,T}^2,\ldots,\sigma_{T,T}^2$ satisfying: 

\begin{enumerate}
\item \label{R:msl}
With $\nu_T = T^{-1} \sum_{t=1}^T \bdelta_{\sigma_{t,T}^2}$, $\nu_T\tolawshort \bnu$, a probability measure with support $\support(\bnu)=[a_\bnu, b_\bnu] \subset \R_+ \triangleq [0,\infty)$. Moreover, $\bnu(\bl 0 \br) = 0$. 

\item\label{R:noeig}
The distances from the $\sigma_{t,T}^2$ to $\support(\bnu)$ satisfy: 
\begin{equation*}
\max_{t\in\{1,\ldots,T\}} \bd\left(\sigma_{t,T}^2, \support(\bnu)\right)
\xrightarrow[T\to\infty]{} 0 . 
\end{equation*}
\end{enumerate}
\end{assumption}

Let $\lambda_{1,T}\geq \ldots \geq \lambda_{N,T}$ be the eigenvalues of $V_T V_T^\herm = W_T R_T W_T^{\sf H}$ and let $\tau_T = N^{-1} \sum_{i=1}^N \bdelta_{\lambda_{i,T}}$ be its spectral measure. The asymptotic behavior of $\tau_T$ is of prime importance in this paper. We recall some well known results describing this behavior; see \cite{MarcPas'67, SilvBai'95} for Items \ref{st-lsm})--\ref{cvg-unif}), \cite{SilvCho'95} for Item \ref{form-mu}), and \cite{BaiSil'98} for Item \ref{noeig-sil}): 
\begin{theorem} 
\label{th:lsm}
Under Assumptions \ref{ass:c}--\ref{ass:R}, the following hold true: 
\begin{enumerate}

\item
\label{st-lsm} 
For any $z \in \C_+ \triangleq \{ z \in \C,\, \Im z > 0\}$, the equation
\begin{equation} 
\label{m=f(m)} 
\bm = \left( -z + \int \frac{t}{1 + {\bc} \bm t} \bnu(dt) \right)^{-1} 
\end{equation}
has a unique solution $\bm \in \C_+$. The function $\bm(z) = \bm$ so defined on $\C_+$ is the Stieltjes transform (ST)\footnote{We recall that the ST $m_{\bmu}$ of a probability measure $\bmu$ with support in $\R$ is defined by $m_\bmu(z) = \int (t-z)^{-1} \bmu(dt)$. It is analytic on $\C - \support(\bmu)$ and completely characterizes the measure $\bmu$.} of a probability measure $\bmu$.

\item\label{cvg-Q} 
For every bounded and continuous real function $f$,
\begin{equation*}
\int f(t) \tau_T(dt) \ \toaslong \ \int f(t) \bmu(dt) 
\end{equation*}
and therefore $\bmu$, defined by \eqref{m=f(m)}, is the limiting spectral measure of $V_TV_T^\herm$.

\item\label{st-coresolv}
The function
\begin{equation*}
\tilde\bm(z) = \int \frac{-1}{z(1+\bc\bm(z) t)} \bnu(dt) 
\end{equation*}
is defined on $\C_+$ and is the ST of the probability measure $\tilde\bmu = \bc \bmu + (1-\bc) \delta_0$, limiting spectral measure of $V_T^\herm V_T$. As such, $\tilde\bm(z) = \bc\bm(z) - (1-\bc) /z$. 

\item\label{form-mu}
 $\bmu$ is of the form $\bmu(dt) = \max(0, 1 - {\bc}^{-1}) \delta_0 + f(t) dt$ where $f(t)$ is a continuous density on $(0,\infty)$. The support of $f(t)dt$ is a compact interval $[a,b] \subset \R_+$, and 
$f(t) > 0$ on $(a,b)$. 

\item\label{noeig-sil} 
For any interval $[x_1, x_2] \subset (0,a) \cup (b,\infty)$, 
\begin{equation*}
\sharp \{ i \, : \, \lambda_{i,T} \in [x_1, x_2] \} 
= 0 \ \text{w.p.} \ 1 \ \text{for \revju{all} large} \ T. 
\end{equation*}
\item\label{cvg-unif} 
The function $\underline m_T(x) = N^{-1} \sum_{n=1}^N (\lambda_{n,T} - x)^{-1}$ converges w.p. 1 to $\bm(x)$, and uniformly so on the compact subsets 
of $(b,\infty)$. 

\end{enumerate} 
\end{theorem}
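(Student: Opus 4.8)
All six items are classical facts about sample covariance matrices, and I would establish them through the resolvent $Q_T(z) = (V_TV_T^\herm - zI_N)^{-1}$, $z\in\C_+$, relying on two elementary ingredients: the trace lemma (for a column $w$ of $W_T$ and a deterministic matrix $M$ independent of it, $w^\herm M w - T^{-1}\tr M\to 0$ with all moments controlled in terms of $\|M\|$), and the rank-one perturbation bound $|\tr((Q_T(z)-Q_T^{(t)}(z))M)|\le \|M\|/\Im z$, where $Q_T^{(t)}$ is the resolvent with the $t$-th column of $V_T$ removed. Throughout one may take $R_T$ diagonal, since $W_T\overset{d}{=}W_TU$ for unitary $U$, so that $V_TV_T^\herm=\sum_{t=1}^T\sigma_{t,T}^2 w_tw_t^\herm$.

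\textbf{Items \ref{st-lsm}--\ref{st-coresolv}.} Applying the resolvent identity column by column together with the two ingredients above shows that $\underline m_T(z)=N^{-1}\tr Q_T(z)$ is, up to an error vanishing in probability, a solution of \eqref{m=f(m)} with $(\bc,\nu)$ replaced by $(c_T,\nu_T)$; letting $T\to\infty$ and invoking Assumption~\ref{ass:R} (weak convergence $\nu_T\to\nu$ with no mass escaping) yields \eqref{m=f(m)}. Uniqueness in $\C_+$ follows by subtracting the equations satisfied by two putative solutions $\bm_1,\bm_2$, factoring out $\bm_1-\bm_2$, and checking by a Cauchy--Schwarz argument that the remaining bracket cannot vanish for $z\in\C_+$; that $\bm(\cdot)$ is the ST of a probability measure then follows from its being analytic and $\C_+$-valued on $\C_+$ with $z\bm(z)\to -1$ as $z\to\infty$. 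For Item~\ref{cvg-Q}, decomposing $\underline m_T(z)-\E\underline m_T(z)$ into martingale increments (one per column of $W_T$), each of order $O(1/N)$ by the rank-one bound, gives $\underline m_T(z)-\E\underline m_T(z)\to 0$ a.s.\ via Azuma--Hoeffding and Borel--Cantelli; with $\E\underline m_T(z)\to\bm(z)$ this yields a.s.\ convergence, first on a countable dense subset of $\C_+$ and then on all of $\C_+$ by a normal-families argument, whence the a.s.\ weak convergence of $\tau_T$ to $\mu$. Item~\ref{st-coresolv} is bookkeeping: matching the nonzero spectra of $V_T^\herm V_T$ and $V_TV_T^\herm$ gives the exact identity $\tilde\tau_T=c_T\tau_T+(1-c_T)\delta_0$, hence $\tilde\mu=\bc\mu+(1-\bc)\delta_0$ and $\tilde\bm(z)=\bc\bm(z)-(1-\bc)/z$; that this equals $\int\frac{-1}{z(1+\bc\bm(z)t)}\nu(dt)$ follows by multiplying \eqref{m=f(m)} through by $\bm$ and isolating $\int(1+\bc\bm t)^{-1}\nu(dt)$.

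\textbf{Items \ref{form-mu} and \ref{cvg-unif}.} Item~\ref{form-mu} is the Silverstein--Choi analysis of the functional inverse of $\bm$, which from \eqref{m=f(m)} is $z(\bm)=-1/\bm+\int t(1+\bc\bm t)^{-1}\nu(dt)$: on the relevant real intervals this is real and strictly increasing, hence locally invertible, so the edges $a,b$ of the (compact, since $\support(\nu)=[a_\nu,b_\nu]$) support of the continuous part are the images of the zeros of $z'(\bm)$; the density is recovered as $f(x)=\pi^{-1}\lim_{y\downarrow 0}\Im\bm(x+iy)$, continuous and $>0$ on $(a,b)$ by the implicit function theorem, and the atom at $0$ has mass $\max(0,1-\bc^{-1})$ because for $N>T$ the matrix $V_TV_T^\herm$ has rank at most $T$, hence at least $N-T$ null eigenvalues, while $\nu(\{0\})=0$ precludes any further atom. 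Item~\ref{cvg-unif} then follows from Items~\ref{cvg-Q} and \ref{noeig-sil}: the latter forces, w.p.\ 1 for large $T$, all eigenvalues below any fixed level in $(b,\infty)$, so on a compact $[c_1,c_2]\subset(b,\infty)$ the integrand $t\mapsto(t-x)^{-1}$ is uniformly bounded and Lipschitz in $x$, and weak convergence of $\tau_T$ upgrades to uniform convergence of $\underline m_T$ there, with limit the analytic extension of $\bm$.

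\textbf{Item \ref{noeig-sil}.} This is the Bai--Silverstein ``no eigenvalues outside the support'' theorem and is the delicate point. The plan is to show that for $[x_1,x_2]\subset(0,a)\cup(b,\infty)$ and a suitable $\varepsilon_T\downarrow 0$, $\sup_{x\in[x_1,x_2]}\|Q_T(x+i\varepsilon_T)\|$ stays bounded w.p.\ 1 for large $T$; since an eigenvalue in $[x_1,x_2]$ would force $\|Q_T(x+i\varepsilon)\|\ge 1/\varepsilon$, this gives the claim. Boundedness of the resolvent norm is reduced, via the rank-one bound and a covering of $[x_1,x_2]$, to showing that $\underline m_T(z)-\bm_T(z)$ (with $\bm_T$ the solution of the finite-size analogue of \eqref{m=f(m)}) is uniformly $o(1)$ on a shrinking complex neighbourhood of $[x_1,x_2]$. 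The crux is a fluctuation estimate with rate strictly better than the trivial $1/\Im z$: one bounds high centred moments $\E|\underline m_T(z)-\E\underline m_T(z)|^{2p}$ and the bias $|\E\underline m_T(z)-\bm_T(z)|$ using the Gaussianity of $W_T$ (integration-by-parts / Poincaré-type inequalities) together with a careful truncation near the real axis, as in \cite{BaiSil'98}. I expect this step to be the main obstacle; given the resolvent machinery the remaining items are comparatively routine.
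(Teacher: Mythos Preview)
The paper does not prove Theorem~\ref{th:lsm}; it is stated as a recollection of known results, with pointers to \cite{MarcPas'67,SilvBai'95} for Items~\ref{st-lsm}--\ref{cvg-unif}, to \cite{SilvCho'95} for Item~\ref{form-mu}, and to \cite{BaiSil'98} for Item~\ref{noeig-sil}. Your sketch is therefore not competing with a proof in the paper but rather outlining the contents of those references, and on Items~\ref{st-lsm}--\ref{form-mu} and \ref{cvg-unif} it does so accurately (the resolvent/trace-lemma derivation of \eqref{m=f(m)}, the martingale argument for a.s.\ convergence, the Silverstein--Choi inverse-function description of the support, and the deduction of uniform convergence on $(b,\infty)$ from Items~\ref{cvg-Q} and \ref{noeig-sil} are all the standard routes).

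One step in your treatment of Item~\ref{noeig-sil} is misstated. You write that the goal is to bound $\sup_{x\in[x_1,x_2]}\|Q_T(x+i\varepsilon_T)\|$ and that this ``is reduced, via the rank-one bound and a covering,'' to controlling $\underline m_T-\bm_T$. But the rank-one perturbation bound and the trace lemma control only \emph{traces} of resolvents; they give no handle on $\|Q_T\|$, which is dictated by the single eigenvalue closest to $z$. The Bai--Silverstein mechanism in \cite{BaiSil'98} is different: one proves that $\sup_{x\in[x_1,x_2]}|\underline m_T(x+i\varepsilon_T)-\bm_T(x+i\varepsilon_T)|$ decays faster than any prescribed negative power of $N$ (this is where the sharp moment and bias estimates you mention enter), and then observes that a single eigenvalue in $[x_1,x_2]$ would force $\Im\,\underline m_T(x+i\varepsilon_T)\ge (N\varepsilon_T)^{-1}$ at that $x$, whereas $\Im\,\bm_T(x+i\varepsilon_T)=O(\varepsilon_T)$ because $\bm_T$ extends analytically across the gap. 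Choosing $\varepsilon_T$ appropriately (polynomially small in $N$) turns this into a contradiction. In short, the reduction is to the Stieltjes transform itself, and the exclusion of eigenvalues comes from its imaginary part near the real axis, not from an operator-norm bound on $Q_T$.
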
 
A procedure for determining the interval $[a,b]$ from the knowledge of $\bc$ and $\bnu$ is provided in \cite{SilvCho'95}. \revju{In order to quantify the position of the rightmost eigenvalues of $V_TV_T^\herm$ (\emph{i.e.} noise only hypothesis),} we are interested here in the determination of the upper bound $b$, to which $\lambda_{1,T}$ converges. This can be done with the help of the following proposition. Observe that $\bm(z)$ can be extended to $\C - (\{0\} \cup [a,b])$ and that $\bm(x) = \int (t-x)^{-1} \bmu(dt)$, its restriction to $\R$, is negative and increases to zero on $(b, \infty)$. Recall that $\support(\bnu) = [a_\bnu, b_\bnu] \subset \R_+$.
\begin{proposition}[see \cite{SilvCho'95}]
\label{prop:edge}
The point $b$ defined in Theorem~\ref{th:lsm}-\ref{form-mu}) coincides with the infimum of the function
\begin{equation*}
\bx(m) = - \frac{1}{m} + \int \frac{t}{1+\bc m t} \, \bnu(dt)
\end{equation*}
on the interval $(-(\bc b_\bnu)^{-1}, 0)$.
On this interval, there is a unique $m_b$ ($m_b<0$) such that $\bx(m) \to b$
as $m \downarrow m_b$. The restriction of $\bx(m)$ to $(m_b, 0)$ coincides with the inverse with respect to composition
of the restriction of $\bm(x)$ to $(b,\infty)$.
\end{proposition}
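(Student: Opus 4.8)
The statement is, in essence, a restatement in the present notation of the support-edge description of Silverstein and Choi \cite{SilvCho'95}, so the plan is threefold: first put the defining relation \eqref{m=f(m)} into inverse form $z=\bx(\bm(z))$; then show that $\bx$ is strictly convex on $I:=(-(\bc b_\nu)^{-1},0)$, so that its infimum over $I$ is governed by a single critical point $m_b$; and finally invoke the Silverstein--Choi characterization of $\R\setminus\support(\mu)$ to identify that infimum with the right edge $b$ and to obtain the inversion statement.

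For the first two steps I would simply rearrange \eqref{m=f(m)}: for $z\in\C_+$,
\begin{equation*}
z \;=\; -\frac{1}{\bm(z)}+\int\frac{t}{1+\bc\,\bm(z)\,t}\,\nu(dt) \;=\; \bx\big(\bm(z)\big),
\end{equation*}
and, because $\support(\nu)=[a_\nu,b_\nu]\subset\R_+$ with $\nu(\{0\})=0$, the only real singularities of $m\mapsto\bx(m)$ are $m=0$ and $m=-(\bc t)^{-1}$ for $t\in(a_\nu,b_\nu]$, all of which lie in $(-\infty,-(\bc b_\nu)^{-1}]$; hence $\bx$ is real-analytic on $I$. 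Since $\bm$ extends, by Theorem~\ref{th:lsm}-\ref{cvg-unif}) and the remark preceding the statement, to a negative, increasing, real-analytic function on $(b,\infty)$ with $\bm(x)\to0^-$ as $x\to\infty$, the identity $\bx(\bm(x))=x$ persists for all $x>b$ (this uses $\bm\big((b,\infty)\big)\subset I$; see the last paragraph). Differentiating under the integral on $I$ gives
\begin{equation*}
\bx''(m) \;=\; -\frac{2}{m^{3}} \;+\; 2\bc^{2}\!\int\frac{t^{3}}{(1+\bc m t)^{3}}\,\nu(dt) \;>\;0 ,
\end{equation*}
both terms being positive there since $m<0$ and $1+\bc m t>0$ for all $t\in\support(\nu)$ when $m\in I$. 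Thus $\bx$ is strictly convex on $I$, and since $\bx(m)\to+\infty$ as $m\uparrow0$ it has a unique minimizer $m_b\in I$: $\bx$ is strictly decreasing on $(-(\bc b_\nu)^{-1},m_b)$, strictly increasing on $(m_b,0)$, so that $\inf_I\bx=\bx(m_b)$ and $\bx(m)\to\bx(m_b)$ as $m\downarrow m_b$. (If $\bx$ happened to be monotone increasing on all of $I$, one would simply set $m_b=-(\bc b_\nu)^{-1}$ and the argument below would be unchanged.)

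For the last step I would invoke the Silverstein--Choi result \cite{SilvCho'95}: a real $x\neq0$ lies outside $\support(\mu)$ if and only if $x=\bx(m)$ for some real $m$ at which $\bx$ is defined with $\bx'(m)>0$, in which case $m=\bm(x)$. Applied on $(m_b,0)$ --- where $\bx'>0$ --- this shows $\bx$ maps $(m_b,0)$ increasingly and bijectively onto $\big(\bx(m_b),+\infty\big)\subset\R\setminus\support(\mu)$. By Theorem~\ref{th:lsm}-\ref{form-mu}) the set $\support(\mu)\cap(0,\infty)$ is contained in a single interval with supremum $b$, so $(b,\infty)$ is the only component of $\R\setminus\support(\mu)$ that is unbounded above; since $\bm$ is a continuous increasing bijection of $(b,\infty)$ onto the $m$-interval $\big(\lim_{x\downarrow b}\bm(x),\,0\big)$ with $\bx\circ\bm=\mathrm{id}$ there, comparing this interval with $(m_b,0)$ (using strict convexity of $\bx$, which makes $b$ a value attained by $\bx|_I$ only at $m_b$) forces $\lim_{x\downarrow b}\bm(x)=m_b$ and $\bx(m_b)=b$. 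Hence $b=\inf_I\bx=\bx(m_b)$ with $m_b$ unique, $\bx(m)\to b$ as $m\downarrow m_b$, and $\bx|_{(m_b,0)}$ is the compositional inverse of $\bm|_{(b,\infty)}$.

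I expect the only genuinely nontrivial point to be the Silverstein--Choi characterization of $\R\setminus\support(\mu)$ used in the last step, which I would simply quote from \cite{SilvCho'95}. A secondary point deserving a line is that $\bm$ restricted to $(b,\infty)$ takes its values inside $I$ (equivalently $\lim_{x\downarrow b}\bm(x)>-(\bc b_\nu)^{-1}$), so that $\bx\circ\bm$ is well defined on all of $(b,\infty)$; this follows from the fact that $\tilde\bm(z)=\int\frac{-1}{z(1+\bc\bm(z)t)}\,\nu(dt)$, being the Stieltjes transform of a probability measure supported in $\{0\}\cup[a,b]$ (Theorem~\ref{th:lsm}-\ref{st-coresolv})), is finite on $(b,\infty)$. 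Everything else --- the algebraic rearrangement and the convexity computation --- is routine.
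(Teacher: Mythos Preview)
The paper does not supply its own proof of this proposition: it is stated with the attribution ``see \cite{SilvCho'95}'' and used as a black box, with Corollary~\ref{cor:edge} (proved in Appendix~\ref{prf:coredge}) then sharpening it under Assumption~\ref{nu:edge}. So there is nothing to compare against in the strict sense.

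Your sketch is sound and is exactly how one would extract the statement from \cite{SilvCho'95}. The algebraic inversion $z=\bx(\bm(z))$ and the appeal to the Silverstein--Choi characterisation of $\R\setminus\support(\mu)$ are the right two ingredients; the latter is, as you say, the only genuinely nontrivial step. Your convexity computation $\bx''(m)=-2/m^{3}+2\bc^{2}\int t^{3}(1+\bc m t)^{-3}\,\nu(dt)>0$ on $I$ is correct and is in fact slightly more than the paper asserts: the proof of Corollary~\ref{cor:edge} uses only that $\bx'$ is ``continuous and increasing'' on $I$ without justifying it, so your line would plug that gap as well.

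One small point to tighten: your parenthetical treatment of the boundary case (where $\bx$ might be increasing on all of $I$, forcing $m_b=-(\bc b_\nu)^{-1}$) is a little casual. In that situation $m_b$ is not a minimizer in $I$ but the left endpoint, and one must argue that $\lim_{m\downarrow m_b}\bx(m)$ exists and equals $b$; this follows from the monotonicity of $\bx$ on $I$ together with the Silverstein--Choi description, but it deserves an explicit sentence rather than ``the argument below would be unchanged''. Likewise, your justification that $\bm\big((b,\infty)\big)\subset I$ via finiteness of $\tilde\bm$ is the right idea but is only fully convincing when $\nu$ has some mass near $b_\nu$ (so that $1+\bc\bm(x)t\to 0$ for $t$ near $b_\nu$ would make the integral blow up); in general one gets this inclusion directly from the Silverstein--Choi analysis itself, so you might prefer to cite it there rather than give an independent argument.
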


In order to easily characterize the value of $b$, it will be convenient 
to make an assumption on the measure $\bnu$ which will not be restrictive 
in practice: 

\begin{assumption}
\label{nu:edge}
If $\bnu(\{b_\bnu\})=0$, then there exists $\varepsilon>0$ and a function $f_\bnu(t) \geq C(b_\bnu - t)$ on $[b_\bnu - \varepsilon, b_\bnu ]$ with $C > 0$ such that for any Borel set $A$ of $[a_\bnu, b_\bnu]$, 
\begin{equation*}
\bnu(A \cap [b_\bnu-\varepsilon, b_\bnu]) =  \int_{A \cap [b_\bnu-\varepsilon, b_\bnu]} f_\bnu(t) \, dt . 
\end{equation*} 
\end{assumption} 

\revju{The assumption states that $\bnu$ either has a mass or a sufficiently sharp density edge at $b_\bnu$. This assumption will be important in Section~\ref{2nd-ord} to determine the behavior of the proposed estimators close to the signal detectability limit. It presently }leads to the following corollary to Proposition \ref{prop:edge}, proven in Appendix \ref{prf:coredge}: 
\begin{corollary} 
\label{cor:edge} 
Under Assumption \ref{nu:edge},
\begin{equation*}
b = - \frac{1}{m_b} + \int \frac{t}{1+ \bc m_b t} \bnu(dt) 
\end{equation*}
where $m_b$ is the unique solution in $( -(\bc b_\bnu)^{-1}, 0)$ to the 
equation in $m$ 
\begin{equation} 
\label{eq:m(b+)} 
\int \left(  \frac{m t}{1 + \bc m t} \right)^2 \bnu(dt) = 
\frac{1}{\bc} . 
\end{equation} 
\end{corollary}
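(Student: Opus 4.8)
The plan is to characterise the infimum in Proposition~\ref{prop:edge} as an interior critical point of $\bx$ on $I \triangleq (-(\bc b_\nu)^{-1},0)$ and to compute that critical-point equation explicitly. Since $0 < 1+\bc m t < 1$ for every $t \in (0,b_\nu]$ and $m \in I$ (and $b_\nu>0$, as otherwise $\nu=\delta_0$ would violate $\nu(\{0\})=0$), the integrand defining $\bx$ and its $m$-derivatives are bounded uniformly over $\support(\nu)$ and over $m$ in compact subsets of $I$, so $\bx$ is smooth on $I$ and differentiation under the integral sign gives
\begin{equation*}
\bx'(m) = \frac{1}{m^2} - \bc \int \frac{t^2}{(1+\bc m t)^2}\,\nu(dt) = \frac{1}{m^2}\Bigl( 1 - \bc\, g(m) \Bigr), \qquad g(m) \triangleq \int \Bigl( \frac{m t}{1+\bc m t} \Bigr)^{2} \nu(dt).
\end{equation*}
Hence $\bx'(m)=0$ on $I$ is precisely equation~\eqref{eq:m(b+)}. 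It therefore suffices to prove that \eqref{eq:m(b+)} has a unique solution $m_b$ in $I$ and that $\bx$ attains its infimum over $I$ at $m_b$; Proposition~\ref{prop:edge} then yields $b = \inf_I \bx = \bx(m_b)$, which is the asserted formula.

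Next I would study $g$. For fixed $t>0$, $\tfrac{d}{dm}\bigl[ mt/(1+\bc m t)\bigr] = t/(1+\bc m t)^2 > 0$ and $mt/(1+\bc m t)<0$ on $I$, so $m \mapsto \bigl(mt/(1+\bc m t)\bigr)^2$ is strictly decreasing on $I$; since $\nu$ charges $(0,b_\nu]$, $g$ is continuous and strictly decreasing on $I$, with $g(m)\to 0$ as $m\uparrow 0$ by dominated convergence. Consequently \eqref{eq:m(b+)} has at most one root in $I$, and exactly one provided $g(m)\to+\infty$ as $m\downarrow -(\bc b_\nu)^{-1}$ (the limit exists in $(0,+\infty]$ by monotonicity).

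The divergence of $g$ at the left endpoint is the crux of the argument, and it is exactly what Assumption~\ref{nu:edge} secures. If $\nu(\{b_\nu\})>0$, it is immediate, since $g(m) \geq \nu(\{b_\nu\})\bigl(m b_\nu/(1+\bc m b_\nu)\bigr)^2$ while $m b_\nu \to -1/\bc$ and $1+\bc m b_\nu \downarrow 0^+$. If $\nu(\{b_\nu\})=0$, I would use the density lower bound $f_\nu(t)\geq C(b_\nu-t)$ on $[b_\nu-\varepsilon,b_\nu]$ (shrinking $\varepsilon$ so that $b_\nu-\varepsilon>0$): writing $1+\bc m t = \eta' + \bc|m|(b_\nu-t)$ with $\eta' \triangleq 1+\bc m b_\nu \downarrow 0^+$ and $\bc|m|$ bounded between positive constants near the endpoint, and bounding $(mt)^2$ from below by a constant $c_1>0$ on $[b_\nu-\varepsilon,b_\nu]$,
\begin{equation*}
g(m) \;\geq\; c_1 C \int_{b_\nu-\varepsilon}^{b_\nu} \frac{b_\nu-t}{(1+\bc m t)^2}\,dt \;=\; c_1 C \int_0^{\varepsilon} \frac{s}{(\eta'+\bc|m|s)^2}\,ds ,
\end{equation*}
and an elementary evaluation shows the last integral equals $(\bc|m|)^{-2}\bigl(\log\tfrac{\eta'+\bc|m|\varepsilon}{\eta'} + \tfrac{\eta'}{\eta'+\bc|m|\varepsilon} - 1\bigr)$, which tends to $+\infty$ as $\eta'\downarrow 0$. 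Thus $g(m)\to+\infty$ and \eqref{eq:m(b+)} has a unique root $m_b \in I$.

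Finally, since $g$ is strictly decreasing, $\bx'(m) = m^{-2}\bigl(1-\bc g(m)\bigr)$ is $<0$ on $(-(\bc b_\nu)^{-1},m_b)$ and $>0$ on $(m_b,0)$, so $\bx$ is strictly decreasing then strictly increasing on $I$; hence $m_b$ is the unique minimiser of $\bx$ over $I$, and by Proposition~\ref{prop:edge},
\begin{equation*}
b=\inf_I\bx=\bx(m_b)=-\frac{1}{m_b}+\int \frac{t}{1+\bc m_b t}\,\nu(dt),
\end{equation*}
as claimed. The only genuine difficulty is the endpoint divergence of $g$; the remaining steps are elementary one-variable calculus together with dominated and monotone convergence.
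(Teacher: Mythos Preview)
Your proposal is correct and follows essentially the same approach as the paper's proof: compute $\bx'(m)$, show it has a unique zero on $I=(-(\bc b_\nu)^{-1},0)$ by checking its sign at the two endpoints (the left endpoint requiring Assumption~\ref{nu:edge}), and conclude that this zero is the unique minimiser of $\bx$, hence equals $m_b$ by Proposition~\ref{prop:edge}. The only cosmetic difference is that the paper argues directly that $\bx'$ is increasing with $\bx'(m)\to-\infty$ at the left endpoint (via monotone convergence of $\int t^2(1+\bc mt)^{-2}\nu(dt)$ to $\int t^2(1-t/b_\nu)^{-2}\nu(dt)=\infty$), whereas you factor $\bx'(m)=m^{-2}(1-\bc\,g(m))$ and study the monotone function $g$ with an explicit integral estimate; the content is the same.
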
 

\subsection{Hypotheses on the signal matrix}

We now turn to the hypotheses on the signal matrix $A_T$: 

\begin{assumption}
\label{ass:A}
Let $K\geq 0$ be a fixed integer. The matrix $A_T\in\CC^{N\times T}$ is random, independent of $W_T$, with rank $\rank(A_T) = K$ w.p. 1 for all large $T$. Besides, $\sup_T \| A_T \| < \infty$ w.p. 1. 
\end{assumption}
\par In the remainder of the paper, when $K \leq \min(N,T)$, the notation 
$A_T = U_T B_T^{\sf H}$ refers to any factorization of $A_T$ where $U_T\in\CC^{N\times K}$ satisfies $U_T^{\sf H} U_T = I_K$. By Assumption~\ref{ass:A}, the rank of $B_T\in\CC^{T\times K}$ is equal to $K$, w.p. 1. \revju{We are now ready to make the fundamental assumption of the article:}
\begin{assumption} 
\label{ass:sp} 
There exists a factorization $A_T = U_T B_T^{\sf H}$ such that, for any $z \in \C - \support(\bnu)$, 
\begin{equation}
	\label{eq:ass6}
B_T^{\sf H} \left( R_T - z I_T \right)^{-1} B_T 
\toaslong 
m_\bnu(z) P 
\end{equation}
for some $P = \diag(p_1 I_{j_1},\ldots,p_t I_{j_t})$, $p_1 > \ldots > p_t > 0$, $j_1 + \ldots + j_t = K$
and where it is recalled that $m_\bnu(z)$ is the ST of the 
probability measure $\bnu$. 
\end{assumption}

\begin{remark}
	\label{rem:applications}
	Assumption~\ref{ass:sp} is in general very strong, \revju{as it 
requires in some sense that the right singular vectors of $A_T$ corresponding
to the non zero singular values show an isotropic behavior in the eigenbasis of
$R_T$. This condition is met in the following practical scenarios:}
	\begin{enumerate}
		\item {\bf Array Processing}: Let $A_T = H_T P^{1/2} S_T^{\sf H}$, with $H_T = [h(\theta_1), \cdots, h(\theta_K)]$ ($\theta_k$ distinct) the matrix of steering vectors, $P = \diag(a_1^2,\ldots,a_K^2)$ the source powers, $S_T = T^{-1/2} [ s_{t,k}^* ]_{t,k=1}^{T,K}$ the source signals, and let $V_T=W_TR_T^{1/2}$. Assume the $s_{k,t}$ i.i.d. of zero mean and unit variance and $[\sqrt{N}h(\theta)]_n=e^{-2\pi \imath n\sin(\theta)}$. Writing $A_T=U_TB_T^{\sf H}$ with $U_T=H_T(H_TH_T^{\sf H})^{-1/2}$ and $B_T=S_TP^{1/2}(H_TH_T^{\sf H})^{1/2}$, we can show \revju{$(H_TH_T^{\sf H})^{-1/2} \rightarrow I_K$} while \revjul{$S_T^\herm (R_T-zI_T)^{-1}S_T \asto m_\bnu(z)I_K$} so that Assumption~\ref{ass:sp} holds. See the proof of Lemma~\ref{lm:model} for details.

			\vspace{0.2cm}

		\item {\bf MIMO Communication}: Let $A_T = H_T P^{1/2} S_T^{\sf H}$, with $H_T=[h_1,\ldots,h_K]$ the wireless channels (i.i.d. zero mean $1/N$-variance entries) of $K$ transmitters, $P$ their diagonal power matrix and $S_T$ their matrix of transmitted (i.i.d. zero mean $1/T$-variance) signals. Taking $V_T=R_T^{1/2}W_T$, {\it i.e.} spatially correlated noise, and considering $Y_T^\herm$ instead of $Y_T$, we may write $A_T^\herm=U_TB_T^\herm$ with $U_T=S_T(S_TS_T^\herm)^{-1/2}$ and $B_T=H_T P^{1/2} (S_TS_T^\herm)^{1/2}$ to obtain $B_T^\herm (R_T-zI_N)^{-1}B_T\asto m_\bnu(z)P$.
	\end{enumerate}
\end{remark}


\subsection{Results on the information-plus-noise matrix}

We recall here the main results concerning the eigenvalue distribution of $Y_T Y_T^{\sf H}$. Since $Y_T Y_T^{\sf H}$ is at most a rank $2K$ perturbation of $V_T V_T^{\sf H}$ with $K$ fixed, Weyl's interlacing inequalities \cite[Th. 4.3.6]{HornJoh'90} show, in conjunction with Theorem~\ref{th:lsm}, that the spectral measure of $Y_T Y_T^{\sf H}$ also converges to $\bmu$ in the sense of Theorem~\ref{th:lsm}-\ref{cvg-Q}). However, a finite number of eigenvalues of $Y_T Y_T^{\sf H}$ might stay isolated away from the support of $\bmu$ \cite[Th. 2.2]{ChapCouHac'12}: 
\begin{theorem}
\label{th:1stord}
Under Assumptions \ref{ass:c}--\ref{ass:sp}, let $\bmu$ and $[a,b]$ be as in Theorem~\ref{th:lsm}. Let $\hat\lambda_{1,T} \geq \cdots \geq \hat\lambda_{N,T}$ be the eigenvalues of $Y_T Y_T^{\sf H}$ with spectral measure $\hat\tau_T = N^{-1} \sum_{i=1}^N \hat\lambda_{i,T}$. Then: 
\begin{enumerate}
\item\label{same-msl}
For every bounded and continuous real function $f$,
\begin{equation*}
\int f(t) \hat\tau_T(dt) \ \toaslong \ \int f(t) \bmu(dt) . 
\end{equation*}
\item\label{neig-below}
For any interval $[x_1, x_2] \subset (0, a)$ 
\begin{equation*}
\sharp \{ i \, : \, \hat\lambda_{i,T} \in [x_1, x_2] \} 
= 0 \ \text{w.p.} \ 1 \ \text{for all large} \ T. 
\end{equation*}
\item\label{eig-above}
	The function $\bg(x) \triangleq x \bm(x) \tilde\bm(x)$ is positive and decreases from $\bg(b^+)$ to zero on $(b, \infty)$. If $p_1 \bg(b^+) \leq 1$, then $\hat\lambda_{1,T} \asto b$. Otherwise, let $s\in\{1,\ldots,t\}$ be the largest index for which $p_s \bg(b^+) > 1$. For $k=1,\ldots,s$, let ${\brho}_k$ be the unique solution $x$ in $(b,\infty)$ of $p_k \bg(x) = 1$. Then, for $i=1,\ldots,s$ and with $j_0=0$,
\begin{align*} 
\hat\lambda_{j_1+\cdots+j_{i-1}+1,T}, \ldots, 
\hat\lambda_{j_1+\cdots+j_{i},T} &\toaslong {\bs\rho}_i \\
\hat\lambda_{j_1+\cdots+j_{s}+1,T} &\toaslong b.
\end{align*} 

\item The condition $p_k \bg(b^+) > 1$ is equivalent to 
\begin{equation} 
\label{cond:spike} 
p_k > \left( \int \frac{-m_b}{1 + \bc m_b t} \bnu(dt) \right)^{-1} 
\end{equation} 
with $m_b$ the solution in $( -(\bc b_\bnu)^{-1}, 0)$ to Equation~\eqref{eq:m(b+)}. 
\end{enumerate} 
\end{theorem}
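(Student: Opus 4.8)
The plan is to reduce the problem to a $2K\times 2K$ secular equation whose deterministic limit is computed from Assumption~\ref{ass:sp}, which yields the explicit spike equation and Items~3)--4); the eigenvalue‑counting, exact‑multiplicity and uniformity technicalities underlying Items~1)--2) and the indexing in Item~3) are then imported from \cite[Th.~2.2]{ChapCouHac'12}. Concretely, I would first set up a finite‑rank reduction. Writing $A_T=U_TB_T^{\sf H}$ as in Assumption~\ref{ass:sp}, $L_T=[\,U_T\ \ V_TB_T\,]\in\CC^{N\times 2K}$, and $\Gamma_T=\left(\begin{smallmatrix}B_T^{\sf H}B_T & I_K \\ I_K & 0\end{smallmatrix}\right)$, expanding $Y_TY_T^{\sf H}=(U_TB_T^{\sf H}+V_T)(B_TU_T^{\sf H}+V_T^{\sf H})$ gives the exact identity $Y_TY_T^{\sf H}=V_TV_T^{\sf H}+L_T\Gamma_TL_T^{\sf H}$. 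For $x$ in a compact subset of $(b,\infty)$ (resp.\ of $(0,a)$), Theorem~\ref{th:lsm}-\ref{noeig-sil}) makes $\bQ_T(x)\triangleq(V_TV_T^{\sf H}-xI_N)^{-1}$ well defined for all large $T$, w.p.~1, so by $\det(M+L\Gamma L^{\sf H})=\det(M)\det(I_{2K}+\Gamma L^{\sf H}M^{-1}L)$ the scalar $x$ is an eigenvalue of $Y_TY_T^{\sf H}$ if and only if $\det\!\big(I_{2K}+\Gamma_T L_T^{\sf H}\bQ_T(x)L_T\big)=0$.

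Next I would compute the a.s.\ limits of $\Gamma_T$ and of the four blocks of $L_T^{\sf H}\bQ_T(x)L_T$, uniformly on compact subsets of $(b,\infty)$. Since the $w_{n,t}$ are ${\cal CN}(0,1)$, the matrix $W_TR_TW_T^{\sf H}$ is unitarily invariant, so standard bilinear‑form estimates (using $U_T,B_T$ independent of $W_T$, $U_T^{\sf H}U_T=I_K$, $\sup_T\|A_T\|<\infty$, and $N^{-1}\tr\bQ_T(x)\to\bm(x)$ from Theorem~\ref{th:lsm}-\ref{cvg-unif})) give $U_T^{\sf H}\bQ_T(x)U_T\to\bm(x)I_K$, $U_T^{\sf H}\bQ_T(x)V_TB_T\to0$, and $B_T^{\sf H}B_T\to P$ (this last by expanding Assumption~\ref{ass:sp} as $z\to-\infty$). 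For the remaining block one writes $B_T^{\sf H}V_T^{\sf H}\bQ_T(x)V_TB_T=B_T^{\sf H}B_T+x\,B_T^{\sf H}(V_T^{\sf H}V_T-xI_T)^{-1}B_T$ and invokes the standard deterministic equivalent $(V_T^{\sf H}V_T-xI_T)^{-1}\approx -x^{-1}(I_T+\bc\bm(x)R_T)^{-1}$ (cf.\ \cite{SilvBai'95}); rewriting $(I_T+\bc\bm(x)R_T)^{-1}=(\bc\bm(x))^{-1}(R_T-zI_T)^{-1}$ with $z\triangleq-(\bc\bm(x))^{-1}$, which satisfies $z>b_\nu$ because $\bm(x)\in(m_b,0)$ and $m_b>-(\bc b_\nu)^{-1}$, hence $z\in\CC-\support(\nu)$, Assumption~\ref{ass:sp} applies, and together with $m_\nu\!\big(-(\bc m)^{-1}\big)=\bc m\!\int(1+\bc mt)^{-1}\nu(dt)$ and Theorem~\ref{th:lsm}-\ref{st-coresolv}) it gives $B_T^{\sf H}(V_T^{\sf H}V_T-xI_T)^{-1}B_T\to\tilde\bm(x)P$, so this block tends to $(1+x\tilde\bm(x))P$. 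Substituting these limits into the reduced determinant and taking the Schur complement with respect to the bottom‑right block yields
\begin{equation*}
\det\!\big(I_{2K}+\Gamma_TL_T^{\sf H}\bQ_T(x)L_T\big)\ \toaslong\ \det\!\big(I_K-\bg(x)P\big)=\prod_{i=1}^{t}\big(1-p_i\,\bg(x)\big)^{j_i},
\end{equation*}
so that, asymptotically, the isolated eigenvalues of $Y_TY_T^{\sf H}$ in $(b,\infty)$ converge to the solutions of $p_i\bg(x)=1$.

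It remains to study $\bg$ and conclude. Setting $m=\bm(x)$, Theorem~\ref{th:lsm}-\ref{st-coresolv}) gives $x\tilde\bm(x)=-\!\int(1+\bc mt)^{-1}\nu(dt)$, hence $\bg(x)=x\bm(x)\tilde\bm(x)=-m\!\int(1+\bc mt)^{-1}\nu(dt)$; since $\frac{d}{dm}\big[-m(1+\bc mt)^{-1}\big]=-(1+\bc mt)^{-2}<0$ whenever $1+\bc mt>0$, and since $x\mapsto\bm(x)$ increases from $m_b$ to $0$ on $(b,\infty)$ (Proposition~\ref{prop:edge}), the function $\bg$ is positive and strictly decreasing on $(b,\infty)$, from $\bg(b^+)=\int\frac{-m_b}{1+\bc m_bt}\nu(dt)>0$ down to $0$. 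Hence $p_i\bg(x)=1$ has a root in $(b,\infty)$ --- necessarily unique, call it $\brho_i$ --- if and only if $p_i\bg(b^+)>1$, and since $p_1>\cdots>p_t$ the roots that exist satisfy $\brho_1>\cdots>\brho_s$. The remaining quantitative content of Items~2)--3) --- that \emph{exactly} $j_i$ eigenvalues converge to each $\brho_i$ with the stated indexing, that the next one converges to $b$, and that no $\hat\lambda_{i,T}$ accumulates in $(0,a)$ even though $\Gamma_T$ is indefinite --- is precisely \cite[Th.~2.2]{ChapCouHac'12}, whose hypotheses are verified by the previous step; Item~1) is Weyl's interlacing inequalities \cite[Th.~4.3.6]{HornJoh'90} combined with Theorem~\ref{th:lsm}-\ref{cvg-Q}), as already noted before the statement. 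Finally, $p_k\bg(b^+)>1$ reads $p_k\int\frac{-m_b}{1+\bc m_bt}\nu(dt)>1$, and since $m_b<0$ and $1+\bc m_bt>0$ on $\support(\nu)$ the integral is positive, so this is equivalent to \eqref{cond:spike} with $m_b$ the solution in $(-(\bc b_\nu)^{-1},0)$ of \eqref{eq:m(b+)} by Corollary~\ref{cor:edge}; this is Item~4).

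The step I expect to be the main obstacle is the \emph{uniformity in $x$} of the block convergences of the second paragraph --- needed so that the zeros of the random determinant track those of the deterministic function $x\mapsto\det(I_K-\bg(x)P)$ --- together with the exact multiplicity/indexing bookkeeping and the exclusion of isolated eigenvalues from $(0,a)$, where the indefiniteness of $\Gamma_T$ rules out any elementary interlacing argument. These are exactly the technical points supplied by \cite[Th.~2.2]{ChapCouHac'12}; the genuinely new work is the translation performed above.
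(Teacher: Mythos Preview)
Your proposal is correct and aligns with the paper's own proof: the paper simply cites \cite[Th.~2.2]{ChapCouHac'12} for Items~1)--3) and derives Item~4) by the identical computation $\bg(x)=-\bm(x)\int(1+\bc\bm(x)t)^{-1}\,\nu(dt)$ together with $\bm(x)\downarrow m_b$ as $x\downarrow b$ from Corollary~\ref{cor:edge}. Your secular-equation sketch is a correct expansion of what lies inside \cite{ChapCouHac'12}, but it is not strictly needed here since you ultimately invoke that reference anyway for the uniformity, exact-multiplicity, and no-eigenvalue-in-$(0,a)$ arguments.
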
 
\begin{proof}
The first two items in this theorem are proved in \cite{ChapCouHac'12} in a more general setting. To obtain the last item, observe that $\bg(x) = - \int \bm(x) (1+\bc\bm(x) t)^{-1} \bnu(dt)$ from the definition of $\tilde\bm$ in Theorem \ref{th:lsm}-\ref{st-coresolv}) and recall that $\bm(x) \downarrow m_b$ as $x\downarrow b$, where $m_b$ is defined in Corollary \ref{cor:edge}. 
\end{proof}

This theorem shows in particular that the number of isolated eigenvalues of $Y_T Y_T^\herm$ is upper bounded by the rank $K$ of $A_T$ and it reaches this rank if $p_t$ is large enough.

\begin{remark}
In the white noise setting, {\it i.e.} $R_T=I_T$ (hence, $\bnu = \delta_1$), $\bmu$ is the celebrated Marchenko-Pastur law, and Equation~\eqref{cond:spike} boils down to $p_k > \sqrt{\bc}$ (see {\it e.g.} \cite{HachLouMes'11}). The source detection approaches studied in \cite{penna-cl09, BianDebMai'11, nad-icc11} rely on this condition. 
\end{remark}

\section{Source detection and parameter estimation} 
\label{1st-ord} 

We start by stating the results in the general context of Assumptions~\ref{ass:c}--\ref{ass:sp}. We shall then deal more specifically with the model of Remark~\ref{rem:applications}-1). 

\subsection{General results} 
\label{detect-spikes} 

Theorem~\ref{th:1stord} gives the following signal dimension estimator: 
\begin{theorem}
\label{th:detection} 
Under Assumptions~\ref{ass:c}--\ref{ass:sp}, let $s\geq 0$ be the largest integer for which Equation~\eqref{cond:spike} holds. Let $0<\varepsilon<(\brho_s/b)-1$ with $\brho_0=\infty$. Given $L\geq K$, define
\begin{equation*}
\hat k_T = \arg\max_{k \in \{0,\ldots, L\}} \frac{\hat\lambda_{k,T}}{\hat\lambda_{k+1,T}} > 1 + \varepsilon 
\end{equation*}
with $\hat\lambda_{0,T}=\infty$. Then, for all $T$ large, w.p. 1,
\begin{equation*}
\hat k_T = j_1 + \ldots + j_s\quad (j_0=0).
\end{equation*}
\end{theorem}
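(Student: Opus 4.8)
The plan is to use Theorem~\ref{th:1stord} as a black box: it tells us precisely which eigenvalues of $Y_TY_T^{\sf H}$ escape the support $[a,b]$ of $\mu$ and where they land. Write $r = j_1 + \ldots + j_s$ for the total multiplicity of the isolated eigenvalues. By Theorem~\ref{th:1stord}-\ref{eig-above}), for each large $T$ (w.p.~1) we have $\hat\lambda_{1,T},\ldots,\hat\lambda_{r,T}$ converging to the distinct limits ${\bs\rho}_1 > \cdots > {\bs\rho}_s > b$ (grouped by the multiplicities $j_1,\ldots,j_s$), while $\hat\lambda_{r+1,T} \to b$. Moreover, by Theorem~\ref{th:1stord}-\ref{same-msl}) and \ref{neig-below}), together with Theorem~\ref{th:lsm}-\ref{form-mu}) asserting $f(t)>0$ on $(a,b)$, the remaining eigenvalues $\hat\lambda_{r+1,T},\ldots$ fill the bulk $[a,b]$: in particular $\hat\lambda_{r+1,T} \to b$, $\hat\lambda_{N,T} \to a$ (or $0$ if $\bc > 1$), and for any $k$ with $r < k \le L$, the ratio $\hat\lambda_{k,T}/\hat\lambda_{k+1,T} \to 1$ since both numerator and denominator converge to points in the connected bulk edge region — more carefully, since $L\ge K \ge r$ is a fixed finite index and the bulk has no gaps, consecutive bulk eigenvalues with fixed indices near the edge both converge to $b$, so their ratio tends to $1$.

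The argument then splits into showing the two inequalities defining $\hat k_T$. First, for $k \le r$: I would check that $\hat\lambda_{k,T}/\hat\lambda_{k+1,T} \to {\bs\rho}_{i(k)}/{\bs\rho}_{i(k)+1}$ when $k$ is the last index of a block (with the convention ${\bs\rho}_{s+1} = b$), which is $> 1$, and $\to 1$ when $k$ is in the interior of a block. The worst case is $k = r$, giving the ratio ${\bs\rho}_s/b > 1 + \varepsilon$ by the choice of $\varepsilon$; hence for all large $T$, w.p.~1, $\hat\lambda_{r,T}/\hat\lambda_{r+1,T} > 1 + \varepsilon$, so $r$ is a valid candidate and $\hat k_T \ge r$. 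Second, for $r < k \le L$: since $\hat\lambda_{k,T}/\hat\lambda_{k+1,T} \to 1$, for all large $T$ this ratio is $< 1 + \varepsilon$, so no such $k$ is selected; combined with $\hat k_T \ge r$ this forces $\hat k_T = r$. One must also handle the convention $\hat\lambda_{0,T} = \infty$ cleanly (so $k=0$ always ``satisfies'' the constraint, ensuring the $\arg\max$ set is nonempty, but $k=0$ is never the maximizer unless $r=0$, which is consistent).

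The main obstacle is the case $r < k \le L$ when $N/T \to \bc$ may force some of these eigenvalues to be small or even zero. If $\bc > 1$, then $\mu$ has an atom at $0$, and eigenvalues $\hat\lambda_{k,T}$ with $k$ close to $N$ are $0$, making the ratio $0/0$ ill-defined; but since $L \ge K$ is a \emph{fixed} finite integer and $N \to \infty$, the indices $k \in \{r+1,\ldots,L\}$ are all near the \emph{top} of the spectrum, so $\hat\lambda_{k,T}$ and $\hat\lambda_{k+1,T}$ both converge to $b > 0$ and the ratio is well-defined and tends to $1$. So the resolution is simply to note that $L$ being fixed keeps us near the upper edge. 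The one genuinely technical point worth spelling out is the convergence $\hat\lambda_{k,T} \to b$ for each fixed $k > r$: this follows because, by item~\ref{same-msl}) the empirical measure converges to $\mu$ whose support reaches up to $b$, so $\limsup_T \hat\lambda_{k,T} \le b$ is immediate from no eigenvalues escaping above (Theorem~\ref{th:1stord}-\ref{eig-above}) with exactly $r$ eigenvalues above $b$), while $\liminf_T \hat\lambda_{k,T} \ge b$ follows because for any $\eta > 0$, $\mu((b-\eta,b]) > 0$ forces at least (a growing number, in particular more than $k$) eigenvalues into $(b-\eta,b+o(1)]$ for large $T$. I would invoke this standard fact with a one-line justification rather than reproving it. Everything else is routine bookkeeping with the finitely many limits ${\bs\rho}_1,\ldots,{\bs\rho}_s,b$.
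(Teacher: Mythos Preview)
Your proposal is correct and follows essentially the same route as the paper, which invokes Items~\ref{same-msl}) and \ref{eig-above}) of Theorem~\ref{th:1stord} to get $\hat\lambda_{r,T}\asto\brho_s$ and $\hat\lambda_{\ell,T}\asto b$ for $\ell=r+1,\ldots,L$, from which the conclusion is immediate. You have simply fleshed out the details the paper leaves implicit --- in particular the justification that $\hat\lambda_{k,T}\to b$ for each fixed $k>r$, which the paper states without argument.
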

\begin{proof}
	\revju{The result is clear for $s=0$. Else,} writing $\bs k = j_1 + \ldots + j_s$, Items \ref{same-msl}) and \ref{eig-above}) of Theorem~\ref{th:1stord} ensure $\hat\lambda_{\bs k,T} \asto \brho_s > b$ and $\hat\lambda_{\ell,T} \asto b$ for $\ell = \bs k +1, \ldots, L$.
\end{proof}

Theorem~\ref{th:detection} allows in practice to evaluate the number of strong sources when $T$ is large. This however requires $\varepsilon$ to be taken such that $\varepsilon<(\brho_s/b)-1$, a value which is practically not known. \revju{As the typical spacing between noise eigenvalues is of order $O(1/N)$ (see \emph{e.g.} \cite{pastur2011eigenvalue}), for all large $N$, one may take $\varepsilon$ such that $\varepsilon\to 0$ and $N\varepsilon \to \infty$ as $N\to\infty$.} Theorem~\ref{th:detection} also assumes that the receiver knows an upper bound $L$ on $K$, which \revju{is a common hypothesis}. 

In the sequel, for $i \in \{1, \ldots, K \}$, we let ${\cal K}(i) = 1$ if $1 \leq i \leq j_1$, ${\cal K}(i) = 2$ if $j_1+1 \leq i \leq j_1 + j_2$, $\ldots$, ${\cal K}(i) = t$ if $j_1+\cdots+j_{t-1}+1 \leq i \leq K$.  The following theorem provides a means for estimating consistently $p_1,\ldots,p_s$: 
\begin{theorem}
\label{th:estim-p} 
In the setting of Theorem \ref{th:detection}, let  
\begin{align*}
\hat m_T(x) &\triangleq \frac{1}{N-\hat k_T} \sum_{n=\hat k_T+1}^{N} \frac{1}{\hat\lambda_{n,T} - x} \\
\hat g_T(x) &\triangleq \hat m_T(x)( x c_T \hat m_T(x) + c_T - 1) \\
\hat p_{i,T}&\triangleq \frac{1}{\hat g_T(\hat\lambda_{i,T})},~i=1, \ldots, \hat k_T.
\end{align*}
Then 
\begin{equation*}
\hat p_{i,T} - p_{{\cal K}(i)} \toaslong 0.
\end{equation*}
\end{theorem}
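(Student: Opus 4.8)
The plan is to show that the empirical quantities $\hat m_T$ and $\hat g_T$, evaluated on the compact subset of $(b,\infty)$ where the isolated eigenvalues live, converge uniformly (w.p.\ 1) to their deterministic counterparts $\bm$ and $\bg$ from Theorem~\ref{th:lsm} and Theorem~\ref{th:1stord}, and then to combine this with the almost sure convergence of the isolated eigenvalues $\hat\lambda_{i,T}$ to the $\brho_{{\cal K}(i)}$ and of $\hat k_T$ to $j_1+\cdots+j_s$. First I would invoke Theorem~\ref{th:detection} to fix $\hat k_T = \bs k \triangleq j_1+\cdots+j_s$ for all large $T$, w.p.\ 1; this removes the randomness of the summation index and lets me work with the deterministic truncation $\frac{1}{N-\bs k}\sum_{n=\bs k+1}^N (\hat\lambda_{n,T}-x)^{-1}$.

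Next I would establish that $\hat m_T(x) \to \bm(x)$ uniformly on compacts of $(b,\infty)$. The point is that $\hat m_T$ differs from the full Stieltjes transform $\frac1N\sum_{n=1}^N(\hat\lambda_{n,T}-x)^{-1}$ of $\hat\tau_T$ only by: (i) removing the $\bs k$ isolated eigenvalues, whose contribution is $O(1/N)$ since each term $(\hat\lambda_{i,T}-x)^{-1}$ stays bounded (the $\hat\lambda_{i,T}$ converge to $\brho_i>b$ and $x$ ranges over a compact set away from them, or one restricts to a neighborhood of a given $\brho_i$), and (ii) the normalization difference between $1/N$ and $1/(N-\bs k)$, also $O(1/N)$. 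So it suffices to show $\frac1N\sum_n(\hat\lambda_{n,T}-x)^{-1}\to\bm(x)$ uniformly on compacts of $(b,\infty)$. This is the analogue of Theorem~\ref{th:lsm}-\ref{cvg-unif}) for $Y_TY_T^\herm$ rather than $V_TV_T^\herm$; it follows from Theorem~\ref{th:1stord}-\ref{same-msl}) (weak convergence of $\hat\tau_T$ to $\mu$), Theorem~\ref{th:1stord}-\ref{neig-below}) and \ref{eig-above}) (no eigenvalues in $(b,x_2]$ except the isolated ones, which are handled as above), plus a normal-families / Montel argument: the functions $x\mapsto \frac1N\sum_n(\hat\lambda_{n,T}-x)^{-1}$ are analytic and locally uniformly bounded on a complex neighborhood of any compact of $(b,\infty)$ once the spectrum is confined, so pointwise a.s.\ convergence upgrades to locally uniform convergence. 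From $\hat m_T\to\bm$ and $c_T\to\bc$ one gets $\hat g_T(x)=\hat m_T(x)(xc_T\hat m_T(x)+c_T-1)\to \bm(x)(x\bc\bm(x)+\bc-1)$ uniformly on compacts; and by Theorem~\ref{th:lsm}-\ref{st-coresolv}), $\tilde\bm(x)=\bc\bm(x)-(1-\bc)/x$, so $x\bc\bm(x)+\bc-1 = x\tilde\bm(x)$, whence the limit is $x\bm(x)\tilde\bm(x)=\bg(x)$.

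Finally I would plug in $x=\hat\lambda_{i,T}$. For $i$ with ${\cal K}(i)=k\le s$, $\hat\lambda_{i,T}\asto \brho_k\in(b,\infty)$, so for large $T$ all the $\hat\lambda_{i,T}$ lie in a fixed compact $[b+\eta,M]\subset(b,\infty)$; combining uniform convergence $\hat g_T\to\bg$ on this compact with $\hat\lambda_{i,T}\to\brho_k$ and continuity of $\bg$ gives $\hat g_T(\hat\lambda_{i,T})\to \bg(\brho_k)$. But $\brho_k$ is by definition (Theorem~\ref{th:1stord}-\ref{eig-above})) the solution of $p_k\bg(\brho_k)=1$, i.e.\ $\bg(\brho_k)=1/p_k$, which is strictly positive, so $\hat p_{i,T}=1/\hat g_T(\hat\lambda_{i,T})\to p_k=p_{{\cal K}(i)}$, as claimed. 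The main obstacle is the uniform (rather than merely pointwise) convergence $\hat m_T\to\bm$ on $(b,\infty)$ together with the careful bookkeeping that the isolated eigenvalues and the index shift $\bs k$ contribute only vanishing terms; once that is in place the evaluation at $\hat\lambda_{i,T}$ and the identification $\bg(\brho_k)=1/p_k$ are routine, using that the spectrum of $Y_TY_T^\herm$ is asymptotically confined (Theorem~\ref{th:1stord}) so that the relevant analytic functions form a normal family.
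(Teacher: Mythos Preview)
Your overall strategy is sound and matches the paper's in outline: show $\hat m_T\to\bm$ uniformly on compacts of $(b,\infty)$, deduce $\hat g_T\to\bg$ via the identity $x\bc\bm(x)+\bc-1=x\tilde\bm(x)$, then evaluate at $\hat\lambda_{i,T}\to\brho_{{\cal K}(i)}$ and use $\bg(\brho_k)=1/p_k$. The final steps are exactly right.

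There is, however, a flaw in your route to the uniform convergence of $\hat m_T$. You reduce to showing that the \emph{full} Stieltjes transform $N^{-1}\sum_{n=1}^N(\hat\lambda_{n,T}-x)^{-1}$ converges uniformly on compacts of $(b,\infty)$, arguing that the $\bs k$ removed terms contribute $O(1/N)$. But the isolated eigenvalues $\hat\lambda_{1,T},\ldots,\hat\lambda_{\bs k,T}$ converge to points $\brho_1,\ldots,\brho_s$ lying \emph{inside} $(b,\infty)$, so on any compact containing some $\brho_k$ the full ST has poles and neither the $O(1/N)$ comparison nor the normal-families boundedness holds. Since you must ultimately evaluate at $x=\hat\lambda_{i,T}\to\brho_{{\cal K}(i)}$, this is precisely the region you cannot avoid. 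The fix is easy: drop the reduction and apply the Montel argument directly to $\hat m_T$, which \emph{is} analytic and locally bounded on a complex neighborhood of any compact in $(b,\infty)$ as soon as $\hat\lambda_{\bs k+1,T}<b+\eta$ (guaranteed by Theorem~\ref{th:1stord}-\ref{eig-above})), with pointwise convergence coming from weak convergence of the truncated empirical measure to $\mu$.

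The paper, by contrast, bypasses normal families altogether. It exploits that $Y_TY_T^\herm$ is at most a rank-$2K$ perturbation of $V_TV_T^\herm$ and uses Weyl's interlacing inequalities ($\hat\lambda_{n,T}\le\lambda_{n-2K,T}$ and $\lambda_{n,T}\le\hat\lambda_{n-2K,T}$) to sandwich $\hat m_T(x)$ between $\underline m_T(x)+e_T(x)$ and $\underline m_T(x)+e'_T(x)$, with $\underline m_T$ the Stieltjes transform of the \emph{noise} eigenvalues and $e_T,e'_T\to0$ uniformly. Since the uniform convergence $\underline m_T\to\bm$ on compacts of $(b,\infty)$ is already supplied by Theorem~\ref{th:lsm}-\ref{cvg-unif}), the conclusion follows from elementary inequalities alone. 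Your (corrected) normal-families approach is also valid and more self-contained, but the paper's interlacing argument is shorter because it piggybacks on the pure-noise result rather than re-deriving uniform convergence for the perturbed spectrum.
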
 
\begin{proof}
	Recall that $\lambda_{1,T}\geq\ldots\geq \lambda_{N,T}$ are the eigenvalues of $W_T R_T W_T^{\sf H}$. In the proof, we restrict the elementary events to belong to the probability one set where $\lambda_{1,T} \to b$, $\underline m_T(x) \to \bm(x)$ uniformly on the compact subsets of $( b, \infty)$ (see Theorem~\ref{th:lsm}-\ref{cvg-unif})), $\hat\lambda_{i,T} \to {\bs\rho}_{{\cal K}(i)}$ for $i=1,\ldots, j_1+\cdots+j_s$, $\hat\lambda_{j_1+\cdots+j_s+1,T} \to b$, and $\hat k_T \to j_1+\cdots+j_s$ (Theorems~\ref{th:lsm}--\ref{th:detection}). Observe that $Y_T Y_T^{\sf H}$ is at most a (nonnegative) rank $2K$ perturbation of $V_T V_T^\herm$. In these conditions, Weyl's inequalities \cite[Th.~4.3.6]{HornJoh'90} ensure $\hat\lambda_{n,T} \leq \lambda_{n-2K,T}$ and $\lambda_{n,T} \leq \hat\lambda_{n-2K,T}$ for $=2K+1,\ldots, N$. Then, for any $x > b$ and $T$ large,
\begin{align*} 
\hat m_T(x) 
&\geq \frac{1}{N - \hat k_T} \left( \sum_{n=1}^{N-2K} \frac{1}{\lambda_{n,T} - x} + \sum_{n=\hat k_T+1}^{2K} \frac{1}{\hat\lambda_{n,T} - x} \right) \\ 
&\triangleq \underline m_T(x) + e_T(x) 
\end{align*} 
where $e_T(x) \to 0$ uniformly on compact sets of $(b, \infty)$, and
\begin{align*} 
&\hat m_T(x)=\frac{1}{N-\hat k_T} \left( \sum_{n=\hat k_T+1}^{N-2K} \frac{1}{\hat\lambda_{n,T} - x}  + \sum_{n=N-2K+1}^N \frac{1}{\hat\lambda_{n,T} - x}  \right) \\ 
&\leq \frac{1}{N - \hat k_T} \left( 
\sum_{n=\hat k_T+1+2K}^{N} \frac{1}{\lambda_{n,T} - x} + \sum_{n=N-2K+1}^N \frac{1}{\hat\lambda_{n,T} - x}  \right) \\ 
&\triangleq \underline m_T(x) + e'_T(x) 
\end{align*} 
where $e'_T(x) \to 0$ uniformly on compact sets of $( b, \infty)$. Consequently, $\hat g_T(\hat\lambda_{i,T}) - \bg(\hat\lambda_{i,T}) \to 0$ for $i=1,\ldots,\hat k_T$. Clearly, $\bg(\hat\lambda_{i,T}) - \bg({\brho}_{{\cal K}(i)}) \to 0$ so that $\hat g_T(\hat\lambda_{i,T}) - \bg({\brho}_{{\cal K}(i)}) \to 0$ which, along with $\bg(\brho_{\mathcal K(i)})=1/p_{\mathcal K(i)}$, gives the result. 
\end{proof} 

Let now $A_T=U_TB_T^\herm$ following Assumption~\ref{ass:sp} and write $U_T=[U_{1,T},\ldots,U_{t,T}]$, $U_{\ell,T}\in\CC^{N\times j_\ell}$. We introduce the orthogonal projection matrix $\Pi_{\ell,T}=U_{\ell,T}U_{\ell,T}^\herm\in\CC^{N\times N}$. Similarly, we denote $\hat{\Pi}_{\ell,T}$ the orthogonal projection matrix on the eigenspace corresponding to the set of eigenvalues $\{\hat{\lambda}_{j_1+\ldots+j_{\ell-1}+1,T},\ldots, \hat{\lambda}_{j_1+\ldots+j_\ell}\}$ in $Y_TY_T^\herm$, for $\ell=1,\ldots,t$ ($j_0 = 0$). With these notations, we have the following estimate of bilinear forms of the type $a_{T}^{\sf H}{\Pi}_{\ell,T}b_{T}$: 
\begin{theorem}
	\label{th:quad_forms}
	Under Assumptions~\ref{ass:c}--\ref{ass:sp}, let $a_T,b_T\in\CC^N$ be two sequences of deterministic vectors with bounded norms and let $\mathcal K(i)\leq s$ with $s$ the largest integer for which \eqref{cond:spike} holds. Then:
\begin{align*}
	a_{T}^{\sf H}\Pi_{\mathcal K(i),T}b_{T} - \frac{\hat{g}_T'(\hat{\lambda}_{i,T})}{\hat{m}_T(\hat{\lambda}_{i,T})\hat{g}_T(\hat{\lambda}_{i,T})} a_{T}^{\sf H}\hat{\Pi}_{\mathcal K(i),T}b_{T}\toaslong 0.
\end{align*}
\end{theorem}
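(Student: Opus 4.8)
The strategy is to go through a resolvent / contour-integral representation of the projection $\hat\Pi_{\mathcal K(i),T}$ and to identify the limit of the resulting bilinear form, then match it against the deterministic quantity $a_T^{\sf H}\Pi_{\mathcal K(i),T}b_T$. First I would write
$\hat\Pi_{\mathcal K(i),T} = -\frac{1}{2\pi\imath}\oint_{\Gamma_i} (Y_TY_T^{\sf H} - zI_N)^{-1}\,dz$,
where $\Gamma_i$ is a small positively oriented contour enclosing only the limit $\brho_{\mathcal K(i)}$ (this is licit for all large $T$ w.p.\ 1 by Theorem~\ref{th:1stord}-\ref{eig-above}), so that $a_T^{\sf H}\hat\Pi_{\mathcal K(i),T}b_T = -\frac{1}{2\pi\imath}\oint_{\Gamma_i} a_T^{\sf H}(Y_TY_T^{\sf H}-zI_N)^{-1}b_T\,dz$. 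The key is then a deterministic equivalent for the bilinear resolvent form $a_T^{\sf H}(Y_TY_T^{\sf H}-zI_N)^{-1}b_T$ for $z$ in a neighbourhood of $(b,\infty)$ away from the support. Using the rank-$2K$ (in fact, after the $U_TB_T^{\sf H}$ factorization, a structured finite-rank) perturbation identity and the Woodbury formula, one expresses this form in terms of the noise-only resolvent $(V_TV_T^{\sf H}-zI_N)^{-1}$ bilinear forms, the quantities $B_T^{\sf H}(R_T-\zeta I_T)^{-1}B_T$, and $U_T$-projected resolvent forms. Under Assumptions~\ref{ass:c}--\ref{ass:sp}, the noise bilinear resolvent forms concentrate (standard bilinear-form deterministic-equivalent results in the line of \cite{BaiSil'98}, or the co-resolvent identities in Theorem~\ref{th:lsm}), and $B_T^{\sf H}(R_T-\zeta I_T)^{-1}B_T \asto m_\nu(\zeta)P$, so the whole bilinear form converges, for $z$ outside the support, to an explicit analytic function whose only poles in $(b,\infty)$ are at the $\brho_k$, with residue at $\brho_{\mathcal K(i)}$ proportional to $a_T^{\sf H}\Pi_{\mathcal K(i),T}b_T$.

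Carrying out the residue computation is where the precise constant $\hat g_T'(\hat\lambda_{i,T})/(\hat m_T(\hat\lambda_{i,T})\hat g_T(\hat\lambda_{i,T}))$ must appear. Concretely: the pole of the limiting bilinear resolvent form at $\brho_k$ comes from the vanishing, at $z=\brho_k$, of a scalar factor of the form $1 + p_k\, \phi(z)$ where $\phi$ is built out of $\bm(z)$, $\tilde\bm(z)$ and $m_\nu$; by Theorem~\ref{th:1stord}-\ref{eig-above} this scalar is exactly $1 - p_k\bg(z)/(p_k\bg(\brho_k))$-type, vanishing precisely where $p_k\bg(z)=1$. The residue therefore carries a factor $1/(p_k\bg'(\brho_k))$ together with the value at $\brho_k$ of the "numerator" bilinear form, and after the algebra one gets
\[
a_T^{\sf H}\hat\Pi_{\mathcal K(i),T}b_T \;-\; \frac{\bm(\brho_{\mathcal K(i)})\,\bg(\brho_{\mathcal K(i)})}{\bg'(\brho_{\mathcal K(i)})}\, a_T^{\sf H}\Pi_{\mathcal K(i),T}b_T \;\toaslong\; 0,
\]
i.e.\ the population bilinear form is recovered up to the deterministic scalar $\bg'(\brho)/(\bm(\brho)\bg(\brho))$ evaluated at $\brho=\brho_{\mathcal K(i)}$. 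The last step is to replace this deterministic scalar by its empirical counterpart: from the proof of Theorem~\ref{th:estim-p} we already have $\hat g_T(x)-\bg(x)\to 0$ and $\hat m_T(x)-\bm(x)\to0$ uniformly on compacts of $(b,\infty)$, and an analogous uniform convergence $\hat g_T'(x)-\bg'(x)\to 0$ follows by a Cauchy-integral argument (uniform convergence of analytic functions implies uniform convergence of derivatives on slightly smaller compacts). Combining this with $\hat\lambda_{i,T}\to\brho_{\mathcal K(i)}\in(b,\infty)$, continuity, and the fact that $\bm(\brho)\bg(\brho)\ne 0$ there, we get $\hat g_T'(\hat\lambda_{i,T})/(\hat m_T(\hat\lambda_{i,T})\hat g_T(\hat\lambda_{i,T})) \to \bg'(\brho_{\mathcal K(i)})/(\bm(\brho_{\mathcal K(i)})\bg(\brho_{\mathcal K(i)}))$ a.s., which yields the claim.

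The main obstacle is establishing the bilinear-form deterministic equivalent for $a_T^{\sf H}(Y_TY_T^{\sf H}-zI_N)^{-1}b_T$ with enough uniformity in $z$ on a complex neighbourhood of the relevant portion of $(b,\infty)$, and in particular controlling the $U_T$-dependent bilinear forms: unlike the eigenvalue statements of Theorem~\ref{th:1stord}, here the eigenvector content of $A_T$ (through $\Pi_{\ell,T}=U_{\ell,T}U_{\ell,T}^{\sf H}$) enters explicitly, so one needs the projection-onto-signal-subspace concentration, not merely the spectral-measure convergence. A secondary technical point is justifying the deformation of the contour and the interchange of limit and contour integral, which requires a deterministic-equivalent bound holding uniformly on $\Gamma_i$ together with the a.s.\ absence of eigenvalues of $Y_TY_T^{\sf H}$ inside $\Gamma_i$ other than the cluster converging to $\brho_{\mathcal K(i)}$ — both of which are available from Theorem~\ref{th:1stord} and the cited random matrix results. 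Everything else is bookkeeping with the identities $\tilde\bm(z)=\bc\bm(z)-(1-\bc)/z$ and $\bg(x)=x\bm(x)\tilde\bm(x)$.
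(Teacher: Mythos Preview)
Your plan is correct and follows essentially the same route as the paper: the paper too represents $a_T^{\sf H}\hat\Pi_{\ell,T}b_T$ by a contour integral, reduces via a finite-rank perturbation identity (it uses the Hermitized linearization $\underline Q_T(z)$ of \cite{HachLouMes'11} rather than a direct Woodbury on $(Y_TY_T^{\sf H}-zI_N)^{-1}$, but this is cosmetic) to noise-resolvent bilinear forms, invokes the concentration lemmas of \cite[Lemmas~4.1--4.6]{ChapCouHac'12} together with Assumption~\ref{ass:sp} to obtain the limiting integrand, computes the residue to get $a_T^{\sf H}\hat\Pi_{\ell,T}b_T - \frac{\bm(\brho_\ell)\bg(\brho_\ell)}{\bg'(\brho_\ell)}a_T^{\sf H}\Pi_{\ell,T}b_T\asto 0$, and then swaps $\bm,\bg,\bg'$ at $\brho_\ell$ for $\hat m_T,\hat g_T,\hat g_T'$ at $\hat\lambda_{i,T}$ exactly as you describe. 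The only refinement to flag is that the bilinear concentration you call ``standard'' is precisely what \cite{ChapCouHac'12} supplies in this model; you would be citing those lemmas rather than \cite{BaiSil'98} directly.
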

\begin{proof}
	From Assumption~\ref{ass:sp}, $B_T^\herm B_T\asto P$ (multiply each side of \eqref{eq:ass6} by $-z$ and take $z$ large). Therefore, $p_1,\ldots,p_t$ are the limiting positive eigenvalues of $A_TA_T^\herm$. For $R_T=I_N$, the theorem thus coincides with \cite[Theorem~2]{HachLouMes'11} since then $V_T=W_T$ is a bi-unitarily invariant (here Gaussian) matrix as requested by \cite[Assumption~2]{HachLouMes'11}. We now reproduce the steps of \cite[Theorem~2]{HachLouMes'11} under our set of assumptions. \cite[Equation~(8)]{HachLouMes'11} remains valid in our setting which, under the present notations, reads
	\begin{align}
		\label{eq:quad_form}
		a_{T}^{\sf H}\hat{\Pi}_{\ell,T}b_{T} &= -\frac{1}{\imath \pi}\oint_{\mathcal C_{\ell,T}} \tilde{a}_{T}^{\sf H}\underline{Q}_T(z)\tilde{b}_T dz 
		+ \frac{1}{\imath \pi}\oint_{\mathcal C_{\ell,T}} \hat{a}_{T}^{\sf H}\hat{H}_T(z)^{-1}\hat{b}_{T}dz
	\end{align}
for $\mathcal C_{\ell,T}$ a complex positively oriented contour enclosing only the eigenvalues $\hat{\lambda}_{j_1+\ldots+j_{\ell-1}+1,T},\ldots,\hat{\lambda}_{j_1+\ldots+j_\ell,T}$, with
	\begin{align*}
		\tilde{a}_{T}^{\sf T} &= [a_{k,T}^{\sf T}, 0,\ldots,0],~\tilde{b}_{T}^{\sf T} = [b_{k,T}^{\sf T}, 0,\ldots,0] \\
		Q_T(z) &= (V_TV_T^\herm - z I_N)^{-1},~\tilde{Q}_T(z)=(V_T^\herm V_T - z I_T)^{-1} \\
		\underline Q_T(z) &= \begin{bmatrix} zQ_T(z^2) & V_T\tilde{Q}_T(z^2) \\ \tilde{Q}_T(z^2)V_T^\herm & z\tilde{Q}_T(z^2) \end{bmatrix}
	\end{align*}
	\begin{align*}
			\hat{a}_{T} &= \begin{bmatrix} \revju{zU_T^\herm Q_T(z^2)} \\ B_T^\herm \tilde{Q}_T(z^2)V_T^\herm \end{bmatrix} a_T,~ \hat{b}_T = \begin{bmatrix} \revju{zU_T^\herm Q_T(z^2)} \\ B_T^\herm\tilde{Q}_T(z^2)V_T^\herm \end{bmatrix} b_T \\
					\hat{H}_{T}(z) &= \begin{bmatrix} zU_T^\herm Q_T(z^2)U_T & U_T^\herm V_T\tilde{Q}_T(z^2)B_T + I_K \\ B_T^\herm\tilde{Q}_T(z^2)V_T^\herm U_T + I_K & zB_T^\herm\tilde{Q}_T(z^2)B_T \end{bmatrix}.
	\end{align*}
	Let $\ell\leq s$. From Theorem~\ref{th:1stord}-2), for all large $T$ w.p. 1, the first term on the right-hand side of \eqref{eq:quad_form} is null (no pole of $\underline{Q}_T$ lies in $\mathcal C_{\ell,T}$ for large $T$), while in the second term $\mathcal C_{\ell,T}$ can be replaced by a contour $\mathcal C_\ell$ enclosing ${\brho}_\ell$ but no $\brho_k$, $k\neq \ell$. We must now prove $\hat{a}_T^\herm\hat{H}_T(z)\hat{b}_T- \bar{a}_T^\herm \bar{H}_T(z)\bar{b}_T\asto 0$ where
	\begin{align*}
		\bar{a}_T &= \begin{bmatrix} z\bm(z^2)U_T^\herm \\ 0 \end{bmatrix} a_T ,~ \bar{b}_T = \begin{bmatrix} z\bm(z^2)U_T^\herm \\ 0 \end{bmatrix} b_T \\
			\bar{H}_T(z) &= \begin{bmatrix} z\bm(z^2)I_K & I_K \\ I_K & z\tilde{\bm}(z^2)P \end{bmatrix}.
	\end{align*}
	By \cite[Lemmas~4.1--4.6]{ChapCouHac'12}, $\Vert \hat{a}_T - \bar{a}_T \Vert \asto 0$, $\Vert \hat{b}_T - \bar{b}_T \Vert \asto 0$,
	\begin{align*}
		\left\Vert \hat{H}_T(z) - \begin{bmatrix} z\bm(z^2)I_K & I_K \\ I_K & \frac{B_T^\herm \left( I_T + \bc {\bm}(z^2)R_T \right)^{-1}B_T}{-z} \end{bmatrix} \right\Vert \asto 0.
	\end{align*}
	Assumption~\ref{ass:sp} and the definition of $\tilde{\bm}(z)$ then imply $\Vert \frac{-1}{z} B_T^\herm \left( I_T + \bc {\bm}(z^2)R_T \right)^{-1}B_T - z\tilde{\bm}(z^2)P\Vert \asto 0$, which finally gives $\hat{a}_T^\herm\hat{H}_T(z)\hat{b}_T- \bar{a}_T^\herm \bar{H}_T(z)\bar{b}_T\asto 0$. For $z\in\mathcal C_\ell$, $z\bm(z^2)$ and $z\tilde{\bm}(z^2)$ are bounded by $[{\bf d}(\mathcal C_\ell,{\rm supp}(\bmu))]^{-1}$. Take $0<\varepsilon<{\bf d}(\mathcal C_\ell,{\rm supp}(\bmu))$. Then, for all large $T$, $zQ_T(z^2)$ and $z\tilde{Q}(z^2)$ are bounded by $\varepsilon^{-1}$ w.p. 1. The dominated convergence theorem therefore ensures that
	\begin{align*}
		a_{T}^{\sf H}\hat{\Pi}_{\ell,T}b_{T} - \frac1{\imath \pi}\oint_{\mathcal C_\ell} \bar{a}_T^\herm \bar{H}_T(z)^{-1}\bar{b}_T dz \asto 0.
	\end{align*}
	Residue calculus of the right-hand side integrand as in \cite[Equations~(10)-(11)]{HachLouMes'11} then gives
\begin{align*}
	a_{T}^{\sf H}\hat{\Pi}_{\ell,T}b_{T}-\frac{\bm({\brho}_\ell){\bg}({\brho}_\ell)}{\bg'(\brho_\ell)}a_{T}^{\sf H}\Pi_{\ell,T}b_{T}\toaslong 0.
\end{align*}
Take $i$ such that $\mathcal K(i)=\ell$. Using $\hat{\lambda}_{i,T}\asto \brho_{\ell}$, $\hat{m}_T(x)\asto \bm(x)$, $\hat{g}_T(x)\asto \bg(x)$, and $\hat{g}'_T(x)\asto \bg'(x)$ for $x$ outside the support of $\bmu$ then concludes the proof.
\end{proof}

\subsection{Narrowband array processing} 
\label{example-doa} 

We now apply the results of Section~\ref{detect-spikes} to the array processing model of Remark~\ref{rem:applications}. Consider a uniform linear array of $N$ antennas which captures $T$ successive realizations $y_1,\ldots,y_T$ of the random process:
\begin{equation} 
\label{input_output0}
y_t = \sum_{k=1}^K a_k h(\theta_k)s_{k,t} + v_t
\end{equation} 
with $a_1\geq \ldots\geq a_K > 0$ the amplitude of sources $1,\ldots,K$, $h(\theta)\in\CC^N$ the steering-vector function
\begin{equation}
h(\theta)=\frac{1}{\sqrt{N}} \left[1, e^{-2\imath \pi \sin \theta}, \ldots, e^{-2\imath \pi (N-1) \sin \theta} \right]^{\sf T} \label{a_i}
\end{equation}
with $\theta_k$ the angle-of-arrival of the signal from source $k$ (the $\theta_k$ are assumed distinct), $s_{k,t}\in\CC$ the signal emitted by source $k$ at time $t$ such that $(s_{t,k})_{t,k=1}^{\infty, K}$ is an infinite array of circular complex i.i.d. random variables with $\E s_{1,1} = 0$, $\E | s_{1,1} |^2 = 1$, and $\E | s_{1,1} |^8 < \infty$, and $v_t\in\CC^N$ the noise received at the sensor array at time $t$.

Denoting \revju{$Y_T=T^{-1/2}[y_1,\ldots,y_T]\in\CC^{N\times T}$}, \eqref{input_output0} reads
\begin{equation}
\label{model}
Y_T = H_T P^{1/2} S_T^{\sf H} + V_T 
\end{equation}
where $H_T=\left[h(\theta_1), h(\theta_2), \ldots, h(\theta_K) \right]\in\CC^{N\times K}$, $S_T = T^{-1/2} [ s_{t,k}^* ]_{t,k=1}^{T,K}\in\CC^{T\times K}$, $P= \diag(a_1^2, \ldots, a_K^2)$, and $V_T=T^{-1/2} [v_1,\ldots,v_T]\in\CC^{N\times T}$. We assume the rows of $\sqrt{T} V_T$ to be independent snapshots of a complex Gaussian circular causal $\text{ARMA}(m,n)$ stationary process. This process can be represented as the output of a filter with transfer function $\bp(z) = (1 + \alpha_1 z^{-1} + \ldots + \alpha_m z^{-m} ) / (1 + \beta_1 z^{-1} + \ldots + \beta_n z^{-n} )$ driven by a standard complex Gaussian circular white noise. For $|z| \geq 1$, $\bp(z) = \sum_{\ell=0}^\infty \psi_\ell z^{-\ell}$ where $\sum |\psi_\ell| < \infty$, and we can write $V_T = W_T R_T^{1/2}$ with $W_T$ as in Assumption~\ref{ass:gauss} and 
\begin{equation*}
R_T =\begin{bmatrix}
r_0 & r_{1} & \ldots & r_{T-1} \\ 
r_{-1} & \ddots & \ddots & \vdots \\ 
\vdots  & \ddots & \ddots & r_{1}\\ 
r_{1-T} & \ldots & r_{-1} & r_0 
\end{bmatrix} 
\end{equation*}
with $r_k = \sum_{\ell \geq 0} \psi_{\ell+k} \psi_\ell^*$ for any $k\in \N$, the matrix being nonnegative.


\begin{lemma}
\label{lm:model}
Under Assumption~\ref{ass:c}, the model \eqref{model} satisfies Assumptions~\ref{ass:gauss}--\ref{ass:sp} with $\bnu$ defined by
\begin{equation} 
\label{def-nu} 
\int g(t) \bnu(dt) = \int_0^1 g(|\bp(\exp(2\imath \pi u))|^2) \, du 
\end{equation} 
for every positive measurable function $g$, and with $P$ in Assumption~\ref{ass:sp} the matrix of the source powers $a_k^2$.
\end{lemma}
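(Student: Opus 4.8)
The goal is to verify Assumptions~\ref{ass:gauss}--\ref{ass:sp} for the ARMA array model~\eqref{model}. Assumption~\ref{ass:gauss} is built into the construction $V_T = W_T R_T^{1/2}$, so nothing is needed there. The plan is to proceed in three stages: (i) identify the limiting measure $\nu$ of the eigenvalues of the Toeplitz matrix $R_T$ and check the two conditions in Assumption~\ref{ass:R}; (ii) check the rank and boundedness conditions on the signal matrix $A_T = H_T P^{1/2} S_T^{\sf H}$ (Assumption~\ref{ass:A}); (iii) exhibit the factorization $A_T = U_T B_T^{\sf H}$ and prove the almost-sure convergence of $B_T^{\sf H}(R_T - z I_T)^{-1} B_T$ to $m_\nu(z) P$ (Assumption~\ref{ass:sp}).

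For stage (i), since $R_T$ is the $T\times T$ Hermitian Toeplitz matrix with symbol $|\bp(e^{2\imath\pi u})|^2$ (a bounded, nonnegative, continuous function of $u$ because $\sum|\psi_\ell|<\infty$), I would invoke Szeg\H{o}'s theorem: the empirical eigenvalue distribution $\nu_T$ converges weakly to the pushforward of the uniform measure on $[0,1]$ under $u\mapsto |\bp(e^{2\imath\pi u})|^2$, which is exactly~\eqref{def-nu}; this also shows $\support(\nu) = [a_\nu, b_\nu]$ with $a_\nu = \min_u |\bp|^2$, $b_\nu = \max_u |\bp|^2$, and since the symbol is strictly positive when the numerator of $\bp$ has no roots on the unit circle (a mild genericity condition one assumes), $\nu(\{0\}) = 0$. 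For the no-eigenvalue-outside-support condition (Assumption~\ref{ass:R}-\ref{R:noeig}), the eigenvalues of a Hermitian Toeplitz matrix with bounded symbol always lie within the essential range $[\,\mathrm{ess\,inf}, \mathrm{ess\,sup}\,]$ of the symbol, hence within $[a_\nu, b_\nu]$, so the distances are in fact zero for every $T$; this is an elementary bound via the Rayleigh quotient. Assumption~\ref{nu:edge}, though not part of the lemma's claim, is worth a remark: for a smooth symbol attaining its max at finitely many points with nonvanishing second derivative, $\nu$ has the required square-root-type or linear edge behavior.

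Stage (ii) is routine: $H_T$ has rank $K$ almost surely because the $\theta_k$ are distinct (the steering vectors $h(\theta_k)$ are a Vandermonde-type family, hence linearly independent for $N\geq K$), $P$ is invertible since $a_k>0$, and $S_T$ has rank $K$ w.p.\ 1 for large $T$ since its entries are i.i.d.\ continuous (the $K\times K$ Gram matrix $S_T^{\sf H}S_T$ is a.s.\ invertible). Thus $\rank(A_T) = K$ w.p.\ 1 for large $T$. For $\sup_T\|A_T\|<\infty$: $\|H_T\|$ is bounded since $H_T^{\sf H}H_T \to I_K$ (off-diagonal entries are Dirichlet-kernel sums tending to zero, using distinctness of $\sin\theta_k$), $\|P^{1/2}\|$ is a fixed constant, and $\|S_T\| = \|S_T^{\sf H}\|$ is a.s.\ bounded by the Bai--Yin theorem applied to the $T\times K$ i.i.d.\ matrix (finite eighth moment more than suffices). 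Hence Assumption~\ref{ass:A} holds.

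For stage (iii), I would take $U_T = H_T(H_T^{\sf H}H_T)^{-1/2}$ and $B_T = S_T P^{1/2}(H_T^{\sf H}H_T)^{1/2}$, so that $U_T^{\sf H}U_T = I_K$ and $A_T = U_T B_T^{\sf H}$. Then
\[
B_T^{\sf H}(R_T - zI_T)^{-1}B_T = (H_T^{\sf H}H_T)^{1/2} P^{1/2}\, S_T^{\sf H}(R_T - zI_T)^{-1}S_T\, P^{1/2}(H_T^{\sf H}H_T)^{1/2}.
\]
Since $(H_T^{\sf H}H_T)^{1/2}\to I_K$, it suffices to show $S_T^{\sf H}(R_T - zI_T)^{-1}S_T \asto m_\nu(z) I_K$ for each fixed $z\in\CC\setminus\support(\nu)$. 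This is the crux and the main obstacle: for fixed $z$, $(R_T - zI_T)^{-1}$ is a deterministic bounded matrix, and $S_T^{\sf H}MS_T$ with $M$ deterministic and $S_T$ having i.i.d.\ unit-variance entries concentrates around $(\tr M/T)\, I_K$ — the diagonal entries are $T^{-1}\sum_t M_{tt}|s_{t,k}|^2$-type sums converging to $T^{-1}\tr M$ by a law of large numbers (using $\E|s_{1,1}|^8<\infty$ and a Borel--Cantelli / truncation argument to upgrade to almost sure convergence along the subsequence $N = N(T)$), and the off-diagonal entries are mean-zero quadratic forms vanishing a.s.\ by the same device. Finally $T^{-1}\tr(R_T - zI_T)^{-1} = \int (t-z)^{-1}\nu_T(dt) \to \int(t-z)^{-1}\nu(dt) = m_\nu(z)$ by Szeg\H{o}'s theorem (the integrand is bounded and continuous on $\support(\nu)$ since $z\notin\support(\nu)$). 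Combining, $B_T^{\sf H}(R_T-zI_T)^{-1}B_T \asto m_\nu(z)\, I_K\cdot P = m_\nu(z)P$, so Assumption~\ref{ass:sp} holds with the stated $P$ (here $t = K$, all $j_\ell = 1$ if the $a_k^2$ are distinct, with obvious modification otherwise). The only delicate point is ensuring the a.s.\ convergence holds simultaneously for enough values of $z$ to pin down the analytic function — but analyticity lets one conclude from a.s.\ convergence at a countable dense set, and the trace convergence needed in Assumption~\ref{ass:sp} is only required pointwise in $z$ anyway.
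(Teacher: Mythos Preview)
Your approach matches the paper's proof essentially step for step: Szeg\H{o}'s theorem and the Toeplitz eigenvalue bounds for Assumption~\ref{ass:R}, the factorization $U_T=H_T(H_T^{\sf H}H_T)^{-1/2}$, $B_T=S_TP^{1/2}(H_T^{\sf H}H_T)^{1/2}$, and the concentration of $S_T^{\sf H}(R_T-zI_T)^{-1}S_T$ around $T^{-1}\tr[(R_T-zI_T)^{-1}]\,I_K$ (the paper makes this last step precise via the Bai--Silverstein quadratic-form lemma together with the eighth-moment hypothesis to obtain a $C/T^2$ fourth-moment bound, then Markov plus Borel--Cantelli). One correction: Assumption~\ref{nu:edge} \emph{is} part of the lemma's claim, since it sits between Assumptions~\ref{ass:gauss} and~\ref{ass:sp} in the numbering; the paper handles it by computing the density $f_\nu(s)=\sum_{u:\,q(u)=s}|q'(u)|^{-1}$ and observing that it diverges as $s\uparrow b_\nu$, so your parenthetical remark about the edge behavior should be promoted to an actual argument.
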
 
\begin{proof}
	We start with Assumptions~\ref{ass:R} and \ref{nu:edge}. If $m=n=0$, then $\bnu = \delta_1$ and these assumptions are trivially satisfied. Assume $\revju{\max}(m,n) > 0$. Then Assumption~\ref{ass:R}--\ref{R:msl}) is a well known result on the spectral behavior of large Toeplitz matrices \cite{Gray'06, GrenSze'84}. The support of $\bnu$ is the compact \revju{non-singleton} interval $[a_\bnu, b_\bnu] = [ \min_u q(u), \max_u q(u) ]$, $q(u) \triangleq |\bp(\exp(2\imath \pi u))|^2$.  It is also well known \cite[\S 4.2]{Gray'06} that $a_\bnu \leq \sigma^2_{t,T} \leq b_\bnu$, so that Assumption \ref{ass:R}--\ref{R:noeig}) is satisfied. Since $\bp(z)$ is ARMA, for $g(t)$ the indicator function on a set of Lebesgue measure zero, the right hand side of \eqref{def-nu} is zero. Hence $\bnu$ has a density $f_\bnu$ with respect to the Lebesgue measure. Let us provide the expression of $f_\bnu$ at a point $s \in (a_\bnu, b_\bnu)$ such that for any $u$ for which $q(u) = s$, $q'(u) \neq 0$. In a neighborhood of any of these $u$, $q$ has a local inverse that we denote $q^{(-1)}_u$. Then, for $\varepsilon > 0$ small enough,
\begin{align*} 
\bnu(s-\varepsilon, s+\varepsilon) 
&= 
\int_{t \, : \, q(t) \in [s-\varepsilon, s+\varepsilon]} dt = 
\sum_{u \, : \, q(u) = s} \int_{[s-\varepsilon, s+\varepsilon]} 
\frac{1}{\Bigl| q'(q^{(-1)}_u(v)) \Bigr|} \, dv 
\end{align*} 
by the variable change $q(t) = v$. Letting $\varepsilon \downarrow 0$, 
we obtain
\begin{equation*}
\lim_{\varepsilon\downarrow 0} 
\frac{\bnu(s-\varepsilon, s+\varepsilon)}{2\varepsilon} = 
\sum_{u \, : \, q(u) = s} 
\frac{1}{| q'(u) |} 
= f_\bnu(s) . 
\end{equation*}
This proves $f_\bnu(s) \to \infty$ as $s \uparrow b_\bnu$, implying Assumption~\ref{nu:edge}. \\
We now turn to Assumptions~\ref{ass:A} and \ref{ass:sp}. Since the $\theta_i$ are distinct (modulo $\pi$), $H_T^{\sf H} H_T \to I_K$. By the law of large numbers, $S_T^{\sf H} S_T \toaslong I_K$. Hence $\rank(A_T) = K$ w.p.~1 for
all large $T$, and $\sup_T \| A_T \| < \infty$ w.p.~1. Let us write $A_T = U_T B_T^{\sf H}$ where $U_T = H_T (H_T^{\sf H} H_T)^{-1/2}$ and where $B_T = S_T P^{1/2} (H_T^{\sf H} H_T)^{1/2}$. By \cite[Lemma 2.7]{BaiSil'98} and $\E|s_{1,1}|^8<\infty$, for any $z \in \C_+$ and any $1\leq i,j\leq K$, 
\begin{equation*} 
\E \Bigl| \Bigl[S_T^\herm ( R_T - z I_T)^{-1} S_T - \frac{\tr[ ( R_T - z I_T)^{-1}]}{T} I_K \Bigr]_{i,j} \Bigr|^4 \leq \frac{C}{T^2} 
\end{equation*} 
for some $C > 0$. By Markov's inequality, the argument of $\E| \cdot |^4$ converges to zero w.p. 1, and this convergence can be extended to $\C - \support(\bmu)$. Since $T^{-1} \tr[ ( R_T - z I_T)^{-1}] \to m_{\bnu}(z)$ for $z \in \C - \support(\bnu)$, Assumption~\ref{ass:sp} is satisfied. 
\end{proof}

With these results, Lemma~\ref{lm:model} and Theorems~\ref{th:detection} and \ref{th:estim-p} lead to the following inference methods: 
\begin{proposition} 
\label{1stord-M} 
Consider the model \eqref{model}. Let $k\geq 0$ be the largest integer for which (take $a_0=\infty$) 
\begin{equation}
\label{detectability} 
a^2_k > \left( \int_0^1 \frac{-m_b}{1 + \bc m_b 
\, |\bp(\exp(2\imath \pi u))|^2} \, du \right)^{-1} 
\end{equation} 
with $m_b\in( - (\bc \max_u |\bp(\exp(2\imath \pi u))|^2 )^{-1}, 0)$ the solution of
\begin{equation*}
\int_0^1 \left( \frac{m \, |\bp(\exp(2\imath \pi u))|^2}
{1+\bc m \, |\bp(\exp(2\imath \pi u))|^2} \right)^2 du \ = \ 
\frac{1}{\bc}.
\end{equation*}
Given $L \geq K$ and $\varepsilon>0$, define (with $\hat\lambda_{0,T}=\infty$)
\begin{equation*}
\hat k_T = \arg\max_{m \in \{0,\ldots, L\}} 
\frac{\hat\lambda_{m,T}}{\hat\lambda_{m+1,T}} > 1 + \varepsilon . 
\end{equation*}
Then $\hat k_T = k$ w.p. 1 for all large $T$ and $\varepsilon$ small enough. Moreover, for $i=1,\ldots, \hat k_T$ let  $\hat a^2_{i,T} \triangleq (\hat g_T(\hat\lambda_{i,T}))^{-1}$ with $\hat g_T(\hat\lambda_{i,T})$ as in Theorem~\ref{th:estim-p}. Then 
\begin{equation*}
\hat a^2_{i,T} \asto a^2_{i}.
\end{equation*}
\end{proposition}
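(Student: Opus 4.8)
The plan is to derive Proposition~\ref{1stord-M} as a direct specialization of Theorems~\ref{th:detection} and \ref{th:estim-p} using the identification of $\nu$ and $P$ provided by Lemma~\ref{lm:model}. First I would invoke Lemma~\ref{lm:model} to assert that the model \eqref{model} satisfies Assumptions~\ref{ass:c}--\ref{ass:sp}, with $\nu$ given by the pushforward of the uniform measure on $[0,1]$ under $u \mapsto |\bp(\exp(2\imath\pi u))|^2$ and with the matrix $P$ of Assumption~\ref{ass:sp} equal to $\diag(a_1^2,\ldots,a_K^2)$. Here one must note the minor subtlety that the source signal assumptions in \eqref{input_output0} (circular i.i.d., zero mean, unit variance, finite eighth moment) are exactly those required in the proof of Lemma~\ref{lm:model}; in particular the distinct angles give $H_T^{\sf H}H_T\to I_K$, so the ordering $a_1\geq\cdots\geq a_K$ means the blocks $j_\ell$ of $P$ may be nontrivial only when some powers coincide, and $\mathcal K(i)$ collapses appropriately.

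Next I would translate the spike condition. Plugging $\nu$ from \eqref{def-nu} into Equation~\eqref{cond:spike} of Theorem~\ref{th:1stord}-4) and into Equation~\eqref{eq:m(b+)} of Corollary~\ref{cor:edge} turns the abstract integrals $\int(\cdot)\,\nu(dt)$ into the explicit integrals $\int_0^1(\cdot)\,du$ with $t$ replaced by $|\bp(\exp(2\imath\pi u))|^2$, and $b_\nu = \max_u |\bp(\exp(2\imath\pi u))|^2$; this yields exactly the detectability condition \eqref{detectability} and the fixed-point equation for $m_b$ stated in the proposition. Because $p_{\mathcal K(i)} = a_i^2$ and the powers are already sorted in decreasing order, the largest index $s$ of Theorem~\ref{th:1stord} for which \eqref{cond:spike} holds translates into the largest integer $k$ for which \eqref{detectability} holds, in the sense that $j_1+\cdots+j_s = k$.

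Then the detection claim $\hat k_T = k$ w.p.~1 for all large $T$ and $\varepsilon$ small follows verbatim from Theorem~\ref{th:detection}: there exists $\varepsilon_0 = (\brho_s/b) - 1 > 0$ (with $\brho_s$, $b$ as in Theorem~\ref{th:lsm} and Theorem~\ref{th:1stord}) such that any $0 < \varepsilon < \varepsilon_0$ works, and $j_1+\cdots+j_s = k$ by the previous paragraph. For the power estimates, Theorem~\ref{th:estim-p} gives $\hat p_{i,T} - p_{\mathcal K(i)}\toaslong 0$ with $\hat p_{i,T} = (\hat g_T(\hat\lambda_{i,T}))^{-1}$; since $\hat a_{i,T}^2$ is defined by the same formula and $p_{\mathcal K(i)} = a_i^2$, this is precisely $\hat a_{i,T}^2 \asto a_i^2$.

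The proof is therefore essentially a bookkeeping exercise, and no single step presents a genuine obstacle. The point requiring the most care is the translation of the measure-theoretic integrals against $\nu$ into the explicit $du$-integrals over the unit circle: one must check that the change of variables is applied to the right integrands (in \eqref{cond:spike} after the substitution $\bg(b^+) = \int \frac{-m_b}{1+\bc m_b t}\nu(dt)$ recorded in the proof of Theorem~\ref{th:1stord}, and in \eqref{eq:m(b+)}) and that the interval $(-(\bc b_\nu)^{-1},0)$ for $m_b$ is correctly rewritten with $b_\nu = \max_u |\bp(\exp(2\imath\pi u))|^2$. A secondary point is verifying that Assumption~\ref{nu:edge} holds so that Corollary~\ref{cor:edge} is applicable — but this is already established inside the proof of Lemma~\ref{lm:model}, where $f_\nu(s)\to\infty$ as $s\uparrow b_\nu$ is shown, which a fortiori implies the lower bound $f_\nu(t)\geq C(b_\nu - t)$ required there.
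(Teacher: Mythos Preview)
Your proposal is correct and matches the paper's approach exactly: the paper does not give a separate proof of Proposition~\ref{1stord-M} but simply states that it follows from Lemma~\ref{lm:model} together with Theorems~\ref{th:detection} and~\ref{th:estim-p}, which is precisely the route you spell out. Your additional bookkeeping (the explicit change of variables $\int(\cdot)\,\nu(dt)=\int_0^1(\cdot)\,du$, the identification $b_\nu=\max_u|\bp(e^{2\imath\pi u})|^2$, and the check that Assumption~\ref{nu:edge} is available via Lemma~\ref{lm:model}) is accurate and fills in details the paper leaves implicit.
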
 

From Theorem~\ref{th:quad_forms}, we now provide a source localization method based on MUSIC \cite{Schm'86}. Recall that MUSIC exploits the fact that $h(\theta_i)^\herm (I_N - \Pi_{1,T}^\ell)h(\theta_i)=0$ with $\Pi_{1,T}^\ell$ a projector on the subspace generated by $h(\theta_1),\ldots,h(\theta_\ell)$ for any $i\leq \ell\leq K$. Since $\Vert h(\theta)\Vert =1$, $\theta_1,\ldots,\theta_\ell$ are the arguments of the local maxima of 
\begin{equation*}
\gamma^\ell_T(\theta)\triangleq h(\theta)^\herm\Pi_{1,T}^\ell h(\theta).
\end{equation*}

\begin{proposition}
	\label{prop:GMUSIC}
	Let $k$ and $\hat{k}_T$ be as in Proposition~\ref{1stord-M} and denote $\hat{u}_{1,T},\ldots,\hat{u}_{\hat{k}_T,T}$ the eigenvectors of $Y_TY_T^\herm$ with respective eigenvalues $\hat{\lambda}_{1,T},\ldots,\hat{\lambda}_{\hat{k}_T,T}$. Then, for $\theta\in[-\pi/2,\pi/2]$,
\begin{align*}
	\gamma^{k}_T(\theta) - \hat{\gamma}^{\hat{k}_T}_T(\theta) \asto 0
\end{align*}
where 
\begin{align*}
	\gamma^k_T(\theta) &\triangleq h(\theta)^\herm \Pi_{1,T}^k h(\theta) \\
	\hat{\gamma}^{\hat{k}_T}_T(\theta) &\triangleq \sum_{j=1}^{\hat{k}_T} \frac{\hat{g}_T'(\hat{\lambda}_{j,T})}{\hat{m}_T(\hat{\lambda}_{j,T})\hat{g}_T(\hat{\lambda}_{j,T})} h(\theta)^\herm \hat{u}_{j,T}\hat{u}_{j,T}^\herm h(\theta).
\end{align*}
\end{proposition}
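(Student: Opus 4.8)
The plan is to deduce Proposition~\ref{prop:GMUSIC} from Theorem~\ref{th:quad_forms} by a careful bookkeeping of the cluster structure together with a uniformity argument over $\theta$. First I would observe that, in the array-processing model of \eqref{model}, Lemma~\ref{lm:model} guarantees that Assumptions~\ref{ass:c}--\ref{ass:sp} hold with $P=\diag(a_1^2,\ldots,a_K^2)$; the distinctness of the $\theta_k$ forces $H_T^{\sf H}H_T\to I_K$, so generically the powers $a_1^2>\cdots>a_K^2$ are all \emph{simple}, giving $j_1=\cdots=j_K=1$ and $t=K$. (If some powers coincide one argues with the corresponding block projectors; the statement is written for the simple case since then $\hat\Pi_{\ell,T}=\hat u_{\ell,T}\hat u_{\ell,T}^{\sf H}$ and $\Pi_{\ell,T}=u_{\ell,T}u_{\ell,T}^{\sf H}$.) Under this identification, for each $j\le k=\bs k$ Theorem~\ref{th:quad_forms} applied with $a_T=b_T=h(\theta)$ yields, pointwise in $\theta$,
\begin{equation*}
	h(\theta)^{\sf H}\Pi_{j,T}h(\theta)-\frac{\hat g_T'(\hat\lambda_{j,T})}{\hat m_T(\hat\lambda_{j,T})\hat g_T(\hat\lambda_{j,T})}\,h(\theta)^{\sf H}\hat u_{j,T}\hat u_{j,T}^{\sf H}h(\theta)\toaslong 0 .
\end{equation*}
Summing over $j=1,\ldots,k$ and noting $\Pi_{1,T}^k=\sum_{j=1}^k\Pi_{j,T}$ (the rank-$k$ signal projector is the sum of the individual rank-one projectors onto the $u_{j,T}$, since the $u_{j,T}$ span the same space as $h(\theta_1),\ldots,h(\theta_k)$) gives the claimed convergence for each fixed $\theta$, once we have also used $\hat k_T\to k$ w.p.~1 (Proposition~\ref{1stord-M}) so that for large $T$ the sum in $\hat\gamma_T^{\hat k_T}$ has exactly $k$ terms.

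Next I would upgrade pointwise convergence to uniform convergence on $\theta\in[-\pi/2,\pi/2]$. The key point is that $\theta\mapsto h(\theta)$ is Lipschitz with a constant independent of $N$ after the $1/\sqrt N$ normalization — indeed $\|h(\theta)-h(\theta')\|\le C|\theta-\theta'|$ uniformly in $N$ — so both $\gamma_T^k(\theta)$ and $\hat\gamma_T^{\hat k_T}(\theta)$ are Lipschitz in $\theta$ with constants bounded w.p.~1 (the scalar prefactors $\hat g_T'(\hat\lambda_{j,T})/(\hat m_T(\hat\lambda_{j,T})\hat g_T(\hat\lambda_{j,T}))$ converge to the finite limits $\bg'(\brho_j)^{-1}\bm(\brho_j)^{-1}\bg(\brho_j)^{-1}\cdots$ — more precisely to $\bm(\brho_j)\bg(\brho_j)/\bg'(\brho_j)$ as computed in the proof of Theorem~\ref{th:quad_forms} — hence stay bounded). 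An equicontinuity plus pointwise-convergence-on-a-countable-dense-set argument (Arzelà--Ascoli type) then promotes the a.s. pointwise statement to an a.s. uniform statement; alternatively one covers $[-\pi/2,\pi/2]$ by finitely many points at spacing $\eta$, controls the oscillation between them by the uniform Lipschitz bound, and lets $\eta\to0$.

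The main obstacle I anticipate is precisely this uniformity step: Theorem~\ref{th:quad_forms} is stated for \emph{fixed} deterministic sequences $a_T,b_T$, and a naive union bound over an uncountable family of directions $h(\theta)$ does not work. The resolution is the uniform-in-$N$ Lipschitz property of $\theta\mapsto h(\theta)$ noted above together with the fact that the exceptional null set in Theorem~\ref{th:quad_forms} can be taken from the single probability-one event on which $\hat\lambda_{i,T}\to\brho_{\mathcal K(i)}$, $\hat m_T\to\bm$, $\hat g_T\to\bg$ and $\hat g_T'\to\bg'$ uniformly on compacts outside $\support(\mu)$, and on which the contour-integral representation \eqref{eq:quad_form} holds; on that event the convergence in Theorem~\ref{th:quad_forms} is in fact uniform over any family of unit vectors indexed in a compact set through a Lipschitz map, because the bounds in \cite[Lemmas~4.1--4.6]{ChapCouHac'12} used in that proof depend on $a_T,b_T$ only through $\|a_T\|,\|b_T\|$. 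A secondary, more routine point is handling the event $\{\hat k_T\neq k\}$, which has probability zero for all large $T$ by Proposition~\ref{1stord-M}, so it does not affect the almost-sure limit. Putting these together yields $\sup_{\theta\in[-\pi/2,\pi/2]}|\gamma_T^k(\theta)-\hat\gamma_T^{\hat k_T}(\theta)|\toaslong 0$, which in particular gives the stated pointwise conclusion and, more usefully, the consistency of the local maxima of $\hat\gamma_T^{\hat k_T}$ as estimates of $\theta_1,\ldots,\theta_k$.
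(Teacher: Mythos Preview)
Your core argument---verify Assumptions~\ref{ass:c}--\ref{ass:sp} via Lemma~\ref{lm:model}, apply Theorem~\ref{th:quad_forms} with $a_T=b_T=h(\theta)$ to each signal cluster, sum, and absorb the event $\{\hat k_T\neq k\}$---is exactly the paper's route. Two points deserve correction.

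First, your identity $\Pi_{1,T}^k=\sum_{j=1}^k\Pi_{j,T}$ is not exact. By definition $\Pi_{j,T}=U_{j,T}U_{j,T}^{\sf H}$ with $U_T=H_T(H_T^{\sf H}H_T)^{-1/2}$, so $\sum_{j\le k}\Pi_{j,T}=U_TJU_T^{\sf H}$ with $J=\diag(I_k,0)$. Because $(H_T^{\sf H}H_T)^{-1/2}$ mixes all $K$ columns of $H_T$, the span of the first $k$ columns of $U_T$ is \emph{not} the span of $h(\theta_1),\ldots,h(\theta_k)$; hence $U_TJU_T^{\sf H}\neq\Pi_{1,T}^k$. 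The paper closes this gap by the extra step $(H_T^{\sf H}H_T)^{-1/2}J(H_T^{\sf H}H_T)^{-1/2}\to J$, which yields $h(\theta)^{\sf H}(U_TJU_T^{\sf H}-\Pi_{1,T}^k)h(\theta)\to0$. You should insert this step rather than assert equality. (Incidentally, the paper does not assume simple powers; the block version of Theorem~\ref{th:quad_forms} already delivers $U_TJU_T^{\sf H}$ directly.)

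Second, your uniformity upgrade is both unnecessary and flawed. The proposition is a pointwise statement, and the paper proves only that. Your argument rests on ``$\|h(\theta)-h(\theta')\|\le C|\theta-\theta'|$ uniformly in $N$'', but with $[h(\theta)]_n=N^{-1/2}e^{-2\pi\imath(n-1)\sin\theta}$ one computes $\|\partial_\theta h(\theta)\|^2=\tfrac{4\pi^2\cos^2\theta}{N}\sum_{n=0}^{N-1}n^2\sim\tfrac{4\pi^2}{3}N^2$, so the Lipschitz constant grows like $N$, not $O(1)$. The equicontinuity/Arzel\`a--Ascoli step therefore fails as written. If you want uniformity (useful for the MUSIC application), you would need a different mechanism---for instance, working with a grid whose spacing shrinks like $o(1/N)$ together with quantitative control of the error in Theorem~\ref{th:quad_forms}---but this is outside the scope of the proposition.
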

\begin{proof}
	Lemma~\ref{lm:model} ensures that Assumptions~\ref{ass:c}--\ref{ass:sp} are satisfied, so Theorem~\ref{th:quad_forms} can be applied for each $i\leq k$. Taking $a_T=b_T=h(\theta)$ and $U_T=H_T(H_T^\herm H_T)^{-1/2}$ as in Theorem~\ref{th:quad_forms}, we obtain the desired result for $U_T J U_T^\herm$, $J=\diag(I_{k},0)$, instead of $\Pi_{1,T}^k$. As $(H_T^\herm H_T)^{-1/2}J(H_T^\herm H_T)^{-1/2}\to J$ and $h(\theta)^\herm(H_TJH_T^\herm-\Pi_{1,T}^k)h(\theta)\to 0$, we have $h(\theta)^\herm\Pi_{1,T}^k h(\theta)-h(\theta)^\herm U_TJ U_T^\herm h(\theta)\to 0$, completing the proof.
\end{proof}

Proposition~\ref{prop:GMUSIC} ensures that $\hat{\gamma}^{\hat{k}_T}_T(\theta)$ is a consistent estimator of the localization function $\gamma^k_T(\theta)$. The \revju{improved} MUSIC algorithm we therefore propose consists in estimating $\theta_1,\ldots,\theta_k$ as the arguments of the $\hat{k}_T$ highest maxima of $\hat{\gamma}^{\hat{k}_T}_T(\theta)$. Observe that, although the system models differ in both articles, the MUSIC estimator proposed here takes the same form as that provided in \cite{HachLouMes'11}. This remark would not hold if it were not for Assumption~\ref{ass:sp}.

\revju{Note also that, as $\bc\to 0$, $\bg'(x)\bm(x)^{-1}\bg(x)^{-1}\to 1$ for all real $x\neq \int t \bnu(dt)$, so that the improved MUSIC algorithm proposed reduces to the standard large $T$ MUSIC approach.}

\section{Second order performance analysis} 
\label{2nd-ord}
In this section, we discuss the asymptotic (second order) performance of the detection and estimation schemes derived in Section~\ref{1st-ord}. The model of Section~\ref{example-doa} is considered. Following the notations of Section~\ref{detect-spikes}, we gather the source powers $a_k^2$ in groups of equal powers $p_1>...>p_t$ with respective multiplicities $j_1,\ldots,j_t$.
\subsection{Main results}
We start by studying the fluctuations of the isolated eigenvalues of $Y_T Y_T^{\sf H}$. Recall the definition of $\nu_T$ in Assumption~\ref{ass:R} and recall that $c_T = N/T$. Replacing $\bnu$ and $\bc$ with $\nu_T$ and $c_T$, respectively, in Theorem~\ref{th:lsm}, we obtain that
\begin{equation}
m_T(z) = \left( -z + \int \frac{t}{1 + c_T m_T(z) t} \nu_T(dt) \right)^{-1}
\label{m_T}
\end{equation}
uniquely defines the ST $m_T(z)$ of a probability measure $\mu_T$ supported by $\R_+$. In addition, $\mu_T$ converges weakly to $\bmu$ as $T \to\infty$; the Hausdorff distance between the supports of these two measures converges to zero \cite{SilvBai'95,BaiSil'98} and, for each $b' > b$, $m_T(z)$ is analytic on $\C - [0,b']$ for all large $T$.
Let 
\begin{align*}
\tilde m_T(z) &= 
\int\frac{-1}{z(1 + c_T m_T(z) t)} \nu_T(dt) = 
\frac{-1}{zT} \tr ( I_T + c_T m_T(z) R_T )^{-1} .
\end{align*} 

Similarly to Theorem \ref{th:lsm}-\ref{st-coresolv}), $\tilde m_T(z)$ satisfies $\tilde m_T(z) = c_T m_T(z) - (1 - c_T) / z$.
Consequently, for all $T$ large, $g_T(x) \triangleq x m_T(x) \tilde m_T(x)$ is defined on $(b',\infty)$, $b' > b$, and, for any $k$ such that $p_k \bg(b^+) > 1$, $p_k g_T(x) = 1$ has a unique solution $\rho_{k,T}$ in $(b, \infty)$. 

The main result of this section (Theorem~\ref{th-fluct-spikes}) describes the fluctuations of $\hat\lambda_{i,T} - \rho_{{\cal K}(i),T}$, $i\leq s$, with $s$ the largest integer satisfying \eqref{cond:spike}. \revju{We start by introducing the important quantity ${\bDelta}(x)$.}
\begin{lemma}
\label{lmDelta}
Consider the model \eqref{model}. Then the function
\begin{equation*}
{\bDelta}(x) = 
1 - \bc \int \left(\frac{\bm(x) t}
{1+\bc\bm(x) t}\right)^2 \bnu(dt)
\end{equation*}
is defined and positive on $(b, \infty)$. Furthermore, ${\bDelta}(x) \to 0$
as $x \downarrow b$ and ${\bDelta}(x) \to 1$ as $x\to\infty$. 
\end{lemma}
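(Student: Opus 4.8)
The plan is to recognise $\bDelta(x)$ as $\bm(x)^2\,\bx'(\bm(x))$, where $\bx(m)=-m^{-1}+\int t(1+\bc m t)^{-1}\,\nu(dt)$ is the function of Proposition~\ref{prop:edge}, and then to read off all three assertions from elementary properties of $\bm$ together with the edge equation \eqref{eq:m(b+)}. Throughout I would use that, by Lemma~\ref{lm:model}, the model \eqref{model} satisfies Assumption~\ref{nu:edge}, so that Corollary~\ref{cor:edge} applies: $m_b$ is well defined, lies in the \emph{open} interval $(-(\bc b_\nu)^{-1},0)$, and solves \eqref{eq:m(b+)}. I would also use that $\bm$ is negative and increasing on $(b,\infty)$, maps it onto $(m_b,0)$ with $\bm(x)\to m_b$ as $x\downarrow b$ and $\bm(x)\to 0$ as $x\to\infty$, and satisfies $\bx(\bm(x))=x$ on $(b,\infty)$ by Proposition~\ref{prop:edge}.

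The first step is to check that $\bDelta$ is well defined on $(b,\infty)$ and to record a uniform bound on its integrand. For $x>b$ and $t\in\support(\nu)=[a_\nu,b_\nu]$ the map $t\mapsto 1+\bc\bm(x)t$ is nonincreasing (since $\bm(x)<0$), so
\[
1+\bc\bm(x)t\ \ge\ 1+\bc\bm(x)b_\nu\ >\ 1+\bc m_b b_\nu\ >\ 0,
\]
the last inequality being exactly $m_b>-(\bc b_\nu)^{-1}$. Hence the integrand of $\bDelta(x)$ is bounded and continuous on $\support(\nu)$, so $\bDelta(x)$ is finite, and moreover $\big(\bm(x)t/(1+\bc\bm(x)t)\big)^2\le\big(m_b b_\nu/(1+\bc m_b b_\nu)\big)^2$ uniformly in $x>b$ and $t\in\support(\nu)$ — a bound I would reuse below.

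Next I would differentiate $\bx$ under the integral sign (licit on compact subintervals of $(m_b,0)$ by the previous bound), obtaining $\bx'(m)=m^{-2}-\bc\int t^2(1+\bc m t)^{-2}\,\nu(dt)$, so $m^2\bx'(m)=1-\bc\int\big(mt/(1+\bc m t)\big)^2\,\nu(dt)$; substituting $m=\bm(x)$ gives $\bDelta(x)=\bm(x)^2\bx'(\bm(x))$. Since $\bx\circ\bm=\mathrm{id}$ on $(b,\infty)$ and $\bm'(x)=\int(t-x)^{-2}\mu(dt)\in(0,\infty)$ there, differentiating $\bx(\bm(x))=x$ yields $\bx'(\bm(x))=1/\bm'(x)>0$, whence $\bDelta(x)=\bm(x)^2/\bm'(x)>0$ on $(b,\infty)$ (using $\bm(x)\ne0$). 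This settles existence and positivity at once and, incidentally, shows $\bx'>0$ on $(m_b,0)$. For the limits: as $x\downarrow b$, $\bm(x)\to m_b$, and dominated convergence (the integrand $t^2(1+\bc m t)^{-2}$ being bounded by $b_\nu^2/(1+\bc m_b b_\nu)^2$ uniformly for $m\in[m_b,0)$) gives $\bx'(\bm(x))\to\bx'(m_b)=m_b^{-2}\big(1-\bc\int(m_b t/(1+\bc m_b t))^2\nu(dt)\big)$, which vanishes because $m_b$ solves \eqref{eq:m(b+)}; thus $\bDelta(x)=\bm(x)^2\bx'(\bm(x))\to m_b^2\cdot0=0$. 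As $x\to\infty$, $\bm(x)\to0$, the integrand of $\bDelta$ tends to $0$ pointwise on $\support(\nu)$, and the uniform bound plus dominated convergence give $\bc\int\big(\bm(x)t/(1+\bc\bm(x)t)\big)^2\nu(dt)\to0$, i.e.\ $\bDelta(x)\to1$.

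The only genuinely delicate point is the first one: one must guarantee that the denominators $1+\bc\bm(x)t$ stay bounded away from $0$ on $\support(\nu)$, uniformly over all $x>b$ and in particular as $x\downarrow b$ — this is precisely what the \emph{strict} membership $m_b\in(-(\bc b_\nu)^{-1},0)$ furnished by Corollary~\ref{cor:edge} provides. After that, the proof is just the differentiation of $\bx$, the algebraic identity $\bDelta=\bm^2/\bm'$, and two routine dominated-convergence passages.
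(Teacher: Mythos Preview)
Your proof is correct and follows essentially the same route as the paper: both hinge on the identity $\bDelta(x)=\bm(x)^2/\bm'(x)$ (equivalently $\bDelta(x)=\bm(x)^2\,\bx'(\bm(x))$) obtained from the inverse relation between $\bm$ and $\bx$, and both use $\bx'(m_b)=0$ for the limit at $b$. The only cosmetic difference is at $x\to\infty$, where the paper rewrites $\bDelta(x)=(x\bm(x))^2/(x^2\bm'(x))$ and sends both factors to $1$, while you apply dominated convergence directly to the integral; your version is if anything slightly more direct.
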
 
\begin{proof}
	See Appendix~\ref{prf-Delta}.
\end{proof}

\begin{theorem}
\label{th-fluct-spikes}
Consider \eqref{model} with the assumptions of Section~\ref{example-doa}. Assume in addition $\E [s_{1,1}^u (s_{1,1}^*)^v] = 0$ for $u+v \leq 4$ and $u\neq v$, and let $\kappa \triangleq \E|s_{1,1}|^4 - 2$. Let $s$ be the largest integer (assumed $\geq 1$) for which \eqref{detectability} holds. For $k=1,\ldots, s$ and all $T$ large, let $\rho_{k,T}$ be the unique solution in $(b, \infty)$ of $p_k g_T(x) = 1$. Define (with $j_0=0$)
\begin{equation*}
\eta_{k,T} = \sqrt{T} \left( 
\begin{bmatrix} \hat\lambda_{j_1+\cdots+j_{k-1}+1,T} \\ 
\vdots \\
\hat\lambda_{j_1+\cdots+j_{k},T} 
\end{bmatrix} 
- 
\rho_{k,T} \begin{bmatrix} 1 \\ \vdots \\ 1 \end{bmatrix} 
\right),
\end{equation*}
\begin{align*} 
\alpha_k &= 
\frac{\bm^2(\brho_k)}{{\bDelta}(\brho_k)}\left[
\int \frac{t^2+2p_k t}{(1+\bc\bm(\brho_k)t)^2}\bnu(dt) \right. \\
& \left. \phantom{=\frac{\bm^2(\brho_k)}{ {\bDelta}(\brho_k)}}
+ \bc \Bigl( \int \frac{p_k\bm(\brho_k)t}
{(1+\bc\bm(\brho_k)t)^2}\bnu(dt) \Bigr)^2 \right],  \\ 
\beta_k &= \int \frac{p_k^2 \bm(\brho_k)^2}
{(1 + \bc \bm(\brho_k) t)^2} \bnu(dt), \quad \text{and} \\ 
\phi_k &= \left( \int \frac{p_k \bm(\brho_k)}
{1 + \bc \bm(\brho_k) t} \bnu(dt) \right)^2 .
\end{align*}
Let $M_1, \ldots, M_s$, $M_k = [ M_{\ell,m,k} ]_{1\leq\ell,m\leq j_k}$, be random independent Hermitian matrices such that $\{ M_{\ell,m,k} \}_{\ell\leq m}$ are independent, $M_{\ell,\ell,k} \sim {\mathcal N}(0, \alpha_k+\beta_k+\kappa\phi_k)$, and $M_{\ell,m,k} \sim \mathcal{CN}(0, \alpha_k+\beta_k)$ for $1\leq \ell < m \leq j_k$. Let $\chi_k$ be the $\R^{j_k}-$valued vector of the decreasingly ordered eigenvalues of $(p_k \bg'(\brho_k))^{-1} M_k$. Then
\begin{align*}
(\eta_{1,T}, \ldots, \eta_{s,T})\tolawlong (\chi_1,\ldots, \chi_s).
\end{align*}
\end{theorem}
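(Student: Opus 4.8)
The plan is to follow the strategy of subspace-perturbation arguments for spiked models, adapting the contour-integral representation already used in the proof of Theorem~\ref{th:quad_forms}. First I would fix $k\in\{1,\ldots,s\}$ and note that the $j_k$ eigenvalues $\hat\lambda_{j_1+\cdots+j_{k-1}+1,T},\ldots,\hat\lambda_{j_1+\cdots+j_k,T}$ all converge to $\brho_k$; by Theorem~\ref{th:1stord}-\ref{eig-above}) together with the convergence of the supports of $\mu_T$ to that of $\mu$, for all large $T$ these are exactly the eigenvalues of $Y_TY_T^{\sf H}$ enclosed by a small fixed positively oriented contour $\mathcal C_k$ around $\brho_k$, and no other eigenvalue lies inside. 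Writing $\hat\lambda$ for one of these eigenvalues, $\det(Y_TY_T^{\sf H}-\hat\lambda I_N)=0$, which by the same Schur-complement identity underlying \eqref{eq:quad_form} is equivalent to $\det \hat H_T(\sqrt{\hat\lambda})=0$. The key reduction is therefore that $\eta_{k,T}$ is $\sqrt T$ times the vector of deviations from $\rho_{k,T}$ of the $j_k$ zeros (inside $\mathcal C_k$, in the $z^2$ variable) of $\det \hat H_T(z)$, where $\rho_{k,T}$ is the finite-$T$ proxy solving $p_kg_T(x)=1$.

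Next I would perform the asymptotic expansion of $\hat H_T(z)$ around $z^2=\rho_{k,T}$. From the bounds $\|\hat a_T-\bar a_T\|\asto 0$ etc.\ recalled in the proof of Theorem~\ref{th:quad_forms}, and refining \cite[Lemmas~4.1--4.6]{ChapCouHac'12} to second order, one gets $\hat H_T(z) = \bar H_T(z) + T^{-1/2}\,\Xi_T(z) + o(T^{-1/2})$ uniformly on $\mathcal C_k$, where $\bar H_T$ is the deterministic equivalent (with $B_T^{\sf H}B_T$ replaced by $P$ and $T^{-1}\tr(\cdot)$ by $m_{\nu_T}$) whose relevant $j_k\times j_k$ diagonal block vanishes precisely at $z^2=\rho_{k,T}$, and $\Xi_T(z)$ is the linear fluctuation term. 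Because $\E|s_{1,1}|^8<\infty$ and the moment conditions $\E[s_{1,1}^u(s^*_{1,1})^v]=0$ for $u\ne v$, $u+v\le4$ are imposed, the entries of the relevant block of $\Xi_T(\sqrt{\rho_{k,T}})$ — which are quadratic forms $s_{\cdot,\ell}^{\sf H}\,(\text{function of }R_T,W_T)\,s_{\cdot,m}$ minus their means — satisfy a CLT (e.g.\ via the martingale CLT as in \cite{BaiSil'98}-type computations), converging jointly to the Gaussian matrix $M_k$: a direct variance computation of these bilinear forms produces $\alpha_k$ (the $W_T$-fluctuation contribution, weighted by $\bDelta(\brho_k)^{-1}$ from the implicit-function normalization), $\beta_k$ (the Gaussian part of the $S_T$-fluctuation), and $\kappa\phi_k$ (the non-Gaussian fourth-cumulant correction, present only on the diagonal), while independence across $k$ follows from the disjoint contours $\mathcal C_k$ and the asymptotic decorrelation of the corresponding bilinear forms.

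Then I would apply an implicit-function / Rouché-type argument: inside $\mathcal C_k$ the $j_k$ zeros of $z\mapsto\det\hat H_T(z)$ are the eigenvalues of a $j_k\times j_k$ self-adjoint matrix-valued perturbation of $(x-\rho_{k,T})\,p_k g_T'(\rho_{k,T})\,I_{j_k}$ — here the scalar factor comes from differentiating the vanishing block of $\bar H_T$, using $\frac{d}{dx}[p_kg_T(x)]\big|_{\rho_{k,T}}\to p_k\bg'(\brho_k)$ — plus the $T^{-1/2}$ random perturbation coming from the block of $\Xi_T$. Consequently $\sqrt T(\hat\lambda_{j_1+\cdots+j_{k-1}+1,T},\ldots)-\rho_{k,T}\mathbf 1$ is, up to $o_P(1)$, the decreasingly ordered eigenvalue vector of $(p_k\bg'(\brho_k))^{-1}M_k$, i.e.\ $\chi_k$; the ordering/continuity of the eigenvalue map and Slutsky give the claimed joint convergence in law. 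The main obstacle is the second step: establishing the joint CLT for the bilinear-form entries of $\Xi_T$ with the exact variances $\alpha_k,\beta_k,\kappa\phi_k$. This requires (i) carefully separating the randomness of $W_T$ from that of $S_T$ and tracking how the resolvent identities of \cite{ChapCouHac'12} propagate fluctuations through the nonlinear fixed-point equation (this is where the factor $\bDelta(\brho_k)^{-1}$, positive on $(b,\infty)$ by Lemma~\ref{lmDelta}, enters), and (ii) a fair amount of combinatorial bookkeeping of fourth moments to isolate the $\kappa\phi_k$ term on the diagonal and to verify the complex-Gaussian (circular) structure off the diagonal; controlling the uniformity of all error terms over the fixed contour $\mathcal C_k$ so that the contour integral / residue step is legitimate is the remaining technical burden.
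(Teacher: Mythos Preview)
Your outline is correct and follows the same underlying mechanism as the paper (determinant equation for the isolated eigenvalues, linearization around $\rho_{k,T}$, CLT for the fluctuating block, eigenvalue continuity). The paper, however, organizes the argument more modularly and thereby sidesteps precisely the ``main obstacle'' you flag. Instead of expanding $\hat H_T(z)$ to second order and extracting a single fluctuation matrix $\Xi_T$ that mixes the $W_T$ and $S_T$ randomness, the paper invokes \cite[Th.~2.3]{ChapCouHac'12} as a black box (Proposition~\ref{chapon}): conditionally on $A_T$, this already delivers that $\eta_{k,T}$ fluctuates like the ordered eigenvalues of $(p_k\bg'(\brho_k))^{-1}(\sqrt{\alpha_k}\,G_k+\sqrt T\,F_{k,T})$, where $G_k$ is GUE, independent of $F_{k,T}=m_T(\rho_{k,T})B_{k,T}^{\sf H}(I_T+c_Tm_T(\rho_{k,T})R_T)^{-1}B_{k,T}+I_{j_k}$. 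Thus the entire $W_T$-contribution (including the $\bDelta(\brho_k)^{-1}$ factor inside $\alpha_k$) and the Rouch\'e/linearization step come for free from the cited result; all that is left is a CLT for $\sqrt T\,F_{k,T}$, which is a quadratic form in $S_T$ with a \emph{deterministic} kernel $D_T=p_k m_T(\rho_{k,T})(I_T+c_Tm_T(\rho_{k,T})R_T)^{-1}$. The paper handles this via a martingale CLT (Lemma~\ref{tlc:fq}) together with the computation $T^{-1}\tr D_T^2\to\beta_k$ and $T^{-1}\tr(\diag D_T)^2\to\phi_k$ (Lemma~\ref{lm-diag}, using the Toeplitz/circulant approximation specific to the ARMA setting), and the replacement of $B_{k,T}$ by $\sqrt{p_k}\,S_{k,T}$ (Lemma~\ref{lm:fq-totale}). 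Your route would work but essentially reproves \cite[Th.~2.3]{ChapCouHac'12} inside the argument; the paper's decomposition $\sqrt{\alpha_k}\,G_k+\sqrt T\,F_{k,T}$ with $G_k\perp F_{k,T}$ cleanly separates the two sources of randomness and reduces the new work to the $S_T$-only CLT.
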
 
\begin{proof}
	The proof is provided in Section~\ref{prf-fluct-spikes}.
\end{proof}

Theorem~\ref{th-fluct-spikes} shows that, after appropriate centering and scaling, the vector of the isolated eigenvalues of $Y_T Y_T^{\sf H}$ that converge to $\brho_k > b$ tends to fluctuate like the eigenvalues of a certain Hermitian matrix with Gaussian elements. If $\kappa=0$, this matrix is a scaled Gaussian Unitary Ensemble (GUE) matrix.\footnote{We recall that a GUE matrix is a random Hermitian matrix $M = [M_{ij}]$ such that $M_{ii} \sim {\cal N}(0,1)$ and $M_{ij} \sim {\cal CN}(0,1)$ for $i < j$, these random variables being independent.} When $K = 0$, ${\bs s} T^{2/3} ( \hat\lambda_{1,T} - b_T )$ converges in law to the Tracy-Widom probability distribution ${\sf TW}(\cdot)$, where $b_T$ is the finite horizon equivalent to $b$ and $\bs s$ is a scaling parameter that depends on $\bc$ and $\bnu$ \cite{elkar'07}. This result can be generalized to show that for any fixed integer $r$, the vector $T^{2/3} ( \hat\lambda_{1,T} - b_T, \ldots, \hat\lambda_{r,T}-b_T)$ converges in distribution to a multidimensional version of the Tracy-Widom law. These results and Theorem~\ref{th-fluct-spikes} can then be used to evaluate the error probabilities of the source detection schemes described in Theorem~\ref{th:detection} and Proposition~\ref{1stord-M}. 
\begin{remark}
	We note without proof that for the specific ARMA model considered here, the measure $\nu_T$ can be freely replaced with $\bnu$ in Equation (\ref{m_T}), \revju{which arises from the fact that $\sqrt{T}(\nu_T-\bnu)\tolawshort 0$}. The error incurred on $m_T(z)$ by this replacement is negligible in the ARMA context. 
\end{remark}
\revju{From Theorem \ref{th-fluct-spikes}, one then retrieves the fluctuations of the source power estimates:}
\begin{theorem}
\label{2ord-powers}
Consider the setup of Theorem~\ref{th-fluct-spikes} and let $\hat p_{i,T}=(\hat g_T(\hat\lambda_{i,T}))^{-1}$ for $i=1,\ldots, j_1 + \cdots + j_s$. For $k=1,\ldots, s$, define (with $j_0=0$)
\begin{equation*}
\xi_{k,T} = \sqrt{T} \left( 
\begin{bmatrix} \hat p_{j_1+\cdots+j_{k-1}+1,T} \\ 
\vdots \\
\hat p_{j_1+\cdots+j_{k},T} 
\end{bmatrix} 
- 
p_k \begin{bmatrix} 1 \\ \vdots \\ 1 \end{bmatrix} 
\right).
\end{equation*}
Let $M_k$ be defined as in Theorem~\ref{th-fluct-spikes} and let $\check{\chi}_k$ be the $\R^{j_k}-$valued vector of the decreasingly ordered
eigenvalues of $p_k M_k$. Then
\begin{equation*}
(\xi_{1,T}, \ldots, \xi_{s,T}) \tolawlong (\check{\chi}_1,\ldots, \check{\chi}_s).
\end{equation*}
\end{theorem}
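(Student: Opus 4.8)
The plan is to deduce Theorem~\ref{2ord-powers} directly from Theorem~\ref{th-fluct-spikes} by the delta method, treating the map that sends the isolated eigenvalues to the power estimates $\hat p_{i,T} = (\hat g_T(\hat\lambda_{i,T}))^{-1}$ as a smooth function whose linearization around the limit point is explicit. First I would fix a group $k\le s$ with multiplicity $j_k$ and write $\hat p_{j_1+\cdots+j_{k-1}+r,T} = \psi_T(\hat\lambda_{j_1+\cdots+j_{k-1}+r,T})$ for $r=1,\ldots,j_k$, where $\psi_T(x) = 1/\hat g_T(x)$. Since $\hat\lambda_{j_1+\cdots+j_{k-1}+r,T}\asto\brho_k$ and $\rho_{k,T}\to\brho_k$, and since $\hat g_T(x)\asto\bg(x)$, $\hat g_T'(x)\asto\bg'(x)$ uniformly on a neighbourhood of $\brho_k$ (as used in the proof of Theorem~\ref{th:quad_forms}), a first-order Taylor expansion gives
\begin{align*}
\sqrt{T}\left(\hat p_{j_1+\cdots+j_{k-1}+r,T} - p_k\right)
= \psi_T'(\zeta_{r,T})\,\sqrt{T}\left(\hat\lambda_{j_1+\cdots+j_{k-1}+r,T} - \rho_{k,T}\right) + \sqrt{T}\left(\psi_T(\rho_{k,T}) - p_k\right),
\end{align*}
for some $\zeta_{r,T}$ between $\hat\lambda_{j_1+\cdots+j_{k-1}+r,T}$ and $\rho_{k,T}$. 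The second term vanishes exactly because $g_T(\rho_{k,T})=1/p_k$ by the very definition of $\rho_{k,T}$, so $\psi_T(\rho_{k,T}) = p_k$ for all large $T$; this is the key algebraic cancellation that makes the centering in $\xi_{k,T}$ compatible with that in $\eta_{k,T}$.

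Next I would compute the limiting derivative. We have $\psi_T'(x) = -\hat g_T'(x)/\hat g_T(x)^2 \asto -\bg'(\brho_k)/\bg(\brho_k)^2 = -p_k^2\,\bg'(\brho_k)$, using $\bg(\brho_k)=1/p_k$. Since $\zeta_{r,T}\asto\brho_k$ and the convergence of $\hat g_T,\hat g_T'$ is locally uniform, $\psi_T'(\zeta_{r,T})\asto -p_k^2\,\bg'(\brho_k)$. Collecting the $j_k$ coordinates, this says that the vector $\xi_{k,T}$ is asymptotically $-p_k^2\,\bg'(\brho_k)$ times the vector $\eta_{k,T}$ componentwise, up to a term that tends to zero in probability; hence, jointly over $k=1,\ldots,s$ and invoking Slutsky together with the joint convergence $(\eta_{1,T},\ldots,\eta_{s,T})\tolawshort(\chi_1,\ldots,\chi_s)$ from Theorem~\ref{th-fluct-spikes}, we get $(\xi_{1,T},\ldots,\xi_{s,T})\tolawshort(-p_1^2\,\bg'(\brho_1)\,\chi_1,\ldots,-p_s^2\,\bg'(\brho_s)\,\chi_s)$.

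Finally I would identify the limit with the statement. Recall $\chi_k$ is the ordered eigenvalue vector of $(p_k\bg'(\brho_k))^{-1}M_k$. Multiplying a Hermitian matrix by the negative scalar $-p_k^2\bg'(\brho_k)$ multiplies its eigenvalues by $-p_k^2\bg'(\brho_k)$ and reverses their order; since $M_k$ and $-M_k$ have the same law (the diagonal entries are centered Gaussian and the off-diagonal entries are centered circular complex Gaussian, so $M_k\overset{\cal L}{=}-M_k$), the vector $-p_k^2\,\bg'(\brho_k)\,\chi_k = -p_k^2\,\bg'(\brho_k)\cdot(p_k\bg'(\brho_k))^{-1}\cdot(\text{ordered eig of }M_k)$ has the same law as the ordered eigenvalue vector of $p_k M_k$, which is exactly $\check\chi_k$. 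This gives the claim, the groups being independent because the $M_k$ are.

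The main obstacle is not any single step but making the componentwise Taylor argument rigorous at the level of the whole $j_k$-dimensional block: one must be careful that $\psi_T'$ is being evaluated at $j_k$ distinct (random) points $\zeta_{r,T}$, all converging to the same $\brho_k$, and argue that the common limiting slope $-p_k^2\bg'(\brho_k)$ can be factored out uniformly. This is handled by the locally uniform almost-sure convergence of $\hat g_T$ and $\hat g_T'$ on a fixed compact neighbourhood of $\brho_k$ (established in the proof of Theorem~\ref{th:quad_forms}), which bounds $\sup_{r}|\psi_T'(\zeta_{r,T}) + p_k^2\bg'(\brho_k)|\asto 0$; combined with the tightness of $\eta_{k,T}$ coming from Theorem~\ref{th-fluct-spikes}, the cross terms are $o_P(1)$ and Slutsky closes the argument.
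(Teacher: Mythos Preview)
Your delta-method strategy is right in spirit and matches the paper's final step, but there is a genuine gap at the point you call ``the key algebraic cancellation''. You set $\psi_T(x)=1/\hat g_T(x)$ with the \emph{empirical} $\hat g_T$, then claim $\psi_T(\rho_{k,T})=p_k$ because $g_T(\rho_{k,T})=1/p_k$. These are different functions: $\rho_{k,T}$ is defined through the \emph{deterministic} $g_T$, so $1/\hat g_T(\rho_{k,T})\neq p_k$ in general. The term you declared to vanish is actually
\[
\sqrt{T}\Bigl(\psi_T(\rho_{k,T})-p_k\Bigr)=\sqrt{T}\Bigl(\hat g_T(\rho_{k,T})^{-1}-g_T(\rho_{k,T})^{-1}\Bigr),
\]
and locally uniform almost-sure convergence $\hat g_T\to\bg$ gives only $o(1)$, not $o_P(T^{-1/2})$. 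Controlling this remainder is precisely what the paper isolates: it splits $\hat g_T^{-1}-g_T^{-1}$ into $(\hat g_T^{-1}-\underline g_T^{-1})+(\underline g_T^{-1}-g_T^{-1})$, where $\underline g_T$ is built from $\underline m_T$. The first piece is handled by a finite-rank perturbation argument as in Theorem~\ref{th:estim-p}; the second requires that $T(\underline m_T(x)-m_T(x))$ converges (to a Gaussian process) on a real interval around $\brho_k$, i.e.\ a Bai--Silverstein CLT for linear spectral statistics extended off $\C_+$. Only after these two $o_P(T^{-1/2})$ bounds does the delta method on $g_T^{-1}(\hat\lambda_{i,T})-g_T^{-1}(\rho_{k,T})$ legitimately produce the limit you wrote.

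A minor point: you call $-p_k^2\bg'(\brho_k)$ a ``negative scalar'', but $\bg$ is decreasing on $(b,\infty)$, so $\bg'(\brho_k)<0$ and the scalar is positive; no order reversal occurs. Your identification of the limit still works because $-p_k^2\bg'(\brho_k)\cdot(p_k\bg'(\brho_k))^{-1}=-p_k$ and $M_k\overset{\mathcal L}{=}-M_k$, but the stated reason is off.
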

\begin{proof}
A sketch of the proof is given in Appendix~\ref{prf:2ord-powers}.
\end{proof}

\revju{A straightforward application of the Delta method \cite[Th. 3.1]{VAN00} on Theorem \ref{2ord-powers} implies in particular that, for $k=1,\ldots,s$,
\begin{align*}
	\sqrt{T} \left( \frac1{j_k} \sum_{i=1}^{j_k} \hat{p}_{j_1+\ldots+j_{k-1}+i,T} - p_k \right) \tolawlong \bar{\chi}_k
\end{align*}
with $\bar\chi_k \sim \mathcal{N}(0,j_k^{-1}p_k^2(\alpha_k+\beta_k+\kappa \phi_k))$, independent across $k$.
}
As a corollary of Theorem~\ref{2ord-powers}, the following proposition provides the behavior of the power estimates for extreme values of $p_k$, {\it i.e.} for $p_k\to\infty$ and for $p_k$ close to the detectability limit given by \eqref{detectability}:
\begin{proposition}
\label{limit_snr}
Consider the setting of Theorem~\ref{2ord-powers}. Let $p_{\rm lim}$ be the infimum of the $p_k$ satisfying \eqref{detectability}, $M_k$ be defined as in Theorem~\ref{th-fluct-spikes}, and $\psi_k \triangleq \alpha_k + \beta_k + \kappa\phi_k$, $\breve{\psi}_k \triangleq \alpha_k + \beta_k$. Then
\begin{align*}
\psi_k \xrightarrow[p_k\downarrow p_{\rm lim}]{} \infty&, \quad \breve{\psi}_k \xrightarrow[p_k \downarrow p_{\rm lim}]{} \infty \\
\psi_k \xrightarrow[p_k\to \infty]{} 1+\kappa&, \quad \breve{\psi}_k \xrightarrow[p_k\to \infty]{} 1.
\end{align*}
\end{proposition}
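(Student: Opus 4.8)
The plan is to analyze the three quantities $\alpha_k$, $\beta_k$, $\phi_k$ — each an explicit integral against $\nu$ evaluated at the point $\bm(\brho_k)$ — in the two limiting regimes, exploiting the monotone correspondence between $p_k$ and $\bm(\brho_k)$ furnished by Theorem~\ref{th:1stord}-\ref{eig-above}) and Corollary~\ref{cor:edge}. The key observation is that $p_k \bg(\brho_k) = 1$ with $\bg(x) = x\bm(x)\tilde\bm(x)$, so the map $p_k \mapsto \brho_k$ is a decreasing bijection from the detectability range onto $(b,\infty)$; correspondingly $\bm(\brho_k)$ ranges over $(m_b, 0)$, increasing to $0$ as $p_k \to \infty$ and decreasing to $m_b$ as $p_k \downarrow p_{\rm lim}$ (recall $m_b$ solves \eqref{eq:m(b+)} and $p_{\rm lim}$ is exactly the value making $p_k\bg(b^+)=1$). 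So the whole computation reduces to evaluating the limits of $\alpha_k,\beta_k,\phi_k$ as $\bm(\brho_k)\to 0^-$ and as $\bm(\brho_k)\to m_b^-$, together with $\brho_k\to\infty$ and $\brho_k\to b^+$ respectively.

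First I would handle the regime $p_k\to\infty$, i.e.\ $\brho_k\to\infty$ and $\bm(\brho_k)\to 0$. Using $\bm(x)\sim -1/x$ as $x\to\infty$ and $p_k = 1/(\brho_k\bm(\brho_k)\tilde\bm(\brho_k))$, one gets $p_k\bm(\brho_k)^2 \to$ a finite nonzero constant — in fact $p_k\bm(\brho_k) \to -\tilde\bm\text{-limit}^{-1}\cdot(\ldots)$; more cleanly, $p_k\bm(\brho_k) = (\brho_k\bm(\brho_k)^2\tilde\bm(\brho_k))^{-1}\bm(\brho_k)\to$ something one reads off from $\tilde\bm(x)=\bc\bm(x)-(1-\bc)/x\sim -1/x$, giving $\brho_k\bm(\brho_k)\tilde\bm(\brho_k)\to 1$, hence $p_k\to 1$... which is wrong, so the correct bookkeeping is $p_k\bg(\brho_k)=1$ with $\bg(x)\to 0$, forcing $\brho_k\to\infty$ and $p_k\bm(\brho_k)\to -1$ (since $\bm(\brho_k)\sim -1/\brho_k$ and $p_k\sim \brho_k$). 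Then $\beta_k = \int p_k^2\bm(\brho_k)^2(1+\bc\bm(\brho_k)t)^{-2}\nu(dt)\to \int 1\cdot\nu(dt)=1$, $\phi_k=(\int p_k\bm(\brho_k)(1+\bc\bm(\brho_k)t)^{-1}\nu(dt))^2\to(\int(-1)\nu(dt))^2=1$, and in $\alpha_k$ the factor $\bm^2(\brho_k)/\bDelta(\brho_k)\to 0$ (since $\bm(\brho_k)^2\to 0$ and $\bDelta(\brho_k)\to 1$ by Lemma~\ref{lmDelta}) while the bracket stays $O(p_k^2)$; checking that $\bm^2(\brho_k)p_k^2$ stays bounded (it tends to $1$) shows $\alpha_k\to 0$. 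Hence $\breve\psi_k=\alpha_k+\beta_k\to 1$ and $\psi_k=\breve\psi_k+\kappa\phi_k\to 1+\kappa$.

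Next the regime $p_k\downarrow p_{\rm lim}$, i.e.\ $\brho_k\downarrow b$ and $\bm(\brho_k)\downarrow m_b$. Here the decisive fact is $\bDelta(x)\to 0$ as $x\downarrow b$ (Lemma~\ref{lmDelta}), so the prefactor $\bm^2(\brho_k)/\bDelta(\brho_k)$ in $\alpha_k$ blows up; the bracket converges to a strictly positive finite limit (it is continuous in $\bm(\brho_k)$ and at $m_b$ evaluates to a positive number, using \eqref{eq:m(b+)} to see it does not degenerate), while $\beta_k$ and $\phi_k$ converge to finite limits. Hence $\alpha_k\to\infty$, giving $\breve\psi_k\to\infty$ and $\psi_k\to\infty$. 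The main obstacle I anticipate is precisely the $p_k\downarrow p_{\rm lim}$ analysis: one must show the bracket in $\alpha_k$ does not itself vanish at $m_b$ fast enough to cancel the $1/\bDelta(\brho_k)$ singularity, which requires comparing the rate at which $\bDelta(x)\to 0$ (governed by $\int(\bm(x)t/(1+\bc\bm(x)t))^2\nu(dt)\to 1/\bc$) with the behavior of $\int(t^2+2p_kt)(1+\bc\bm(x)t)^{-2}\nu(dt)$ near $m_b$; since the latter is manifestly bounded away from zero there (its integrand is positive and nondegenerate), the singularity survives. A secondary routine point is the $p_k\to\infty$ expansion, where one should double-check the asymptotics $\bm(x)=-x^{-1}+O(x^{-2})$ and the induced $p_k\sim \brho_k$ so that all the $p_k\bm(\brho_k)$-type products have the stated finite limits.
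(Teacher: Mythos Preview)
Your plan matches the paper's own proof almost exactly: both cases are handled by tracking $\brho_k$ (and $\bm(\brho_k)$) via the relation $p_k\bg(\brho_k)=1$, invoking Lemma~\ref{lmDelta} for the behavior of $\bDelta$ at the two ends, and reading off the limits of the three integrals. The paper's write-up is in fact terser than yours --- it simply notes $\brho_k/p_k\to 1$, $\brho_k\bm(\brho_k)\to -1$, $\bDelta(\brho_k)\to 1$ and plugs in.

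One genuine slip to fix in the $p_k\to\infty$ part: your argument ``the bracket stays $O(p_k^2)$; checking that $\bm^2(\brho_k)p_k^2$ stays bounded (it tends to $1$) shows $\alpha_k\to 0$'' does \emph{not} give $\alpha_k\to 0$ --- it only gives $\alpha_k$ bounded. You need the sharper observation that the bracket is in fact $O(p_k)$: the first integral is $\int(t^2+2p_kt)(1+\bc\bm(\brho_k)t)^{-2}\nu(dt)\sim 2p_k\int t\,\nu(dt)$, while the second term is $O(1)$ since $p_k\bm(\brho_k)\to -1$. Then $\alpha_k \sim \bm^2(\brho_k)\cdot O(p_k)/\bDelta(\brho_k) = O(p_k^{-1})\to 0$ as desired. (Also a cosmetic point: as $p_k\downarrow p_{\rm lim}$ you have $\bm(\brho_k)\to m_b^+$, not $m_b^-$, since $\bm$ is increasing on $(b,\infty)$ and takes values in $(m_b,0)$.)
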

\begin{proof}
	See Appendix~\ref{prf-limsnr}.
\end{proof}

\subsection{Proof of Theorem \ref{th-fluct-spikes}} 
\label{prf-fluct-spikes}

The proof relies on two ingredients: an adaptation of \cite[Th.
2.3]{ChapCouHac'12} and a result on fluctuations of quadratic forms. Let
$A_T = U_T B_T^{\sf H}$ with $U_T = H_T (H_T^{\sf H} H_T)^{-1/2}$ and $B_T
= S_T P^{1/2} (H_T^{\sf H} H_T)^{1/2}= [ B_{1,T}, \ldots, B_{t,T} ]$,
$B_{k,T}\in\CC^{T \times j_k}$. In \cite{ChapCouHac'12}, it is shown that the
$\eta_{k,T}$ fluctuate like the ordered eigenvalues of the matrices $(p_k
\bg(\brho_k)')^{-1} ( \sqrt{\alpha_k} G_k + \sqrt{T} F_{k,T} )$ where $F_{k,T}
= m_T(\rho_{k,T}) B_{k,T}^{\sf H} ( I_T + c_T m_T(\rho_{k,T}) R_T )^{-1}
B_{k,T} + I_{j_k}$ and the $G_k$ are GUE matrices independent of the $F_{k,T}$.
\revju{This is formalized by Proposition \ref{chapon} below.}  
Using $H_T^{\sf H} H_T \asto I_K$, the law of large numbers and the definition
of $\rho_{k,T}$ informally give
\begin{align*} 
F_{k,T} 
&\simeq 
\left( \frac{p_k}{T} \tr\left[ m_T(\rho_{k,T}) ( I_T + c_T m_T(\rho_{k,T}) R_T)^{-1} \right] +1 \right) I_{j_k} = 0.
\end{align*}
We thus need to study the fluctuations of $\sqrt{T} F_{k,T}$, which is the
purpose of the \revju{three} following lemmas.
\revju{Lemma \ref{tlc:fq} is a Central Limit Theorem characterizing the 
fluctuations of random matrices of the type $S_T^{\sf H} D_T S_T$ where $D_T$ 
is a sequence of $T\times T$ deterministic matrices. Lemma 
\ref{lm-diag} particularizes the results of Lemma \ref{tlc:fq} to the case 
where $D_{T} = p_k m_T(\rho_{k,T}) ( I_T + c_T m_T(\rho_{k,T}) R_T)^{-1}$. 
In Lemma \ref{lm:fq-totale} these results are used to characterize the 
fluctuations of $F_{k,T}$. Essentially, it is shown there that the matrices
$B_{k,T}$ can be replaced with $\sqrt{p_k} S_{k,T}$. 
Lemmas \ref{tlc:fq}--\ref{lm:fq-totale} are proved in 
Appendices~\ref{prf-tlc:fq}--\ref{prf-totale} respectively:} 
\begin{lemma}
\label{tlc:fq}
Let $D_T\in\CC^{T\times T}$ be a sequence of deterministic Hermitian matrices with $\sup_T \| D_T \| < \infty$. Assume that
\begin{equation*}
\frac 1T \tr D_T^2 \xrightarrow[T\to\infty]{} \beta 
\quad \text{and} \quad 
\frac 1T \tr (\diag(D_T))^2 \xrightarrow[T\to\infty]{} \phi. 
\end{equation*}
Consider the matrices $S_T$ defined by \eqref{model}. Then
\begin{equation*}
\sqrt{T} \Bigl(S_T^{\sf H} D_T S_T - \frac{\tr D_T}{T} I_K \Bigr) 
\tolawlong G 
\end{equation*}
where $G = [G_{ij}]_{1\leq i,j\leq K}$ is random Hermitian such that $\{ G_{ij} \}_{i\leq j}$ are independent, $G_{ii} \sim {\cal N}(0, \beta + \kappa\phi)$ for $1\leq i\leq K$, and $G_{ij} \sim {\cal CN}(0, \beta)$ for $1\leq i < j \leq K$.
\end{lemma}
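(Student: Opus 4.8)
The plan is to establish a central limit theorem for the $K\times K$ Hermitian matrix $Z_T \triangleq \sqrt{T}\bigl(S_T^{\sf H} D_T S_T - T^{-1}\tr(D_T) I_K\bigr)$ by showing that all its entries are jointly asymptotically Gaussian with the stated variances and that the off-diagonal entries are asymptotically independent of the diagonal ones and of each other. Writing $S_T = T^{-1/2}[\bar s_1,\ldots,\bar s_K]$ with $\bar s_k\in\CC^T$ the $k$-th column scaled by $\sqrt T$ (so $\bar s_k$ has i.i.d.\ entries $s_{t,k}^*$ with zero mean, unit variance, finite eighth moment, and the circularity/pseudo-moment constraints of Theorem~\ref{th-fluct-spikes}), we have $[Z_T]_{ij} = T^{-1/2}\bigl(\bar s_i^{\sf H} D_T \bar s_j - \delta_{ij}\tr D_T\bigr)$. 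For $i=j$ this is a normalized quadratic form in i.i.d.\ variables centered at its mean, and for $i\neq j$ it is a normalized bilinear form in two independent i.i.d.\ vectors; both are classical objects.

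First I would handle a generic linear combination. Fix real coefficients and form $\xi_T \triangleq \Re\langle \Theta, Z_T\rangle$ for an arbitrary deterministic Hermitian $\Theta$; by the Cram\'er--Wold device it suffices to show $\xi_T$ is asymptotically Gaussian with the variance predicted by the claimed covariance structure, and to identify that variance. Expanding, $\xi_T$ is a quadratic form $T^{-1/2}\bigl(\bar s^{\sf H}\Xi_T\bar s - \E[\cdot]\bigr)$ in the stacked vector $\bar s = (\bar s_1^{\sf T},\ldots,\bar s_K^{\sf T})^{\sf T}$ of $KT$ i.i.d.\ entries, where $\Xi_T$ is a $KT\times KT$ Hermitian matrix built blockwise from $D_T$ and $\Theta$ with $\sup_T\|\Xi_T\| = \|\Theta\|\sup_T\|D_T\| < \infty$. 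I would then invoke a standard CLT for quadratic forms in i.i.d.\ entries with bounded-norm kernel (e.g.\ via the martingale difference decomposition along the entries of $\bar s$, checking the Lyapunov/Lindeberg condition from $\E|s_{1,1}|^8<\infty$, together with a negligibility argument for the off-diagonal kernel mass so that the limit is Gaussian rather than a weighted sum of centered $\chi^2$'s — here the hypotheses $T^{-1}\tr D_T^2\to\beta$ and $T^{-1}\tr(\diag D_T)^2\to\phi$ guarantee convergence of the variance). Carrying out the variance bookkeeping, using $\E s_{1,1}=0$, $\E|s_{1,1}|^2=1$, $\E s_{1,1}^2=0$ (circularity, from $\E[s^u(s^*)^v]=0$ for $u\neq v$), $\E|s_{1,1}|^4 = 2+\kappa$, and independence across columns, the diagonal terms $[Z_T]_{ii}$ pick up a contribution $T^{-1}\tr D_T^2$ from the ``off-diagonal of $D_T$'' part and $T^{-1}\bigl(\E|s_{1,1}|^4-1\bigr)\tr(\diag D_T)^2 = T^{-1}(1+\kappa)\tr(\diag D_T)^2$ from the ``diagonal of $D_T$'' part, of which $T^{-1}\tr(\diag D_T)^2$ combines with the rest to give total variance $\beta + \kappa\phi$; real and imaginary parts of $[Z_T]_{ij}$, $i<j$, each get $\tfrac12 T^{-1}\tr D_T^2\to\tfrac12\beta$, i.e.\ $[Z_T]_{ij}\sim\mathcal{CN}(0,\beta)$; and the circularity constraint kills all pseudo-covariances and cross-covariances between distinct entries, giving the asserted independence.

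The technical load thus factors into (a) an off-the-shelf quadratic-form CLT under a bounded-norm kernel and an eighth-moment condition, and (b) a careful but mechanical fourth-moment computation of the limiting covariances that isolates the role of $\diag(D_T)$ and of $\kappa$. The main obstacle I anticipate is (a): one must justify that, even though the kernel $\Xi_T$ is not diagonal and the variables are complex, the fluctuations are genuinely Gaussian — concretely, that $T^{-1}\sum_{t\neq u}|[\Xi_T]_{tu}|^2$-type quantities dominate the squared diagonal contributions in the right way so that no Poisson-type or $\chi^2$-type limit survives, and that the truncation/Lindeberg step goes through with only $\E|s_{1,1}|^8<\infty$. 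This is precisely the kind of statement made rigorous in \cite{BaiSil'98} (cf.\ the use of \cite[Lemma~2.7]{BaiSil'98} in the proof of Lemma~\ref{lm:model}) and in the CLT machinery of \cite{ChapCouHac'12}; I would quote those and confine the paper's contribution to verifying their hypotheses for our $S_T$ and to the covariance identification, relegating the details to the appendix as the statement of Lemma~\ref{tlc:fq} indicates.
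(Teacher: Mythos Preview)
Your proposal is correct and arrives at the same conclusion via Cram\'er--Wold plus a quadratic-form CLT, but the packaging differs from the paper's. You stack the $K$ columns into a single vector $\bar s\in\CC^{KT}$ and reduce $\tr(\Theta Z_T)$ to a \emph{scalar} quadratic form $\bar s^{\sf H}\Xi_T\bar s$ with $\Xi_T=\bar\Theta\otimes D_T$, then invoke an off-the-shelf CLT. The paper instead keeps the $K\times K$ matrix structure and runs the martingale CLT along the \emph{row} filtration $\mathcal F_t=\sigma(\mathbf s_1,\ldots,\mathbf s_t)$, $\mathbf s_t\in\CC^K$, computing the conditional variance by hand; this is why the paper says the passage from a vector $s_T$ to a matrix $S_T$ ``introduces some differences.'' Your Kronecker reduction is arguably cleaner (the limiting variance $\beta\tr\Theta^2+\kappa\phi\tr(\diag\Theta)^2$ drops out immediately from $\tr\Xi_T^2$ and $\tr(\diag\Xi_T)^2$), while the paper's route is more self-contained and exhibits explicitly the Lyapunov check with $\delta=2$ via Burkholder's inequality.

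Two small points. First, your worry about a ``weighted sum of centered $\chi^2$'s'' is misplaced: Gaussianity of the limit follows from $\sup_T\|\Xi_T\|<\infty$ together with the $T^{-1/2}$ scaling (equivalently $\|\Xi_T\|/\|\Xi_T\|_{\rm fro}\to 0$), not from any off-diagonal dominance; there is nothing extra to check. Second, \cite[Lemma~2.7]{BaiSil'98} is a moment bound, not a CLT; the references you actually need for the scalar quadratic-form CLT with the $\kappa$-diagonal correction are those the paper cites for this purpose, namely \cite{BhanGirKok'07} and \cite{KammKhaHac'09}.
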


\begin{lemma}
\label{lm-diag}
Let $1 \leq k \leq s$ and
\begin{equation*}
D_{T} = 
p_k m_T(\rho_{k,T}) ( I_T + c_T m_T(\rho_{k,T}) R_T)^{-1}. 
\end{equation*}
Then $\limsup_T \| D_T \| < \infty$,  
\begin{equation*}
\frac 1T \tr (D_T^2) \tolong \beta_k \ \text{, and} \ 
\frac 1T \tr (\diag(D_T))^2 \tolong 
\phi_k
\end{equation*}
where $\beta_k$ and $\phi_k$ are given in Theorem~\ref{th-fluct-spikes}. 
\end{lemma}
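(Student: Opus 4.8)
\emph{Plan.} Set $h_T(x)\triangleq p_k m_T(\rho_{k,T})/(1+c_Tm_T(\rho_{k,T})x)$, so that $D_T=h_T(R_T)$, and recall from Lemma~\ref{lm:model} that $R_T$ is the Toeplitz matrix with symbol $q(u)=|\bp(e^{2\imath\pi u})|^2$ and that $\nu$ is the image of the uniform law on $[0,1]$ under $q$. First I would record the auxiliary convergences $\rho_{k,T}\to\brho_k$ and $m_T(\rho_{k,T})\to\bm(\brho_k)$: since $m_T\to\bm$ uniformly on compact subsets of $\C$ away from $[0,b]$ (a consequence of the weak convergence $\mu_T\to\mu$), the functions $g_T(x)=xm_T(x)\tilde m_T(x)$ converge uniformly to $\bg$ on compacts of $(b,\infty)$, so the (unique) root $\rho_{k,T}$ of $p_kg_T(x)=1$ converges to the root $\brho_k$ of $p_k\bg(x)=1$, and then $m_T(\rho_{k,T})\to\bm(\brho_k)$. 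Because $\brho_k>b$, the point $\bm(\brho_k)$ lies in $(m_b,0)$ with $m_b>-(\bc b_\nu)^{-1}$ (Corollary~\ref{cor:edge}), whence $1+\bc\bm(\brho_k)t\geq 1+\bc\bm(\brho_k)b_\nu>0$ on $[a_\nu,b_\nu]$. Combining this with $c_T\to\bc$, $m_T(\rho_{k,T})\to\bm(\brho_k)$ and the fact that the eigenvalues $\sigma^2_{t,T}$ of $R_T$ lie in $[a_\nu,b_\nu]$ (Assumption~\ref{ass:R}, see the proof of Lemma~\ref{lm:model}), the numbers $1+c_Tm_T(\rho_{k,T})\sigma^2_{t,T}$ are bounded below by a fixed positive constant for all $T$ large, uniformly in $t$; as $D_T$ is Hermitian with eigenvalues $p_km_T(\rho_{k,T})/(1+c_Tm_T(\rho_{k,T})\sigma^2_{t,T})$, this already gives $\limsup_T\|D_T\|<\infty$ and the spectral description used below.

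For $\tfrac1T\tr(D_T^2)$ I would use that eigenvalue description to write $\tfrac1T\tr(D_T^2)=\int\bigl(p_km_T(\rho_{k,T})/(1+c_Tm_T(\rho_{k,T})t)\bigr)^2\,\nu_T(dt)$. For $T$ large the integrand is a bounded continuous function of $t$ converging uniformly on $[a_\nu,b_\nu]$ to $t\mapsto\bigl(p_k\bm(\brho_k)/(1+\bc\bm(\brho_k)t)\bigr)^2$ (the only $T$-dependence being through $(c_T,m_T(\rho_{k,T}))$, which converges while keeping the denominator bounded away from zero). A triangle-inequality argument combining this uniform convergence with the weak convergence $\nu_T\to\nu$ (Assumption~\ref{ass:R}) and $\support(\nu_T)\subset[a_\nu,b_\nu]$ then yields $\tfrac1T\tr(D_T^2)\to\int\bigl(p_k\bm(\brho_k)/(1+\bc\bm(\brho_k)t)\bigr)^2\nu(dt)=\beta_k$.

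The term $\tfrac1T\tr((\diag(D_T))^2)$ is the one that genuinely uses the Toeplitz structure of $R_T$ (were $R_T$ diagonal the limit would be $\beta_k\neq\phi_k$). Let $T_T(f)$ denote the $T\times T$ Toeplitz matrix with symbol $f$, so $\diag(T_T(f))=\bigl(\int_0^1f(u)\,du\bigr)I_T$ and $R_T=T_T(q)$; note $\sum_\ell|r_\ell|<\infty$, and $1+c_Tm_T(\rho_{k,T})q(u)$ stays bounded away from zero on $[0,1]$ for $T$ large, so $h_T$ is analytic and bounded on a fixed complex neighborhood of $q([0,1])=[a_\nu,b_\nu]$, uniformly in $T$. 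The key estimate to establish is the averaged Hilbert--Schmidt bound
\[
\tfrac1T\bigl\|D_T-T_T(h_T\circ q)\bigr\|_{\rm HS}^2\xrightarrow[T\to\infty]{}0 ,
\]
a Szeg\H{o}-type fact: for a polynomial $P$ it reduces to $\tfrac1T\|T_T(f)T_T(g)-T_T(fg)\|_{\rm HS}^2\to0$ for symbols with summable Fourier coefficients (cf. \cite{Gray'06}), and the general case follows by approximating $h_T$ uniformly by polynomials on the fixed neighborhood of $[a_\nu,b_\nu]$ and invoking $\|F(R_T)\|_{\rm HS}^2\leq T\|F\|_\infty^2$ and $\|T_T(\psi)\|_{\rm HS}^2\leq T\|\psi\|_{L^2}^2$. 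Granting it, write $\diag(D_T)=\bar h_T I_T+\Delta_T$ with $\bar h_T\triangleq\int_0^1h_T(q(u))\,du$ and $\Delta_T=\diag(D_T-T_T(h_T\circ q))$, so that $\tfrac1T\|\Delta_T\|_{\rm HS}^2\to0$; then
\[
\tfrac1T\tr\bigl((\diag(D_T))^2\bigr)=\bar h_T^2+\tfrac{2\bar h_T}{T}\tr\Delta_T+\tfrac1T\|\Delta_T\|_{\rm HS}^2 ,
\]
and Cauchy--Schwarz makes the last two terms vanish. Finally $h_T$ converges uniformly on $[a_\nu,b_\nu]$ to $x\mapsto p_k\bm(\brho_k)/(1+\bc\bm(\brho_k)x)$, so $\bar h_T\to\int_0^1p_k\bm(\brho_k)/(1+\bc\bm(\brho_k)q(u))\,du=\int p_k\bm(\brho_k)/(1+\bc\bm(\brho_k)t)\,\nu(dt)$ by the change of variables \eqref{def-nu} (extended by linearity to bounded Borel integrands), whose square is $\phi_k$; hence $\tfrac1T\tr((\diag(D_T))^2)\to\phi_k$.

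The main obstacle is the Hilbert--Schmidt estimate of the third step: one needs quantitative control of a nonlinear (though analytic) function of a non-banded Toeplitz matrix, and, since $h_T$ itself depends on $T$, every Toeplitz-approximation bound must be made uniform over the compact family of symbols $\{h_T\circ q\}_T$. The boundedness of $D_T$ and the evaluation of $\beta_k$ are, by comparison, routine consequences of Theorem~\ref{th:lsm}, the convergence $m_T\to\bm$, and the weak convergence of $\nu_T$.
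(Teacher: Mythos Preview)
Your proposal is correct and follows essentially the same route as the paper. For the boundedness of $D_T$ and the convergence $\tfrac1T\tr(D_T^2)\to\beta_k$ the arguments are the same (the paper cites \cite[Lemma~3.1]{ChapCouHac'12} for the uniform lower bound on $|1+c_Tm_T(\rho_{k,T})t|$, whereas you deduce it more directly from $m_T(\rho_{k,T})\to\bm(\brho_k)\in(m_b,0)$ and $\sigma_{t,T}^2\in[a_\nu,b_\nu]$; both are fine).

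The only noteworthy difference is in the third claim. The paper writes $D_T=p_k m_T(\rho_{k,T})\Gamma_T^{-1}$ with $\Gamma_T=I_T+c_Tm_T(\rho_{k,T})R_T$ Toeplitz, introduces the associated \emph{circulant} matrix $\widetilde\Gamma_T$, and invokes Gray's asymptotic-equivalence theorem for inverses \cite[Th.~5.2]{Gray'06} to get $T^{-1}\|\Gamma_T^{-1}-\widetilde\Gamma_T^{-1}\|_{\rm fro}^2\to 0$ in one stroke; since $\widetilde\Gamma_T^{-1}$ is circulant, its diagonal is the scalar $\tfrac1T\sum_{t=0}^{T-1}(1+c_Tm_T(\rho_{k,T})|\bp(e^{2\imath\pi t/T})|^2)^{-1}$ times $I_T$, whose square converges to $\phi_k$. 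You instead approximate $D_T=h_T(R_T)$ by the \emph{Toeplitz} matrix $T_T(h_T\circ q)$, obtaining the Hilbert--Schmidt estimate by uniform polynomial approximation of $h_T$ on a fixed neighborhood of $[a_\nu,b_\nu]$ combined with the product rule $\tfrac1T\|T_T(f)T_T(g)-T_T(fg)\|_{\rm HS}^2\to 0$. Both approximants have constant diagonal with the same limit $\int_0^1 h(q(u))\,du$, so the conclusion follows identically. The paper's route is marginally shorter here because $h_T$ is literally a resolvent, so Gray's theorem for inverses applies off the shelf and no polynomial approximation (with its attendant uniformity-in-$T$ bookkeeping) is needed; your route is more general in that it would work for any continuous $h_T$, and it makes the dependence on $T$ explicit.
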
 

\begin{lemma}
\label{lm:fq-totale}
Let $M_1, \ldots, M_t$, $M_k = [ M_{\ell,m,k} ]_{1\leq\ell,m\leq j_k}$, be random independent Hermitian matrices such that the $\{ M_{\ell,m,k} \}_{\ell\leq m}$ are independent, $M_{\ell,\ell,k} \sim {\mathcal N}(0, \beta_k + \kappa \phi_k)$, and $M_{\ell,m,k} \sim \mathcal{CN}(0, \beta_k)$ for $1\leq \ell < m \leq j_k$. Then 
\begin{equation*}
(\sqrt{T} F_{k,T})_{k=1,\ldots, t}\tolawlong (M_k)_{k=1,\ldots, t}.
\end{equation*}
\end{lemma}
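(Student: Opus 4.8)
The plan is to combine Lemma~\ref{tlc:fq} and Lemma~\ref{lm-diag}, and then upgrade the marginal convergences for each block $k$ to a joint convergence across all blocks. First, for a fixed $k\in\{1,\ldots,t\}$, I would apply Lemma~\ref{lm-diag} to identify $\tfrac1T\tr D_T^2\to\beta_k$ and $\tfrac1T\tr(\diag(D_T))^2\to\phi_k$ for the specific $D_T=p_k m_T(\rho_{k,T})(I_T+c_T m_T(\rho_{k,T})R_T)^{-1}$ appearing in the definition of $F_{k,T}$. Then Lemma~\ref{tlc:fq} applied with this $D_T$ gives that $\sqrt T\,(S_T^{\sf H}D_T S_T - \tfrac{\tr D_T}{T}I_K)$ converges in law to a Hermitian Gaussian matrix $G^{(k)}$ with the stated variance structure $\beta_k+\kappa\phi_k$ on the diagonal and $\beta_k$ off-diagonal. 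Restricting to the $j_k\times j_k$ sub-block indexed by the columns of $B_{k,T}$ (which, up to the $H_T^{\sf H}H_T\asto I_K$ correction, are exactly the relevant columns of $S_T$) and noting that $\tfrac{\tr D_T}{T}\to -1$ by the fixed-point equation (so that $\tfrac{\tr D_T}{T}I_{j_k}+I_{j_k}\to 0$ and only the fluctuation survives after multiplication by $\sqrt T$), I obtain $\sqrt T\,F_{k,T}\tolawlong M_k$ for each fixed $k$.

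The remaining — and genuinely new — work is to show that these convergences hold \emph{jointly} over $k=1,\ldots,t$ and that the limits $M_1,\ldots,M_t$ are mutually independent. The natural route is the Cram\'er–Wold device: it suffices to show that for any Hermitian matrices $\Theta_1,\ldots,\Theta_t$ (with $\Theta_k$ of size $j_k$) the scalar $\sum_k \sqrt T\,\tr(\Theta_k F_{k,T})$ converges in law to a real Gaussian with the variance predicted by independent $M_k$'s. Expanding $\tr(\Theta_k F_{k,T})$ as a quadratic form $\sum_{a,b}(\Theta_k)_{ba}(B_{k,T})_a^{\sf H}(\cdots)(B_{k,T})_b$ plus a negligible deterministic part, the whole sum becomes a single quadratic form in the i.i.d. entries of $S_T$; I would invoke the CLT for quadratic forms in i.i.d. variables (the same machinery underlying Lemma~\ref{tlc:fq}, e.g. a martingale CLT) to get asymptotic normality, and then compute the limiting variance. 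The key algebraic observation is that $B_{k,T}^{\sf H}(\cdots)B_{\ell,T}$ for $k\ne\ell$ involves disjoint columns of $S_T$ (the columns of $S_T$ are partitioned into the groups of equal powers), so the cross-covariance between the $k$-th and $\ell$-th quadratic forms vanishes asymptotically, yielding block-diagonal covariance and hence independence of the limits.

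Concretely, the key steps in order are: (i) fix $k$, verify the hypotheses of Lemma~\ref{tlc:fq} using Lemma~\ref{lm-diag} and the factorization $A_T=U_TB_T^{\sf H}$ with $B_{k,T}=S_{k,T}P_k^{1/2}(\cdot)^{1/2}$ (where $S_{k,T}$ are the columns of $S_T$ attached to group $k$); (ii) absorb the $H_T^{\sf H}H_T\asto I_K$ factors, which are deterministic-limit perturbations contributing nothing to the $\sqrt T$-scale fluctuation; (iii) use the fixed-point identity $\tfrac{p_k}{T}\tr[m_T(\rho_{k,T})(I_T+c_Tm_T(\rho_{k,T})R_T)^{-1}]+1\to 0$, together with a rate estimate showing this is $o(T^{-1/2})$, to kill the centering term; (iv) run the Cram\'er–Wold / quadratic-form CLT argument above to get joint normality; (v) evaluate the limiting covariances and observe the block structure to conclude independence and the claimed marginal laws. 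The main obstacle is step (iii)–(iv): one must control the deterministic-bias term at the precise $o(T^{-1/2})$ level (so it does not leak into the Gaussian mean) and, more delicately, handle the fact that $F_{k,T}$ depends on the \emph{random} $\rho_{k,T}$ only through its deterministic counterpart — here I would replace $\rho_{k,T}$ by a deterministic sequence and argue that the difference is negligible using the uniform convergence of $m_T$ and $g_T$ on compact subsets of $(b,\infty)$ from Theorem~\ref{th:lsm}-\ref{cvg-unif}); these technical points are what the appendix proof (Appendix~\ref{prf-totale}) must carry out in full.
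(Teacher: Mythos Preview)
Your overall plan aligns with the paper's proof: reduce $B_{k,T}$ to $\sqrt{p_k}\,S_{k,T}$ via the $(H_T^{\sf H}H_T)^{1/2}$ correction, then feed Lemma~\ref{lm-diag} into Lemma~\ref{tlc:fq}. However, you manufacture difficulties that are not there. First, $\rho_{k,T}$ is \emph{deterministic}: it is defined as the solution in $(b,\infty)$ of $p_k g_T(x)=1$, and $g_T$ depends only on $c_T$ and $R_T$, so there is no ``random $\rho_{k,T}$'' to replace by a deterministic sequence. Second, and relatedly, the centering in your step~(iii) is \emph{exactly} zero, not merely $o(T^{-1/2})$: since $g_T(x)=-T^{-1}m_T(x)\,\tr(I_T+c_Tm_T(x)R_T)^{-1}$, the defining equation $p_k g_T(\rho_{k,T})=1$ gives $T^{-1}\tr D_T=-1$ identically, whence $T^{-1}(\tr D_T)\,I_{j_k}+I_{j_k}=0$. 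Your ``main obstacle'' therefore dissolves.

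For the joint convergence, the paper's route is shorter than your Cram\'er--Wold argument. After the replacement $B_{k,T}\leadsto\sqrt{p_k}\,S_{k,T}$, the random matrix $\sqrt{T}\,F_{k,T}$ is a function of $S_{k,T}$ alone, and the blocks $S_{1,T},\ldots,S_{t,T}$ are mutually independent (disjoint columns of the i.i.d.\ array $(s_{t,k})$). Marginal convergence in law plus independence then yields joint convergence to independent limits directly; no global quadratic-form CLT is needed. Finally, one point you under-specify: in step~(ii) the bare convergence $H_T^{\sf H}H_T\to I_K$ is not enough at the $\sqrt{T}$ scale. The paper uses the quantitative rate $(H_T^{\sf H}H_T)^{1/2}=I_K+E_T$ with $\|E_T\|={\mathcal O}(1/T)$ (from $|h(\theta_i)^{\sf H}h(\theta_j)|={\mathcal O}(1/N)$ for distinct angles) to obtain $\sqrt{T}\bigl(B_T^{\sf H}D_TB_T-P^{1/2}S_T^{\sf H}D_TS_TP^{1/2}\bigr)\toprobashort 0$.
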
 

\revju{Theorem 2.3 of \cite{ChapCouHac'12} can be adapted to obtain the 
following result:}\footnote{
In fact, \cite[Th. 2.3]{ChapCouHac'12} characterizes the asymptotic fluctuations of the random variables $\sqrt{T}(\hat\lambda_{i,T} - {\brho}_{{\cal K}(i)})$ instead of the $\sqrt{T}(\hat\lambda_{i,T} - \rho_{{\cal K}(i),T})$, so that the speed of convergence of $\nu_T$ towards $\bnu$ and of $c_T$ towards $\bc$ had to be controlled through \cite[Assumption 7]{ChapCouHac'12}. By replacing ${\brho}_k$ with $\rho_{k,T}$, the proof of \cite[Th. 2.3]{ChapCouHac'12} goes on without the need for that assumption. Replacing ${\brho}_k$ by $\rho_{k,T}$ is enough for the present purpose. 
}
\begin{proposition}
\label{chapon} 
In the setting of Theorem~\ref{th-fluct-spikes}, let $G_1,\ldots, G_s$, $G_k\in\CC^{j_k\times j_k}$, be independent GUE matrices. Then, for any bounded and continuous $f:\R^{j_1+\cdots+j_s} \to \R$,  
\begin{equation*}
\E[ f(\eta_{1,T}, \ldots, \eta_{s,T}) ] -  \E[ f(\zeta_1, \ldots, \zeta_s) ] \to 0
\end{equation*}
where $\zeta_k$ is the random vector of the decreasingly ordered eigenvalues of $(p_k \bg(\brho_k)')^{-1} ( \sqrt{\alpha_k} G_k + \sqrt{T} F_{k,T} )$. 
\end{proposition}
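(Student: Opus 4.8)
The plan is to transcribe the proof of \cite[Th.~2.3]{ChapCouHac'12}, which treats precisely the information-plus-noise setting at hand --- with the extra assumptions of Theorem~\ref{th-fluct-spikes} ($\E[s_{1,1}^u(s_{1,1}^*)^v]=0$ for $u+v\le 4$, $u\ne v$, and $\E|s_{1,1}|^8<\infty$) the matrices $V_T=W_TR_T^{1/2}$ and $A_T=U_TB_T^{\sf H}$, with $U_T=H_T(H_T^{\sf H}H_T)^{-1/2}$ and $B_T=S_TP^{1/2}(H_T^{\sf H}H_T)^{1/2}$, meet its hypotheses --- and to change only the point at which the spikes are recentered. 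I would begin from the determinant characterization underlying the proof of Theorem~\ref{th:quad_forms}: after the linearization of $Y_TY_T^{\sf H}$, a real $\hat\lambda$ at positive distance from $\support(\mu_T)$ is an eigenvalue of $Y_TY_T^{\sf H}$, with a given multiplicity, if and only if $z=\sqrt{\hat\lambda}$ is a zero of $\det\hat H_T(z)$ of that multiplicity, $\hat H_T$ being the $2K\times 2K$ matrix of Theorem~\ref{th:quad_forms}. The deterministic-equivalent computation carried out in the proof of Theorem~\ref{th:quad_forms} shows that $\hat H_T(z)$ is, up to a norm-vanishing error, block diagonal in the basis adapted to the groups of equal powers, the $k$-th block being $\bigl[\begin{smallmatrix} z\bm(z^2)I_{j_k} & I_{j_k}\\ I_{j_k} & z\tilde\bm(z^2)p_kI_{j_k}\end{smallmatrix}\bigr]$, of determinant $(p_k\bg(z^2)-1)^{j_k}$; combining this with Theorem~\ref{th:1stord}, for each $k\le s$ and all large $T$ the $j_k$ isolated eigenvalues of $Y_TY_T^{\sf H}$ converging to $\brho_k$ are exactly the zeros of $\det\hat H_{k,T}(z)$ in a fixed neighbourhood of $\sqrt{\brho_k}$, where $\hat H_{k,T}$ is the corresponding $2j_k\times 2j_k$ principal submatrix of $\hat H_T$ (the off-block entries being $o(1)$ near $\brho_k$ and contributing only higher-order corrections).

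The only new point is to expand $\det\hat H_{k,T}$ at the \emph{finite-$T$} location $\rho_{k,T}$ rather than at $\brho_k$. By the paragraph preceding Lemma~\ref{lmDelta}, $g_T(x)=xm_T(x)\tilde m_T(x)$ is real-analytic on $(b',\infty)$ for all large $T$, $p_kg_T(x)=1$ has there the unique solution $\rho_{k,T}$, and $\rho_{k,T}\to\brho_k$ while $g_T\to\bg$, $g_T'\to\bg'$ uniformly on compact subsets of $(b,\infty)$, so that $p_kg_T'(\rho_{k,T})\to p_k\bg'(\brho_k)\ne0$. Redoing the deterministic-equivalent computation with $\nu_T,c_T$ in place of $\nu,\bc$ shows that the deterministic equivalent $\bar H_{k,T}(z)$ of $\hat H_{k,T}(z)$ --- the $k$-th block with $m_T,\tilde m_T$ in place of $\bm,\tilde\bm$ --- has determinant $(p_kg_T(z^2)-1)^{j_k}$, which vanishes to order $j_k$ \emph{exactly} at $z^2=\rho_{k,T}$: this is precisely why recentering at $\rho_{k,T}$ removes the need to control the rates $\nu_T\to\nu$ and $c_T\to\bc$, i.e.\ \cite[Assumption~7]{ChapCouHac'12}. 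Writing the $j_k$ eigenvalues of $Y_TY_T^{\sf H}$ near $\brho_k$ as $\rho_{k,T}+\eta_{k,T}^{(m)}/\sqrt{T}$, substituting into $\det\hat H_{k,T}(z)=0$, and using $\|\hat H_{k,T}(z)-\bar H_{k,T}(z)\|\asto0$ together with a central limit theorem for the $\sqrt{T}$-scaled fluctuation of $\hat H_{k,T}$ at $z^2=\rho_{k,T}$, a Schur-complement reduction onto the $j_k$-dimensional near-kernel of $\bar H_{k,T}$ turns the equation into the vanishing of the determinant of a $j_k\times j_k$ matrix of the form $p_kg_T'(\rho_{k,T})(z^2-\rho_{k,T})I_{j_k}+T^{-1/2}\Theta_{k,T}+o(T^{-1/2})$. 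Hence, up to $o(1)$ and with deterministic factors and signs absorbed into $\Theta_{k,T}$, the vector $\eta_{k,T}$ consists of the decreasingly ordered eigenvalues of $(p_k\bg'(\brho_k))^{-1}\Theta_{k,T}$, where $\Theta_{k,T}$ is the reduced $\sqrt{T}$-scaled fluctuation of $\hat H_{k,T}$.

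It remains to identify $\Theta_{k,T}$ and to establish joint convergence over $k=1,\ldots,s$. Splitting the $\sqrt{T}$-scaled fluctuation of $\hat H_{k,T}$ at $\rho_{k,T}$ into the ``noise'' entries $zU_{k,T}^{\sf H}Q_T(z^2)U_{k,T}-z\bm(z^2)I_{j_k}$ and $U_{k,T}^{\sf H}V_T\tilde Q_T(z^2)B_{k,T}$ (with its adjoint), and the ``signal'' entry $\sqrt{T}\big(zB_{k,T}^{\sf H}\tilde Q_T(z^2)B_{k,T}-z\tilde m_T(z^2)p_kI_{j_k}\big)$, I would invoke the central limit theorem for bilinear forms of the resolvents $Q_T,\tilde Q_T$ from \cite{ChapCouHac'12} --- using $\E|s_{1,1}|^8<\infty$, $H_T^{\sf H}H_T\asto I_K$, and the moment conditions forcing circular off-diagonal entries: the noise entries converge in law to $\sqrt{\alpha_k}\,G_k$ with $G_k$ a $j_k\times j_k$ GUE matrix and $\alpha_k$ the variance of Theorem~\ref{th-fluct-spikes}, while the signal entry equals, up to the deterministic factors that merge into $p_k\bg'(\brho_k)$, exactly $\sqrt{T}F_{k,T}$. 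The two sources are asymptotically independent (linear versus quadratic chaos in $S_T$, and independence of $S_T$ from $W_T$ for the noise part), the $G_k$ are mutually independent and independent of the $F_{k,T}$, and the blocks $k=1,\ldots,s$ decouple because resolvent bilinear forms at the well-separated points $\brho_1>\cdots>\brho_s$ decorrelate as $T\to\infty$. Assembling these facts yields, for every bounded continuous $f$, the announced $\E[f(\eta_{1,T},\ldots,\eta_{s,T})]-\E[f(\zeta_1,\ldots,\zeta_s)]\to0$ with $\zeta_k$ the decreasingly ordered eigenvalues of $(p_k\bg'(\brho_k))^{-1}(\sqrt{\alpha_k}G_k+\sqrt{T}F_{k,T})$.

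The recentering at $\rho_{k,T}$ is the only genuinely new ingredient and is immediate once $\rho_{k,T}$ is available; the delicate parts, both imported wholesale from \cite{ChapCouHac'12}, are the precise central limit theorem for the bilinear and quadratic forms of $Q_T,\tilde Q_T$ that produces the GUE structure and the explicit variance $\alpha_k$, and the uniform error control showing that the off-block entries of $\hat H_T$ and the Taylor remainders are $o(T^{-1/2})$ near each $\rho_{k,T}$, so that precisely $j_k$ zeros of $\det\hat H_T$ cluster at $\brho_k$ with the stated fluctuations. As recorded in the footnote above, the argument of \cite[Th.~2.3]{ChapCouHac'12} goes through verbatim once $\brho_k$ is replaced by $\rho_{k,T}$, so a full write-up may legitimately defer those inputs to that reference.
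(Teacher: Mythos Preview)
Your proposal is correct and follows exactly the paper's approach: the paper does not prove Proposition~\ref{chapon} independently but simply invokes \cite[Th.~2.3]{ChapCouHac'12} together with the footnote remark that recentering at $\rho_{k,T}$ instead of $\brho_k$ makes \cite[Assumption~7]{ChapCouHac'12} unnecessary. Your write-up expands on the internal structure of that cited argument (the determinant equation for $\hat H_T$, the block decoupling, the Schur reduction, and the noise/signal splitting yielding $\sqrt{\alpha_k}G_k+\sqrt{T}F_{k,T}$), but this is precisely the content of \cite[Th.~2.3]{ChapCouHac'12} that the paper imports wholesale, so the two treatments coincide.
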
 

\revju{By Lemma \ref{lm:fq-totale}, the $s$-uple of matrices 
$(\sqrt{\alpha_k} G_k + \sqrt{T} F_{k,T})_{k=1}^s$ converges in distribution 
to the $s$-uple $(M_1,\ldots, M_s)$ provided in the statement of Theorem 
\ref{th-fluct-spikes}. Applying Proposition \ref{chapon}, this theorem is 
proven.}  

\section{Simulation results}\label{simulation}
We consider the setting of Section~\ref{example-doa}, with signals $s_{t,k}$ drawn from a QPSK constellation for which $\kappa=-1$. The signal power $a_k^2$ defines the signal-to-noise ratio (SNR). The noise is issued from an autoregressive (AR) process of order $1$ and parameter $a$, so that $[R_T]_{k,l}=a^{|k-l|}$. All other parameters are given in the figure captions.

\revju{In Figure~\ref{det3}, the probability of correct order estimation of the estimator proposed in Proposition~\ref{1stord-M} is compared against the MDL and AIC criteria, for $K=2$ equal power sources, for growing $N$, and for $c_T=0.5$ fixed. We observe that the proposed estimator outperforms the MDL and the AIC methods, consistently with the known inappropriateness of the latter. Note that the AIC particularly fails to detect any source, irrespective of $N$. 

In Figure~\ref{det1}, the false alarm rate (FAR) and correct detection rate (CDR) for single source detection is evaluated for different values of $\varepsilon$ and for growing ratios $c_T$. We observe here the impact of an appropriate choice of $\varepsilon$ which, if too small, generates a high FAR when the noise eigenvalues tend to spread (\emph{i.e.} for $c_T$ large) while, if too large, does not allow for correct source detection close to the detectability threshold (\emph{i.e.} for $c_T$ large).}

Figure~\ref{power} depicts the normalized mean square error (NMSE) $\E [(\hat{a}^2_1-a_1^2)^2a_1^{-4}]$ of the power estimation of Proposition~\ref{1stord-M} against its theoretical value obtained from Theorem~\ref{th-fluct-spikes}. For the purpose of analysis, we assume that the source is always detected, {\it i.e.} $\hat{k}_T=1$, irrespective of the SNR. As confirmed by Proposition~\ref{limit_snr}, the theoretical variance diverges as $p_k \downarrow p_{\rm lim}$. We however observe that in the finite $N,T$ regime, the power estimator errors remain bounded at low SNR. This is explained by the fact that, while the theoretical error diverges due to $\bDelta\downarrow 0$ (see Lemma~\ref{lmDelta}) as $p_k \downarrow p_{\rm lim}$, its estimator for each $N,T$ (obtained by replacing $\bm$ by $\hat{m}_T$) is always non-zero even for $p_k=p_{\rm lim}$. In the high SNR regime, here with $\kappa=-1$, the NMSE becomes linear (in dB scale) with slope $-10$ dB/decade. It is easily shown that the limiting SNR gap between the proposed and oracle estimators is exactly 
\begin{equation*}
	10\log \Big( \int_0^1|\bp(\exp(2\imath \pi u))|^2du \cdot \int_0^1 |\bp(\exp(2\imath\pi u))|^{-2}du\Big)~\text{dB}
\end{equation*}
which is merely due to a gain in SNR after whitening. In particular, the larger the correlation parameter $a$, the bigger the limiting gap.

In Figure~\ref{doa2}, the mean square error $\E[(\hat{\gamma}(\theta_1)-{\gamma}(\theta_1))^2]$ of the localization function at position $\theta_1=10^\circ$ is compared against the performances of the oracle estimator (which performs pre-whitening prior to using the estimator of \cite{HachLouMes'11} or equivalently that of Proposition~\ref{prop:GMUSIC}) and of the traditional MUSIC estimator with localization function $\hat{\gamma}_{ {\rm trad},T}(\theta)\triangleq \sum_{k=1}^{\hat{k}_T} h(\theta)^{\sf H} \hat{u}_{k,T}\hat{u}_{k,T}^{\sf H}h(\theta)$ in the notations of Proposition~\ref{1stord-M}. The source is again supposed always detected so that $\hat{k}_T=1$ throughout the experiment. The proposed estimator outperforms greatly the traditional MUSIC approach here, which is both due to the large $N,T$ regime improvement and to the consideration of the non-white noise setting. The oracle estimator shows a huge performance improvement in the low SNR regime, which translates the fact that condition \eqref{cond:spike} (which needs to be fulfilled for either method to be valid) is extremely demanding when $a=0.6$ (due to $\support(\bmu)$ being large). In the large SNR regime, a constant gap is maintained which, although we do not provide theoretical support, appears as a similar SNR-gap phenomenon as observed in Figure~\ref{power}.

In Figure~\ref{doa1}, we now take $K=2$ sources, with $a_1=a_2$ the amplitude of which define the SNR, and again assuming $\hat{k}_T=2$. Here are compared the performances of resolution of two close sources located at $\theta_1=10^{\circ}$ and $\theta_2=12^{\circ}$ for the localization method proposed in Proposition~\ref{prop:GMUSIC}, for the oracle estimator, and for the traditional MUSIC estimator. The figure of merit, referred to as resolution probability, is the probability of identifying exactly two local minima of the localization function in the window $[5^{\circ},17^{\circ}]$. We observe that the proposed algorithm performs significantly better than the traditional MUSIC method, confirming the results of \cite{HachLouMes'11} for the current model. 

\revju{
\begin{figure}[H]
\center
  \begin{tikzpicture}[font=\footnotesize]
    \renewcommand{\axisdefaulttryminticks}{2} 
    \pgfplotsset{every axis/.append style={mark options=solid, mark size=2pt}}
    \tikzstyle{every major grid}+=[style=densely dashed]       
\pgfplotsset{every axis legend/.append style={fill=white,cells={anchor=west},at={(0.99,0.05)},anchor=south east}}    
    \tikzstyle{every axis y label}+=[yshift=-10pt] 
    \tikzstyle{every axis x label}+=[yshift=5pt]
    \begin{axis}[
      grid=major,
      xlabel={$N$},
      ylabel={Probability of correct order estimation},
      xmin=12,
      xmax=30, 
      ymin=0, 
      ymax=1,
      width=0.7\columnwidth,
      height=0.45\columnwidth
      ]
%
%
%
%
%
%
%
%
%
      \addplot[smooth,black,line width=0.5pt] plot coordinates{
	      (12.000000,0.586200)(13.000000,0.694000)(14.000000,0.790500)(15.000000,0.858500)(16.000000,0.910500)(17.000000,0.948300)(18.000000,0.970100)(19.000000,0.980000)(20.000000,0.990200)(21.000000,0.992100)(22.000000,0.994700)(23.000000,0.997500)(24.000000,0.997400)(25.000000,0.998300)(26.000000,0.999300)(27.000000,0.999000)(28.000000,0.999100)(29.000000,0.998000)(30.000000,0.998100)

      };
      \addplot[smooth,black,line width=0.5pt,mark=square] plot coordinates{
	      (12.000000,0.510600)(13.000000,0.534700)(14.000000,0.555500)(15.000000,0.569300)(16.000000,0.586600)(17.000000,0.606400)(18.000000,0.620900)(19.000000,0.643900)(20.000000,0.665300)(21.000000,0.684200)(22.000000,0.705900)(23.000000,0.723700)(24.000000,0.743300)(25.000000,0.755800)(26.000000,0.770800)(27.000000,0.783900)(28.000000,0.796500)(29.000000,0.812600)(30.000000,0.834200)

      };
      \addplot[smooth,black,line width=0.5pt,mark=o] plot coordinates{
	      (12.000000,0.020500)(13.000000,0.013800)(14.000000,0.008900)(15.000000,0.002900)(16.000000,0.002000)(17.000000,0.001000)(18.000000,0.001000)(19.000000,0.000400)(20.000000,0.000200)(21.000000,0.000200)(22.000000,0.000000)(23.000000,0.000000)(24.000000,0.000000)(25.000000,0.000100)(26.000000,0.000000)(27.000000,0.000000)(28.000000,0.000000)(29.000000,0.000000)(30.000000,0.000000)

      };

	\legend{{Proposed},{MDL},{AIC}}
       \end{axis}
       \end{tikzpicture}
       \caption{Probability of correct order estimation versus $N$ with $K=2$, SNR$=10$ dB (same power for each source), $L=5$, $\varepsilon=0.75$, $c_T=0.5$, and $a=0.6$.}
\label{det3}
\end{figure}
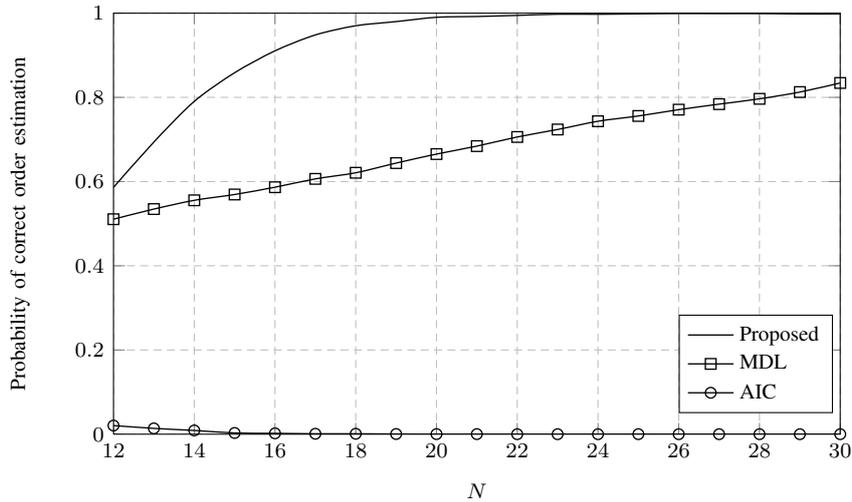
\begin{figure}[H]
\center
	\begin{tikzpicture}[font=\footnotesize]
		\renewcommand{\axisdefaulttryminticks}{2} 
		\pgfplotsset{every axis/.append style={mark options=solid, mark size=2pt}}
		\tikzstyle{every major grid}+=[style=densely dashed]       
		\pgfplotsset{every axis legend/.append style={fill=white,cells={anchor=west},at={(0.01,0.2)},anchor=south west}}    
		\tikzstyle{every axis y label}+=[yshift=-10pt] 
		\tikzstyle{every axis x label}+=[yshift=5pt]
		\begin{axis}[
				grid=major,
				xlabel={$c_T$},
				ylabel={Probability},
				xtick = {1,0.9,0.8,0.7,0.6,0.5,0.4,0.3,0.2,0.1},
				xmin=0.1,
				xmax=1, 
				ymin=0, 
				ymax=1,
				width=0.7\columnwidth,
				height=0.45\columnwidth
			]
			\addplot[smooth,black,line width=0.5pt,mark=o] plot coordinates{
				(0.100000,1.000000)(0.200000,1.000000)(0.300000,1.000000)(0.400000,1.000000)(0.500000,1.000000)(0.600000,1.000000)(0.700000,1.000000)(0.800000,1.000000)(0.900000,1.000000)(1.000000,1.000000)

			};
			\addplot[smooth,black,densely dashed,line width=0.5pt,mark=o] plot coordinates{
				(0.100000,0.000000)(0.200000,0.001600)(0.300000,0.015000)(0.400000,0.060200)(0.500000,0.156300)(0.600000,0.292400)(0.700000,0.434800)(0.800000,0.594400)(0.900000,0.725700)(1.000000,0.828800)

			};
			\addplot[smooth,black,line width=0.5pt,mark=square] plot coordinates{
				(0.100000,1.000000)(0.200000,1.000000)(0.300000,1.000000)(0.400000,1.000000)(0.500000,0.999800)(0.600000,0.998800)(0.700000,0.993800)(0.800000,0.978700)(0.900000,0.954600)(1.000000,0.927100)

			};
			\addplot[smooth,black,densely dashed,line width=0.5pt,mark=square] plot coordinates{
				(0.100000,0.000000)(0.200000,0.000000)(0.300000,0.000000)(0.400000,0.000000)(0.500000,0.000200)(0.600000,0.002000)(0.700000,0.003000)(0.800000,0.009400)(0.900000,0.020000)(1.000000,0.041400)

			};
			\addplot[smooth,black,line width=0.5pt,mark=star] plot coordinates{ 
				(0.100000,1.000000)(0.200000,1.000000)(0.300000,0.995600)(0.400000,0.934700)(0.500000,0.761400)(0.600000,0.549800)(0.700000,0.393600)(0.800000,0.288500)(0.900000,0.217500)(1.000000,0.164800)

			};
			\addplot[smooth,black,densely dashed,line width=0.5pt,mark=star] plot coordinates{
				(0.100000,0.000000)(0.200000,0.000000)(0.300000,0.000000)(0.400000,0.000000)(0.500000,0.000000)(0.600000,0.000000)(0.700000,0.000000)(0.800000,0.000000)(0.900000,0.000000)(1.000000,0.000000)

			};
			\legend{ {CDR $\varepsilon=0.5$},{FAR $\varepsilon=0.5$},{CDR $\varepsilon=1$},{FAR $\varepsilon=1$},{CDR $\varepsilon=2$},{FAR $\varepsilon=2$}}

		\end{axis}
	\end{tikzpicture}

\caption{CDR (plain curve) and FAR (dashed curves) versus $c_T$ with $K=1$, $N=20$, SNR$=10$ dB, $L=5$, and $a=0.6$.}
	\label{det1}
\end{figure}
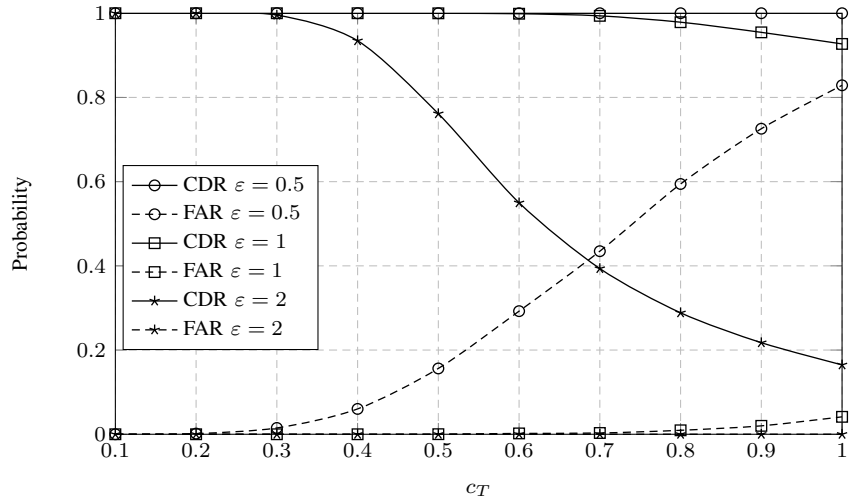
}
\begin{figure}[H]
\center
  \begin{tikzpicture}[font=\footnotesize]
    \renewcommand{\axisdefaulttryminticks}{2} 
    \tikzstyle{every major grid}+=[style=densely dashed]       
\pgfplotsset{every axis legend/.append style={fill=white,cells={anchor=west},at={(0.99,0.99)},anchor=north east}}    
    \tikzstyle{every axis y label}+=[yshift=-10pt] 
    \tikzstyle{every axis x label}+=[yshift=5pt]
    \begin{semilogyaxis}[
      grid=major,
      xlabel={SNR (dB)},
      ylabel={NMSE},
      xmin=0,
      xmax=12, 
      ymin=0.001, 
      width=0.7\columnwidth,
      height=0.45\columnwidth,
      ymax=1
      ]
      \addplot[smooth,black,line width=0.5pt] plot coordinates{
(0.000000,1.272169)(0.500000,0.968526)(1.000000,0.625557)(1.500000,0.406299)(2.000000,0.256727)(2.500000,0.178447)(3.000000,0.120174)(3.500000,0.090211)(4.000000,0.070609)(4.500000,0.055674)(5.000000,0.043802)(5.500000,0.035549)(6.000000,0.027363)(6.500000,0.022588)(7.000000,0.018147)(7.500000,0.015267)(8.000000,0.012441)(8.500000,0.010367)(9.000000,0.008823)(9.500000,0.007500)(10.000000,0.006363)(10.500000,0.005594)(11.000000,0.004871)(11.500000,0.004352)(12.000000,0.003809)
      };      
      \addplot[smooth,black,line width=0.5pt,mark=x] plot coordinates{    
(0.000000,1.258506)(0.500000,1.333790)(1.000000,1.420759)(1.500000,1.521504)(2.000000,1.638506)(2.500000,1.774796)(3.000000,1.934005)(3.500000,0.246901)(4.000000,0.117289)(4.500000,0.072280)(5.000000,0.049930)(5.500000,0.036805)(6.000000,0.028294)(6.500000,0.022398)(7.000000,0.018117)(7.500000,0.014897)(8.000000,0.012409)(8.500000,0.010442)(9.000000,0.008862)(9.500000,0.007572)(10.000000,0.006508)(10.500000,0.005620)(11.000000,0.004873)(11.500000,0.004240)(12.000000,0.003701)
      };
      \addplot[smooth,black,line width=0.5pt,mark=o] plot coordinates{
(0.000000,0.035413)(0.500000,0.030118)(1.000000,0.025651)(1.500000,0.023826)(2.000000,0.020180)(2.500000,0.017136)(3.000000,0.014248)(3.500000,0.013040)(4.000000,0.011270)(4.500000,0.010155)(5.000000,0.008481)(5.500000,0.007383)(6.000000,0.006701)(6.500000,0.005951)(7.000000,0.005257)(7.500000,0.004674)(8.000000,0.004070)(8.500000,0.003528)(9.000000,0.003346)(9.500000,0.002855)(10.000000,0.002502)(10.500000,0.002276)(11.000000,0.001928)(11.500000,0.001746)(12.000000,0.001604)
      }; 
      \addplot[smooth,black,line width=0.5pt,mark=star] plot coordinates{
(0.000000,0.037101)(0.500000,0.031319)(1.000000,0.026855)(1.500000,0.022761)(2.000000,0.019586)(2.500000,0.017049)(3.000000,0.014654)(3.500000,0.012912)(4.000000,0.011204)(4.500000,0.009822)(5.000000,0.008625)(5.500000,0.007535)(6.000000,0.006615)(6.500000,0.005856)(7.000000,0.005187)(7.500000,0.004564)(8.000000,0.004057)(8.500000,0.003584)(9.000000,0.003167)(9.500000,0.002816)(10.000000,0.002514)(10.500000,0.002217)(11.000000,0.001966)(11.500000,0.001743)(12.000000,0.001561)
      };
     
      \legend{ {Proposed},{Proposed (theory)},
      {Oracle},{Oracle (theory)}}
    \end{semilogyaxis}
  \end{tikzpicture}
  \vspace{-0.3cm}
  \caption{NMSE of the estimated power versus SNR with $K=1$, $N=20$, $c_T=0.5$, and $a=0.6$.}
  \label{power}
\end{figure}
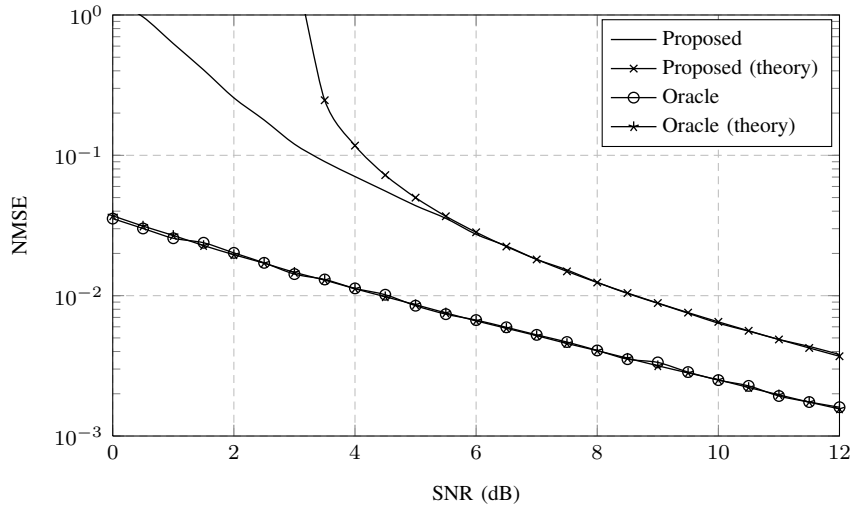
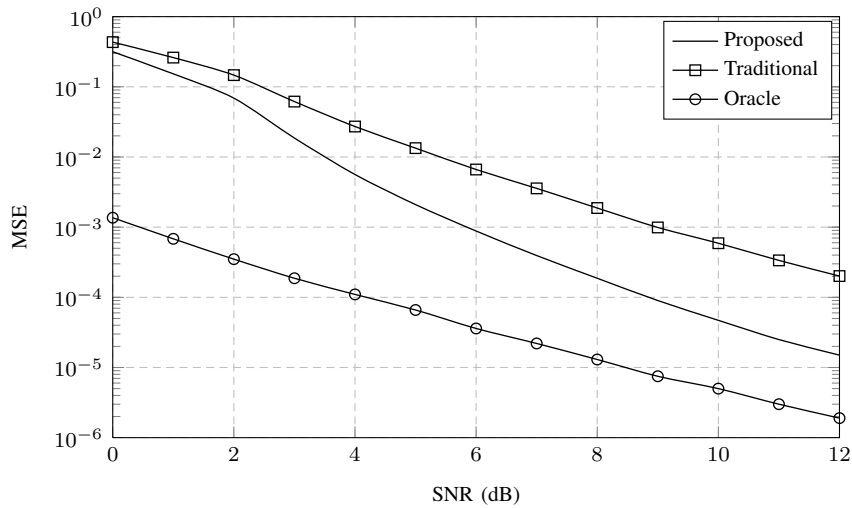
\begin{figure}[H]
\center
  \begin{tikzpicture}[font=\footnotesize]
    \renewcommand{\axisdefaulttryminticks}{2} 
    \tikzstyle{every major grid}+=[style=densely dashed]       
\pgfplotsset{every axis legend/.append style={fill=white,cells={anchor=west},at={(0.99,0.99)},anchor=north east}}    
    \tikzstyle{every axis y label}+=[yshift=-10pt] 
    \tikzstyle{every axis x label}+=[yshift=5pt]
    \begin{semilogyaxis}[
      grid=major,
      xlabel={SNR (dB)},
      ylabel={MSE},
      xmin=0,
      xmax=12, 
      ymax=1,
      ymin=1e-6,
      width=0.7\columnwidth,
      height=0.45\columnwidth, 
      ymax=1
      ]
      \addplot[smooth,black,line width=0.5pt] plot coordinates{
(0.000000,0.315907)(1.000000,0.153017)(2.000000,0.068990)(3.000000,0.018610)(4.000000,0.005627)(5.000000,0.002096)(6.000000,0.000881)(7.000000,0.000394)(8.000000,0.000187)(9.000000,0.000090)(10.000000,0.000047)(11.000000,0.000025)(12.000000,0.000015)(13.000000,0.000009)(14.000000,0.000005)(15.000000,0.000003)(16.000000,0.000002)(17.000000,0.000001)(18.000000,0.000001)(19.000000,0.000000)(20.000000,0.000000)(21.000000,0.000000)(22.000000,0.000000)(23.000000,0.000000)(24.000000,0.000000)(25.000000,0.000000)
      };
      \addplot[smooth,black,line width=0.5pt,mark=square] plot coordinates{
(0.000000,0.433365)(1.000000,0.260837)(2.000000,0.146764)(3.000000,0.061611)(4.000000,0.027175)(5.000000,0.013368)(6.000000,0.006628)(7.000000,0.003570)(8.000000,0.001873)(9.000000,0.000990)(10.000000,0.000590)(11.000000,0.000336)(12.000000,0.000201)(13.000000,0.000122)(14.000000,0.000073)(15.000000,0.000044)(16.000000,0.000026)(17.000000,0.000017)(18.000000,0.000010)(19.000000,0.000006)(20.000000,0.000004)(21.000000,0.000003)(22.000000,0.000002)(23.000000,0.000001)(24.000000,0.000001)(25.000000,0.000000)
      };
      \addplot[smooth,black,line width=0.5pt,mark=o] plot coordinates{
(0.000000,0.001363)(1.000000,0.000682)(2.000000,0.000350)(3.000000,0.000187)(4.000000,0.000110)(5.000000,0.000066)(6.000000,0.000036)(7.000000,0.000022)(8.000000,0.000013)(9.000000,0.0000075)(10.000000,0.000005)(11.000000,0.000003)(12.000000,0.0000019)(13.000000,0.0000012)(14.000000,0.0000008)(15.000000,0.000000)(16.000000,0.000000)(17.000000,0.000000)(18.000000,0.000000)(19.000000,0.000000)(20.000000,0.000000)(21.000000,0.000000)(22.000000,0.000000)(23.000000,0.000000)(24.000000,0.000000)(25.000000,0.000000)
      };

      \legend{ {Proposed},{Traditional},{Oracle}}
    \end{semilogyaxis}
  \end{tikzpicture}
  \vspace{-0.3cm}
  \caption{MSE of the localization function versus SNR with $K=1$, $N=20$, $c_T=0.2$, and $a=0.6$.}
  \label{doa2}
\end{figure}
\begin{figure}[H]
\center
  \begin{tikzpicture}[font=\footnotesize]
    \renewcommand{\axisdefaulttryminticks}{2} 
    \tikzstyle{every major grid}+=[style=densely dashed]       
\pgfplotsset{every axis legend/.append style={fill=white,cells={anchor=west},at={(0.99,0.02)},anchor=south east}}    
    \tikzstyle{every axis y label}+=[yshift=-10pt] 
    \tikzstyle{every axis x label}+=[yshift=5pt]
     \begin{axis}[
      grid=major,
      xlabel={SNR (dB)},
      ylabel={Resolution probability},
      xmin=10,
      xmax=22,
      ymax=1, 
      ymin=0,
      width=0.7\columnwidth,
      height=0.45\columnwidth, 
      ymax=1
      ]
      \addplot[smooth,black,line width=0.5pt] plot coordinates{
(10.000000,0.436500)(11.000000,0.521900)(12.000000,0.615400)(13.000000,0.711100)(14.000000,0.804400)(15.000000,0.886400)(16.000000,0.951800)(17.000000,0.984200)(18.000000,0.996100)(19.000000,0.999500)(20.000000,1.000000)(21.000000,1.000000)(22.000000,1.000000)(23.000000,1.000000)(24.000000,1.000000)(25.000000,1.000000)
      };
      \addplot[smooth,black,line width=0.5pt,mark=square] plot coordinates{
(10.000000,0.000300)(11.000000,0.000200)(12.000000,0.001200)(13.000000,0.010100)(14.000000,0.051500)(15.000000,0.166100)(16.000000,0.386300)(17.000000,0.665300)(18.000000,0.873000)(19.000000,0.967600)(20.000000,0.997200)(21.000000,0.999900)(22.000000,1.000000)(23.000000,1.000000)(24.000000,1.000000)(25.000000,1.000000)
      };
      \addplot[smooth,black,line width=0.5pt,mark=o] plot coordinates{
(10.000000,0.852600)(11.000000,0.926200)(12.000000,0.967800)(13.000000,0.986100)(14.000000,0.996300)(15.000000,0.999700)(16.000000,1.000000)(17.000000,1.000000)(18.000000,1.000000)(19.000000,1.000000)(20.000000,1.000000)(21.000000,1.000000)(22.000000,1.000000)(23.000000,1.000000)(24.000000,1.000000)(25.000000,1.000000)
      };

      \legend{ {Proposed},{Traditional},{Oracle}}
    \end{axis}
  \end{tikzpicture}
  \vspace{-0.3cm}
  \caption{Resolution probability versus SNR with $K=2$, $N=20$, $c_T=0.2$, and $a=0.6$.}
  \label{doa1}
\end{figure}
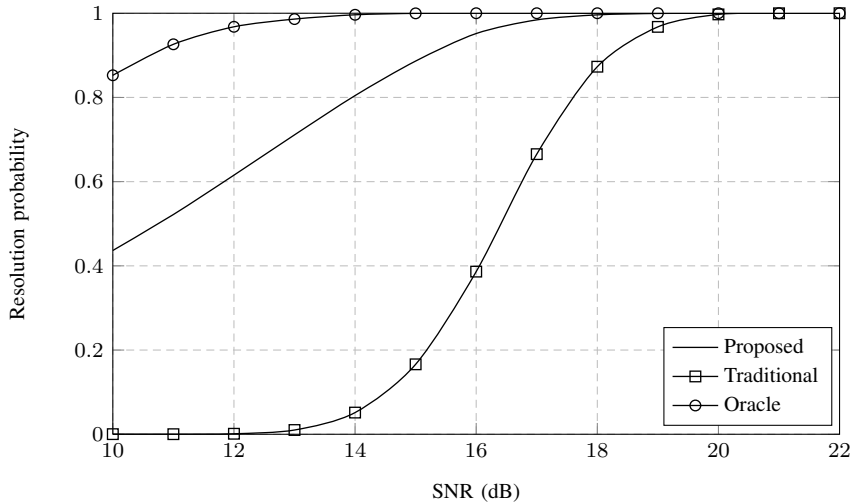

\section{Conclusion and research prospects} 
\label{sec:conclusion}
This article introduced a novel set of statistical inference methods for large dimensional information-plus-noise models with multiple sources, \revju{when the noise is correlated in time while the information is correlated in space (or vice-versa)}. These techniques were proved consistent in the limiting regime where both the system size and the number of observations go large. The approach pursued here relies on the asymptotic spectral separation between noise and signal in the observed sample covariance matrix. Under the same hypotheses, using instead prior information on the noise structure, an alternative approach could consist in estimating the noise covariance in the presence of signals, similar to \cite{bickel2008regularized} which treats the noise-only case. It is expected that this approach performs better in the low SNR regime, resurrecting signals unseen by our current method. In the high SNR regime, the covariance estimation will instead be too degraded for this method to be beneficial. A trade-off is therefore expected between both approaches, which we shall study in a future work.

In the specific problem of signal detection, the choice of the eigenvalue ``gap parameter'' $\varepsilon$ does not account for the observation of the small eigenvalues of $Y_TY_T^\herm$ as for the power and direction-of-arrival estimation techniques (through $\hat{m}_T$). It seems nonetheless natural to be able to evaluate the right-edge of $\support(\bmu)$ from these eigenvalues, thus resulting in a test to compare $\hat{\lambda}_{i,T}$, $i=1,\ldots,L$, to the estimated edge. To finely tune the test, one can then use the results from \cite{elkar'07} which proves Tracy-Widom fluctuations at the edge with scaling coefficient $\underline x''(m_b)$ ($m_b$ given by Corollary~\ref{cor:edge}). However, estimating both the edge and this coefficient constitute a challenging problem so far.

\appendix

\subsection{Proof of Corollary \ref{cor:edge}} 
\label{prf:coredge} 
The derivative 
\begin{equation*}
\bx'(m) = \frac{1}{m^2} 
- \bc \int \left( \frac{t}{1+\bc m t} \right)^2 \bnu(dt) 
\end{equation*}
of $\bx(m)$ is continuous and increasing on $( -(\bc b_\bnu)^{-1}, 0)$, and $\bx'(m)\to\infty$ as $m \uparrow 0$. To establish the proposition, it will be enough to show that \revju{$\bx'(m) \to -\infty$} as $m \downarrow -(\bc b_\bnu)^{-1}$. This is obvious when $\bnu(b_\bnu) > 0$. Assume then $\bnu(b_\bnu) = 0$. When $m \downarrow -(\bc b_\bnu)^{-1}$, by the monotone convergence theorem 
\begin{align*} 
\int \frac{t^2}{(1+\bc m t)^2} \bnu(dt) 
\uparrow 
\int \frac{t^2}{(1- t/b)^2} \bnu(dt) \geq \int_{[ b_\bnu-\varepsilon, b_\bnu]} 
\frac{b^2t^2}{(b- t)^2} f_\bnu(t) \, dt = \infty
\end{align*} 
from the behavior of $f_\bnu(t)$ near $b_\bnu$, which proves the result. 

\subsection{Proof of Lemma \ref{lmDelta}}
\label{prf-Delta}

Considering Equation \eqref{m=f(m)}, we obtain after some calculus that $\bm'(x) = \bm^2(x) / {\bs\Delta}(x)$ on $(b,\infty)$. Since $\bm(x)$ is negative and increasing on $(b,\infty)$, both $\bm'(x)$ and $\bm^2(x)$ are positive on this interval so that ${\bs\Delta}(x) > 0$ on $(b,\infty)$. \\
Proposition~\ref{prop:edge} shows that $b$ coincides with the minimum of $\bx(m)$ on $((-\bc b_\bnu)^{-1}, 0)$. Moreover, when Assumption~\ref{nu:edge} is satisfied (which is the case for the model \eqref{model} by Lemma~\ref{lm:model}), the proof of Corollary~\ref{cor:edge} shows that $\bx(m)$ attains its minimum at a unique point $m_b \in ((-\bc b_\bnu)^{-1}, 0)$, and $\bx'(m_b) = 0$. Finally, Proposition~\ref{prop:edge} shows that $\bx(m)$ is the inverse of $\bm(x)$ on $(b,\infty)$. It results that $\bm(x)\to m_b$ and $\bm'(x) = 1 / \bx'(\bm(x))\to\infty$ as $x\downarrow b$. This proves ${\bs\Delta}(x) \to 0$ as $x\downarrow b$. \\
When $x\to\infty$, both $(x\bm(x))^2 = (\int x(t-x)^{-1} \bmu(dt))^2$ and $x^2\bm'(x) = \int x^2 (t-x)^{-2} \bmu(dt)$ converge to $1$. Hence, ${\bs\Delta}(x) = (x\bm(x))^2 (x^2\bm'(x))^{-1} \to 1$, concluding the proof. 

\subsection{Theorem \ref{2ord-powers}: main steps of the proof} 
\label{prf:2ord-powers}
For simplicity, we focus on the fluctuations of $\sqrt{T}(\hat p_{1,T} - p_1)$. Recall that $\hat p_{1,T} = \hat g_T(\hat\lambda_{1,T})^{-1}$ and $p_1 = g_T(\rho_{1,T})^{-1}$. Define $\underline g_T(x) = \underline m_T(x) ( x c_T \underline m_T(x) + c_T - 1)$ with $\underline m_T(x)$ defined in Theorem~\ref{th:lsm}-\ref{cvg-unif}). We have 
\begin{align*} 
\sqrt{T}(\hat p_{1,T} - p_1) &= 
\sqrt{T}( \hat g_T(\hat\lambda_{1,T})^{-1} - g_T(\rho_{1,T})^{-1}) \\ 
&=  
\sqrt{T}( \hat g_T(\hat\lambda_{1,T})^{-1} 
                 - \underline g_T(\hat\lambda_{1,T})^{-1}) \\ 
&\phantom{=} + \sqrt{T}( \underline g_T(\hat\lambda_{1,T})^{-1} 
                        - g_T(\hat\lambda_{1,T})^{-1})  \\
&\phantom{=} + \sqrt{T}( g_T(\hat\lambda_{1,T})^{-1} - g_T(\rho_{1,T})^{-1}) \\
&\triangleq f_{1,T}(\hat\lambda_{1,T}) + f_{2,T}(\hat\lambda_{1,T}) + 
f_{3,T}(\hat\lambda_{1,T}) . 
\end{align*} 
As $\lambda_{1,T} \toasshort \brho_1$, we can replace $f_{1,T}(\hat\lambda_{1,T})$ by $f_{1,T}(\hat\lambda_{1,T}) \1_I(\lambda_{1,T})$ where $\1_I$ is the indicator function on a small compact interval $I$ in a neighborhood of $\brho_1$. Mimicking the proof of Theorem~\ref{th:estim-p}, we can show that $\sup_{x \in I} f_{1,T}(x) \toprobashort 0$. We similarly restrict $f_{2,T}$ to $I$. On this set, it is possible to show that the random process $T (\underline m_T(x) - m_T(x))$ valued in the set $C(I)$ of the continuous functions on $I$, converges in distribution towards a Gaussian process in $C(I)$. This result was shown in \cite{bai-sil-clt04} for $I$ a compact path of $\C_+$; this can be generalized to the interval $I$ of interest in this proof by using the Gaussian tools used in {\it e.g.} \cite{ChapCouHac'12}. As a result, $\sup_{x \in I} f_{2,T}(x) \toprobashort 0$. To deal with $f_{3,T}$, we start by observing that $g_T(\rho_{k,T}) \to \bg(\brho_k)$ and $(1/g_T(\rho_{k,T}))' \to - \bg'(\brho_k) / \bg^2(\brho_k) = - p_k^2 \bg'(\brho_k)$. Using the result of Theorem \ref{th-fluct-spikes} and applying the Delta method \cite[Prop.~6.1.6]{bro-dav-91}, we can show that $f_{3,T}(\hat\lambda_{1,T}) \tolawshort p_1 [M_1]_{11}$. The generalization to the vectors $\xi_{k,T}$ defined in the theorem shows no major difficulty. 

\subsection{Proof of Proposition \ref{limit_snr}}
\label{prf-limsnr}

From Theorem~\ref{th:1stord}, $\brho_k \downarrow b$ as $p_k \downarrow p_{\rm lim}$. Hence, by Lemma~\ref{lmDelta}, $\bDelta(\brho_k) \to 0$ as $p_k \downarrow p_{\rm lim}$. Moreover, the proof of this lemma shows that $|\bm(\rho_k)|$ remains bounded as $\brho_k \downarrow b$. Hence, since $\bnu \neq \delta_0$ by Assumption~\ref{ass:R}, the integrals in the expression of $\alpha_k$ are lower bounded by a positive number as $p_k \downarrow p_{\rm lim}$. Thus, $\alpha_k \to\infty$ which proves the first part of the lemma. \\
When $p_k \to\infty$, $\brho_k / p_k \to 1$ and $\brho_k \bm(\brho_k) \to -1$. Taking $p_k \to\infty$ into the expressions of the integrals on the right hand sides of the expressions of $\alpha_k$, $\beta_k$, and $\phi_k$ and recalling that $\bDelta(\brho_k) \to 1$, we get $\alpha_k \to 0$, $\beta_k \to 1$, and $\phi_k \to 1$, which proves the lemma. 

\subsection{Lemma \ref{tlc:fq}: sketch of the proof} 
\label{prf-tlc:fq} 

The fluctuations of quadratic forms of the type $s_T^H D_T s_T$ where $s_T \in \C^T$ has i.i.d. entries have been well studied ({\it e.g.} \cite[Th.~2.1]{BhanGirKok'07}, \cite[Th.~3]{KammKhaHac'09}). Here, the vector $s_T$ is replaced by the matrix $S_T\in\C^{T\times K}$ which introduces some differences in the proof. We follow here the lines of the proof of \cite[Th.~3]{KammKhaHac'09} and stress the main differences. 

Let $\sqrt{T} S_T^{\sf H} = [ {{\bf s}}_{1}, \cdots, {{\bf s}}_{T} ]$ where ${\bf s}_t = [ s_{t,1}^*,\ldots, s_{t,K}^* ]^{\sf T}$ and let $C = [c_{ij}]\in\C^{K\times K}$ Hermitian matrix. Showing that 
\begin{align*} 
\sqrt{T} \tr C \Bigl( S_T^{\sf H} D_T S_T - 
\frac1T{\tr D_T} I_K \Bigr)\tolawlong {\cal N}\left(0, \beta \tr(C^2) + \kappa\alpha \tr[(\diag(C))^2]
\right)  
\end{align*} 
and invoking the Cram\'er-Wold device establishes the lemma. \\ 
Consider the sequence of increasing $\sigma$-fields ${\cal F}_t = \sigma( {{\bf s}}_1, \ldots, {{\bf s}}_t )$, $t=1,\ldots, T$, and denote $\E_t$ the expectation conditional to ${\cal F}_t$. Then, with $\E_0=\E$, 
\begin{align*} 
\sqrt{T} \tr C \Bigl( S_T^{\sf H} D_T S_T - 
\frac1T\tr D_T I_K \Bigr)= \sqrt{T} \sum_{t=0}^{T-1} \left( \E_{t+1} - \E_{t} \right) 
\tr C S_T^{\sf H} D_T S_T
\end{align*} 
which is a sum of martingale increments, so that the key tool for establishing Lemma~\ref{tlc:fq} is martingale CLT \cite[Th.~35.12]{Bill'95}. Writing $Z_t = ( \E_{t+1} - \E_{t} ) \tr C S_T^{\sf H} D_T S_T$, we need to show:
\begin{itemize} 
\item {\it Lyapunov's condition} : there exists $\delta > 0$ for which 
\begin{equation*}
T^{1+\delta/2} \sum_{t=0}^{T-1} \E Z_t^{2+\delta} 
\tolong 0.
\end{equation*}
\item The following convergence holds
\begin{equation*}
T \sum_{t=0}^{T-1} \E_t Z_t^2 \toprobalong \beta \tr(C^2) + \kappa\alpha \tr[(\diag(C))^2] .
\end{equation*}
\end{itemize}
Taking $\delta = 2$ and mimicking the calculus of \cite[page 5058]{KammKhaHac'09} (based on Burkholder's inequality and $\E|s_{1,1} |^8 < \infty$) gives $T^{2} \sum_{t=0}^{T-1} \E [| (\E_{t+1} - \E_t) [S_T^{\sf H} D_T S_T]_{i,j} |^4 ] \to 0$, $1\leq i,j \leq K$, which proves Lyapunov's condition. Denoting $D_T=[d_{ij}]$,
\begin{align*} 
T Z_t = d_{t+1,t+1} \tr C({\bs s}_{t+1} {\bs s}_{t+1}^{\sf H} - I_K ) + 2 \Re\Bigl(\sum_{i,j=1}^K c_{i,j} \sum_{k=1}^t s_{k,j}^* \, s_{t+1,i} \, d_{k, t+1}\Bigr) .
\end{align*}
Using the independence of the $s_{i,j}$ and the moments $\E s_{1,1} = 0$, $\E| s_{1,1}|^2 = 1$, and $\E [s_{1,1}^u (s_{1,1}^*)^v] = 0$ for $u \neq v$, we obtain
\begin{align*} 
T^2\E_t Z_t^2 = d_{t+1,t+1}^2 \Bigl( \tr C^2 + \kappa \sum_{k=1}^K c_{kk}^2 \Bigr) + 2 \sum_{i,j,n=1}^K c_{i,j} c_{n,i} \sum_{k,\ell=1}^t s_{k,j}^* \, s_{\ell,n} \, d_{k,t+1} d_{t+1,\ell}. 
\end{align*} 
Letting $\check D_T = [ d_{ij} \1_{i>j}]$, we have
\begin{align*} 
T \sum_{t=0}^{T-1} \E_t Z_t^2 = \Bigl( \tr C^2 + \kappa \sum_{k=1}^K c_{kk}^2 \Bigr) \frac{1}{T} \tr (\diag(D_T))^2  
+ \frac{2}{T} \tr C S_T^{\sf H} \check D_T^{\sf H} \check D_T S_T C. 
\end{align*} 
Using \cite[Lemma 2.7]{BaiSil'98} and \cite[Lemma 3]{KammKhaHac'09} (or \cite[P.~278]{Niko'02}), we then get 
\begin{equation*}
\frac{1}{T} \tr C S_T^{\sf H} \check D_T^{\sf H} \check D_T S_T C 
- \tr C^2 \frac 1T \tr \check D_T^{\sf H} \check D_T 
\toprobalong 
0.
\end{equation*}
We finally get the result by observing that 
\begin{equation*}
\frac 2T \tr \check D_T^{\sf H} \check D_T = 
\frac 1T \tr D_T^2 - \frac 1T \tr (\diag(D_T))^2 .
\end{equation*}

\subsection{Proof of Lemma \ref{lm-diag}} 
\label{prf-diag} 
\cite[Lemma 3.1]{ChapCouHac'12} shows that for any compact $K \subset \R - \support(\bmu)$, there exists $C > 0$ such that 
\begin{equation*}
\forall T \ \text{large}, \ \forall t \in \support(\nu_T), \
\inf_{x \in K} | 1 + c_T m_T(x) t | > C 
\end{equation*} 
and hence $\liminf_T \inf_{t \in \support(\nu_T)} | 1 + c_T m_T(\rho_{k,T}) t | > 0$.  
It results that $\limsup_T \| D_T \| < \infty$. Furthermore, since 
\begin{equation*}
\frac 1T \tr (D_T^2) = 
\int \frac{p_k^2 m_T(\rho_{k,T})^2}
{(1 + c_T m_T(\rho_{k,T}) t)^2} \nu_T(dt) 
\end{equation*}
the first convergence in the statement of Lemma~\ref{lm-diag} holds true. \\ 
As for the second convergence, recall that $R_T = [ r_{t-n} ]_{1\leq t,n\leq T}$, with $\sum_t|r_t|<\infty$, and define the Toeplitz matrix $\Gamma_T \triangleq [ \gamma_{t-n} ]_{1\leq t,n\leq T}$ where $\gamma_\ell = {\bdelta}_\ell + \bc \bm(\brho_k) r_\ell$. Observe that
$D_T = p_k m_T(\rho_{k,T}) \Gamma_T^{-1}$. 
Let $[\cdot]_T$ be the modulo-$T$ operator, and let $\widetilde\Gamma_T = [ \gamma_{[t-n]_T} ]_{1\leq t,n\leq T}$ be a circulant matrix associated with $\Gamma_T$. By \cite[Lemma~3.1]{ChapCouHac'12} again, $\liminf_T \inf_{u\in [0,1]} (1 + c_T m_T(\rho_{k,T}) | \bp(\exp(2\imath\pi u)) |^2 ) > 0$, hence $\sup_T \| \widetilde\Gamma_T \| < \infty$. It results that $T^{-1} \| \Gamma_T^{-1} - \widetilde\Gamma_T^{-1} \|_{\text{fro}}^2 \to 0$, with $\| \cdot \|_{\text{fro}}$ the Frobenius norm \cite[Th.~5.2]{Gray'06}. On the other hand, since $\widetilde\Gamma_T$ is circulant, its eigenvector matrix is the Fourier $T \times T$ matrix, so that we can show 
\begin{equation*}
\diag(\widetilde\Gamma_T^{-1}) = 
\Bigl( \frac 1T \sum_{t=0}^{T-1} 
\frac{1}{1 + c_T m_T(\rho_{k,T}) | \bp(\exp(2\imath\pi t/T)) |^2} \Bigr) I_T . 
\end{equation*} 
The lemma is obtained by combining these last two results. 

\subsection{Proof of Lemma \ref{lm:fq-totale}} 
\label{prf-totale} 
We essentially show that we can replace the $B_{k,T}$ by $\sqrt{p_k} S_{k,T}$ with $S_T = [ S_{1,T}, \ldots, S_{t,T} ]$, similar to $B_T$. Since $\theta_i \neq \theta_j$ if $i\neq j$, from the definition of the vector function ${\bf a}(\theta)$, we have $[H_T^{\sf H} H_T ]_{k,\ell} - \bdelta_{k\ell} = \ba_T(\theta_k)^{\sf H} \ba_T(\theta_\ell) - \bdelta_{k\ell} = {\mathcal O}(1/T)$. Hence, $(H_T^{\sf H} H_T)^{1/2} \triangleq I_K + E_T$ where $\| E_T \| = {\mathcal O}(1/T)$. Given any sequence $D_T$ of deterministic matrices such that $\sup_T \| D_T \| < \infty$, it can be seen by a moment derivation with respect to the law of $S_T$ that $\E | [ B_T^{\sf H} D_T B_T - P^{1/2} S_T^{\sf H} D_T S_T P^{1/2} ]_{k,\ell} | = {\mathcal O}(1/T)$ for any $k,\ell \leq K$. Hence, by Markov's inequality, $\sqrt{T} ( B_T^{\sf H} D_T B_T - P^{1/2} S_T^{\sf H} D_T S_T P^{1/2})\toprobashort 0$. Replacing $D_T$ with any of the matrices $p_k m_T(\rho_{k,T}) ( I_T + c_T m_T(\rho_{k,T}) R_T)^{-1}$, we get from Lemma~\ref{lm-diag} that $\sup_T \| D_T \| < \infty$. Therefore, the $B_{k,T}$ can be replaced with the $\sqrt{p_k} S_{k,T}$. The result is then obtained upon applying Lemmas~\ref{tlc:fq} and \ref{lm-diag} and recalling that, for $k=1,\ldots, t$, the $S_{k,T}$ are independent.

\bibliographystyle{IEEEtran}
\bibliography{Bibliography} 

\end{document}